\newtheorem{theorem}{Theorem}[section]
\newtheorem{lemma}[theorem]{Lemma}
\newtheorem{proposition}[theorem]{Proposition}
\theoremstyle{remark}
\newtheorem{remark}[theorem]{Remark}
\theoremstyle{definition}
\newtheorem{definition}[theorem]{Definition}
\theoremstyle{example}
\theoremstyle{notation}
\newtheorem{notation}[theorem]{Notation}
\begin{document}
\title{Quantum mechanics on profinite groups and partial order}            
\author{A. Vourdas}
\affiliation{Department of Computing,\\
University of Bradford, \\
Bradford BD7 1DP, United Kingdom}

\begin{abstract}
Inverse limits and profinite groups are used in a quantum mechanical context. Two cases are considered.
A quantum system with positions in the profinite group ${\mathbb Z}_p$ and momenta in the
group ${\mathbb Q}_p/{\mathbb Z}_p$; and a quantum system with positions in the profinite group ${\widehat {\mathbb Z}}$ and momenta in the
group ${\mathbb Q}/{\mathbb Z}$.
The corresponding Schwatz-Bruhat spaces of wavefunctions and
the Heisenberg-Weyl groups are discussed.
The sets of subsystems of these systems are studied from the point of view of partial order theory.
It is shown that they are directed-complete partial orders.
It is also shown that they are topological spaces with $T_0$ topologies, and this is used to define continuity of various physical quantities.
The physical meaning of 
profinite groups, non-Archimedean metrics, partial orders and $T_0$ topologies, in a quantum mechanical context, is discussed.
\end{abstract}

\maketitle

\newpage
\section*{Contents}
\begin{enumerate}
\item
Introduction
\item
Preliminaries
\begin{enumerate}
\item
Pontryagin duality
\item
Partial orders
\begin{itemize}
\item
The partially ordered sets ${\mathbb N}$, ${\mathbb N}^{(p)}$ and ${\mathbb N}(n)$
\end{itemize}
\item
The supernatural numbers as a directed-complete partially ordered set
\item
$T_0$ topological spaces
\item
$p$-adic numbers
\item
The group ${\widehat {\mathbb Z}}$
\end{enumerate}
\item
Inverse and direct limits in a quantum mechanical context
\begin{enumerate}
\item
The profinite group ${\mathbb Z}_p$ and its Pontryagin dual group ${\mathbb Q}_p/{\mathbb Z}_p$
\begin{itemize}
\item
${\mathbb Z}_p$ as inverse limit
\item
${\mathbb Q}_p/{\mathbb Z}_p$ as direct limit
\item
The chains ${\cal G}^{(p)}$ and ${\cal G}_S^{(p)}$ of Pontryagin dual pairs of groups
\end{itemize}
\item
The profinite group ${\widehat {\mathbb Z}}$ and its Pontryagin dual group ${\mathbb Q}/{\mathbb Z}$
\begin{itemize}
\item
${\widehat {\mathbb Z}}$ as inverse limit
\item
${\mathbb Q}/{\mathbb Z}$ as direct limit
\item
The directed partially ordered sets  ${\cal G}$ and ${\cal G}_S$ of Pontryagin dual pairs of groups
\end{itemize}
\end{enumerate}
\item
The system $\Sigma[{\mathbb Z}_p, ({\mathbb Q}_p/{\mathbb Z}_p)]$
\begin{enumerate}
\item
Locally constant functions and functions with compact support
\item
Integrals
\item
Fourier transforms
\item
The Schwartz-Bruhat space ${\mathfrak B}[{\mathbb Z}_p, ({\mathbb Q}_p/{\mathbb Z}_p)]$
\begin{itemize}
\item
Time evolution
\end{itemize}

\item
The Heisenberg-Weyl group ${\bf HW}[({\mathbb Q}_p/{\mathbb Z}_p),
{\mathbb Z}_p, ({\mathbb Q}_p/{\mathbb Z}_p)]$
\begin{itemize}
\item
Coherent states
\end{itemize}
\item
Parity operators
\item
The profinite Heisenberg-Weyl group 
${\bf HW}({\mathbb Z}_p, {\mathbb Z}_p, {\mathbb Z}_p)$
\end{enumerate}
\item
Subsystems of $\Sigma [{\mathbb Z}_p, ({\mathbb Q}_p/{\mathbb Z}_p)]$
\begin{enumerate}
\item
Subsystems and supersystems
\item
The subsystems $\Sigma[{\mathbb Z}(p^n),{\mathbb Z}(p^n)]$
\begin{itemize}
\item
The finite Heisenberg-Weyl group ${\bf HW}[{\mathbb Z}(p^n),{\mathbb Z}(p^n),{\mathbb Z}(p^n)]$
\end{itemize}
\item
Embeddings and their compatibility
\item
The chains $\Sigma ^{(p)}$ and $\Sigma _S^{(p)}$ as $T_0$ topological spaces
\item
Physical importance of the non-Archimedean metric and of the profinite topology
\end{enumerate}
\item
The system $\Sigma[{\widehat {\mathbb Z}}, ({\mathbb Q}/{\mathbb Z})]$
\begin{enumerate}
\item
The Schwartz-Bruhat space ${\mathfrak B}[{\widehat {\mathbb Z}}, ({\mathbb Q}/{\mathbb Z})]$
as the restricted tensor product of 
${\mathfrak B}[{\mathbb Z}_p, ({\mathbb Q}_p/{\mathbb Z}_p)]$
\item
The Heisenberg-Weyl group ${\bf HW}[({\mathbb Q}/{\mathbb Z}),
{\widehat{\mathbb Z}}, ({\mathbb Q}/{\mathbb Z})]$
as the restricted direct product of ${\bf HW}[({\mathbb Q}_p/{\mathbb Z}_p),
{\mathbb Z}_p, ({\mathbb Q}_p/{\mathbb Z}_p)]$
\begin{itemize}
\item
Coherent states
\end{itemize}
\item
Parity operators
\item
The profinite Heisenberg-Weyl group 
${\bf HW}({\widehat {\mathbb Z}},{\widehat {\mathbb Z}},{\widehat {\mathbb Z}})$
\end{enumerate}
\item
Subsystems of
$\Sigma [{\widehat {\mathbb Z}},({\mathbb Q}/{\mathbb Z})]$
\begin{enumerate}
\item
The subsystems $\Sigma [{\mathfrak Z}(n),\widetilde {\mathfrak Z}(n)]$ with $n \in {\mathbb N}_S$
\item
The subsystems $\Sigma[{\mathbb Z}(\ell),{\mathbb Z}(\ell)]$ with $\ell \in {\mathbb N}$, and their factorization
\begin{itemize}
\item
The finite Heisenberg-Weyl group ${\bf HW}[{\mathbb Z}(\ell),{\mathbb Z}(\ell),{\mathbb Z}(\ell)]$
\end{itemize}
\item
Embeddings and their compatibility
\item
The partially ordered sets $\Sigma $ and $\Sigma _S$ as $T_0$ topological spaces
\item
Physical importance of the $T_0$-topology
\item
Continuity of physical quantities in $\Sigma [{\mathbb Z}(n),{\mathbb Z}(n)]$ as a function of $n$
\item
Physical importance of the profinite topology and of the Schwartz-Bruhat space
\end{enumerate}
\item
Discussion
\end{enumerate}

\section{Introduction}

The starting point for quantum mechanics is a pair of Abelian groups $(G,\widetilde G)$ which are 
Pontryagin dual to each other, so that we can use one of them as the group of positions 
and the other as the group of momenta.  We use the notation
$\Sigma(G,\widetilde G)$ for a quantum system where the position variable takes 
values in the group $G$ and the momentum variable takes 
values in the Pontryagin group $\widetilde G$.  
Then $G\times \widetilde G$ is the phase space of this system. 

We are interested in a set of quantum systems with positions in the Abelian groups $G_n$
and momenta in their Pontryagin dual groups $\widetilde {G_n}$.
By taking the inverse limit ${\mathfrak G}$ of the groups $G_n$ \cite{PRO1,PRO2,PRO3},
and the direct limit  ${\widetilde {\mathfrak G}}$ of the corresponding Pontryagin dual groups $\widetilde {G_n}$,
we get the groups $({\mathfrak G},{\widetilde {\mathfrak G}})$ which are Pontryagin dual to each other.
In a quantum mechanical context, 
the corresponding quantum system $\Sigma ({\mathfrak G},{\widetilde {\mathfrak G}})$ is at the `edge' of the quantum systems
$\Sigma(G_n,\widetilde {G _n})$, i.e., it is the smallest system which contains
all systems in the set $\{\Sigma(G_n,\widetilde {G _n})\}$.

In this paper we are interested in the case that the $G_n$ are finite Abelian groups, and then the inverse limit 
${\mathfrak G}$ is a profinite group.
Profinite groups have been studied extensively by the mathematical community in the last few years\cite{PRO1,PRO2,PRO3}.
They are topological groups at the edge of finite groups, i.e., between 
finite and infinite groups.

In the case $G_n={\mathbb Z}(n)$ (the integers modulo $n$), we get the finite quantum systems 
$\Sigma({\mathbb Z}(n), {\mathbb Z}(n))$. They
have been studied extensively in the last few years (see reviews in refs\cite{F1,F2,F3,F4,F5} and also refs\cite{To1,To2,To3}).
We factorize $n$ in terms of powers of prime numbers as $n=\prod p_i^{e_i}$ where $p_i$ are prime numbers and $i=1,...,\ell$. Then
the quantum formalism for $\Sigma({\mathbb Z}(n), {\mathbb Z}(n))$,
can be factorized in terms of the quantum formalisms for $\Sigma({\mathbb Z}(p_i^{e_i}), {\mathbb Z}(p_i^{e_i}))$ \cite{facto1,facto2}.
This factorization is based on the Chinese remainder theorem.
In this sense the quantum systems $\Sigma({\mathbb Z}(p_i^{e_i}), {\mathbb Z}(p_i^{e_i}))$ are the buildings blocks of all the
finite quantum systems $\Sigma({\mathbb Z}(n), {\mathbb Z}(n))$.
We note here that there is also a lot of work on quantum systems 
$\Sigma(GF(p^n),GF(p^n))$
with variables in the Galois fields $GF(p^n)$ (reviewed in \cite{Galois}), but this is not relevant in the present paper.

We first consider the systems $\Sigma({\mathbb Z}(p^e), {\mathbb Z}(p^e))$ (with fixed prime number $p$).
The inverse limit of ${\mathbb Z}(p^e)$ is the profinite group ${\mathbb Z}_p$ ($p$-adic integers)
and the direct limit of their Pontryagin dual groups which are also
${\mathbb Z}(p^e)$, is ${\mathbb Q}_p/{\mathbb Z}_p$ (where ${\mathbb Q}_p$ is the field of $p$-adic numbers). 
Therefore in this case
we get the system $\Sigma[{\mathbb Z}_p, ({\mathbb Q}_p/{\mathbb Z}_p)]$.

We also consider the systems $\Sigma({\mathbb Z}(n), {\mathbb Z}(n))$.
The inverse limit of ${\mathbb Z}(n)$ is the profinite group ${\widehat {\mathbb Z}}$ (defined later)
and the direct limit of their Pontryagin dual groups which are also
${\mathbb Z}(n)$, is ${\mathbb Q}/{\mathbb Z}$ (rational numbers on a circle). 
Therefore in this case
we get the system $\Sigma[{\widehat {\mathbb Z}}, ({\mathbb Q}/{\mathbb Z})]$.

General references on p-adic numbers are \cite{NP1,NP2} and on other related number fields \cite{N1,N2,N3,N4}.
Quantum mechanics and quantum field theory on p-adic numbers 
have been studied in refs \cite{r1,r2,r3,r4,r5,r6,r7,r8,r9,r10,r11,r12,r13,r14,r15,r16} and they have been applied to
the physics at the Planck scale, condensed matter\cite{ap1,ap2,ap3,ap4}, etc.
Applications of $p$-adic numbers to (classical) computer science have been discussed in \cite {comp}.
Wavelets on p-adic numbers have been studied in \cite{wav1,wav2,wav3,wav4}.
Related mathematical work (e.g., Fourier transforms of functions on $p$-adic numbers, generalized functions, etc) has been presented in 
\cite{b1,b2,b3,b4,b5,b6,b7}.
We also mention harmonic analysis on adeles \cite{AD1,AD2,AD3,AD4}; harmonic analysis on number fields in Tate's thesis (see \cite{N1}); and the 
Langlands programme \cite {L1} which studies deep connections between automorphic forms and other areas.
In the present paper, we discuss some of these ideas in the context of quantum mechanics,
taking into account recent developments by the mathematical community on profinite groups.

We also make a link between the above ideas and partial order theory \cite{Bi,Sz,Gr}, which has not been explored in the literature.
We define the partial order `subsystem'
in the set of the quantum systems $\{\Sigma (G_n , \widetilde G_n)\}$.
We say that $\Sigma (E,\widetilde E)$ is a subsystem of $\Sigma (G,\widetilde G)$ if 
$\widetilde E$ is a subgroup of $\widetilde G$ (the relationship between their Pontryagin dual groups $E$ and $G$ is discussed below).
The system $\Sigma ({\mathfrak G},{\widetilde {\mathfrak G}})$ is a maximum element which when added to this set,
it makes it a directed-complete partially ordered set.
The concept of directed-complete partially ordered set, is used in domain theory which plays an important role 
in theoretical computer science \cite{D1,D2,D3}.
The partial order theory approach, makes precise our statement earlier that the system $\Sigma ({\mathfrak G},{\widetilde {\mathfrak G}})$ is
at the edge of the quantum systems $\{\Sigma(G_n,\widetilde {G _n})\}$.
We mention here that 
lattices have been used to describe quantum logic \cite{LO1,LO2,LO3,LO4}.

It is known for a long time\cite {T1,T2} that there is a strong link between partial order theory and topology (e.g., \cite{TO1,TO2,TO3,TO4,TO5}).
We make this link clear in our context \cite{vourdas}, by making the sets $\{\Sigma(G_n,\widetilde {G _n})\}$
topological spaces with a $T_0$ topology.
From a physical point of view, this topology can be used to define continuity of several physical quantities in $\Sigma({\mathbb Z}(n), {\mathbb Z}(n))$, 
as a function of the dimension of the system $n$.
Intuitively, we expect a physical quantity (e.g., entropy) as defined in the systems $\{\Sigma(G_n,\widetilde {G _n})\}$
to be a continuous function of $n$.
But this requires appropriate topology, and the $T_0$-topology does this.

The work applies to quantum mechanics three different but linked areas of mathematics:
\begin{itemize}
\item
Profinite groups (and $p$-adic groups). They are topological groups with Hausdorff, totally disconnected and compact topology. 
Therefore  we can define locally constant functions with compact support on them, and also Haar measure for integrals.
\item
Partial order theory applied to sets of quantum systems $\{\Sigma(G_n,\widetilde {G _n})\}$. 
With it, smaller systems are embedded into larger ones and the $\Sigma ({\mathfrak G},{\widetilde {\mathfrak G}})$ is at the edge of
$\{\Sigma(G_n,\widetilde {G _n})\}$ (i.e., it is the smallest quantum system that contains all the $\Sigma(G_n,\widetilde {G _n})$).
\item
$T_0$-topologies applied to sets of quantum systems $\{\Sigma(G_n,\widetilde {G _n})\}$. These sets are topological spaces with
the systems $\Sigma (G_n,{\widetilde G}_n)$ as `points'.
Open sets contain some systems and their subsystems, and
closed sets contain some systems and their supersystems. 
The separation axioms of the $T_0$-topology, reflect very basic logical relations between the quantum systems.
The $T_0$-topology can be used to define continuity of a physical quantity in the various $\Sigma(G_n,\widetilde {G _n})$, as a function of $n$.
We note that the $T_0$-topology on the set $\{\Sigma(G_n,\widetilde {G _n})\}$, should not be confused with the topologies of the profinite groups which are Hausdorff topologies.
\end{itemize}
The emphasis is on linking these ideas to each other, in a quantum mechanical context.

In section 2 we present briefly background material in order to establish the notation.
This includes partial order theory, $T_0$ topologies, $p$-adic numbers, etc.
In section 3 we present the concepts of inverse and direct limits with emphasis on their physical importance in a quantum mechanical context.
These important mathematical concepts have not been used widely in physics, and one of the aims of this article is to 
show how they can be used in a physical context.

In section 4, we study the quantum system $\Sigma[{\mathbb Z}_p, ({\mathbb Q}_p/{\mathbb Z}_p)]$.
We define the Schwartz-Bruhat space of functions for this system, and we study the corresponding Heisenberg-Weyl group
${\bf HW}[({\mathbb Q}_p/{\mathbb Z}_p),
{\mathbb Z}_p, ({\mathbb Q}_p/{\mathbb Z}_p)]$.
We also give a series of properties of the displacement and parity operators in the phase space of this system.
In section 5, we study the set of subsystems of $\Sigma[{\mathbb Z}_p, ({\mathbb Q}_p/{\mathbb Z}_p)]$
as a partially ordered set and also as a $T_0$-topological space.
 
In section 6, we study the quantum system $\Sigma[{\widehat {\mathbb Z}}, ({\mathbb Q}/{\mathbb Z})]$.
We define the Schwartz-Bruhat space, and present a phase space formalism which includes the Heisenberg-Weyl group ${\bf HW}[({\mathbb Q}/{\mathbb Z}),
{\widehat{\mathbb Z}}, ({\mathbb Q}/{\mathbb Z})]$ and several
properties of the displacement and parity operators.
In section 7, we study the set of subsystems of $\Sigma[{\widehat {\mathbb Z}}, ({\mathbb Q}/{\mathbb Z})]$
as a partially ordered set. We also study this set as a $T_0$-topological space and show that
the topology reflects fundamental logical relations between the subsystems. 
With this topology, a physical quantity which is defined in the various systems $\Sigma(G_n,\widetilde {G _n})$, 
is shown to be a continuous function of $n$.
We conclude in section 8 with a discussion of our results.

\section{Preliminaries}\label{A}

${\mathbb R}$ denotes the real numbers;
${\mathbb Q}$ the rational numbers;
${\mathbb Z}$ the integers.
${\mathbb Z}^+$ are the positive integers, ${\mathbb Z}_0^+$ are the non-negative integers, and
${\mathbb N}={\mathbb Z}^+-\{1\}$.
$a|b$ where $a,b\in {\mathbb Z}^+$, denotes that $a$ is a divisor of $b$. 

$\Pi$ is the set of prime numbers. 
If $n$ is factorized in terms of powers of prime numbers as
\begin{eqnarray}\label{factorize10}
n=p_1^{e_1}...p_\ell^{e_\ell};\;\;\;\;n\in {\mathbb Z}^+,
\end{eqnarray}
we will use the notation
\begin{eqnarray}
\Pi(n)=\{p_1,...,p_\ell\};\;\;\;\;E(n)=\{e_1,...,e_\ell\}.
\end{eqnarray}

${\mathbb Z}(n)$ is the ring of integers
modulo $n$. 
We will use the notation
\begin{eqnarray}
\omega _n(\alpha)=\exp \left(\frac{i2\pi \alpha}{n}\right );\;\;\;\;\;\;\alpha \in {\mathbb Z}(n)
\end{eqnarray}
for the characters of ${\mathbb Z}(n)$ (the roots of unity).
Then
\begin{eqnarray}
\frac{1}{n}\sum _{\alpha \in {\mathbb Z}(n)}\omega _n(\alpha \beta)=\delta _n(\beta,0);\;\;\;\;\;\;
\beta \in {\mathbb Z}(n)
\end{eqnarray}
where $\delta _n(\beta,0)$ is Kronecker's delta.

$C(n)$ is the multiplicative group
\begin{eqnarray}
C(n)=\{\omega _{n}(\alpha)|\alpha \in {\mathbb Z}(n)\}\cong {\mathbb Z}(n).
\end{eqnarray}
The notation $G_1\le G_2$ for two groups, means that $G_1$ is a subgroup of $G_2$.
If $m|n$, then ${\mathbb Z}(m)\le {\mathbb Z}(n)$ (and $C(m)\le C(n)$).
If $n$ is factorized as in Eq.(\ref{factorize10}), then
\begin{eqnarray}
{\mathbb Z}(n)\cong {\mathbb Z}(p_1^{e_1})\times ...\times {\mathbb Z}(p_\ell^{e_\ell});\;\;\;\;
C(n)\cong C(p_1^{e_1})\times ...\times C(p_\ell^{e_\ell})
\end{eqnarray} 

\subsection{Pontryagin duality}\label{Pont}

Let $G$ be an Abelian group and $\widetilde G$ its Pontryagin dual group (i.e. the group of its characters).
We consider a subgroup $E$ of $G$ and study briefly the relation between their Pontryagin dual groups
$\widetilde E$ and $\widetilde G$.
The annihilator ${\rm Ann} _{\widetilde G}(E)$ of $E$ is a subgroup of $\widetilde G$, such that for all $a\in E$ and all $b\in {\rm Ann} _{\widetilde G}(E)$,
the characters $\chi _b(a)=1$. It is easily seen that
\begin{eqnarray}
F\le E\le G\;\;\rightarrow\;\;{\rm Ann} _{\widetilde G}(F)\ge {\rm Ann} _{\widetilde G}(E).
\end{eqnarray}

The theory of Pontryagin duality (e.g., \cite{Po}) proves that
the Pontryagin dual group of $E$ is isomorphic to $\widetilde G/{\rm Ann} _{\widetilde G}(E)$; that 
the Pontryagin dual group of $G/E$ is isomorphic to ${\rm Ann} _{\widetilde G}(E)$; and that the annihilator
${\rm Ann} _{G}({\rm Ann} _{\widetilde G}(E))$ is isomorphic to $E$:
\begin{eqnarray}\label{ann}
\widetilde E\cong \widetilde G/{\rm Ann} _{\widetilde G}(E);\;\;\;\;
\widetilde{G/E}\cong {\rm Ann} _{\widetilde G}(E);\;\;\;\;
{\rm Ann} _{G}({\rm Ann} _{\widetilde G}(E))\cong E
\end{eqnarray}
We next consider a subgroup $F$ of $E$. Then
\begin{eqnarray}\label{47}
\widetilde F\cong \widetilde G/{\rm Ann} _{\widetilde G}(F)\cong 
\frac{\widetilde G/{\rm Ann} _{\widetilde G}(E)}{{\rm Ann} _{\widetilde G}(F)/{\rm Ann} _{\widetilde G}(E)}
\end{eqnarray}
Therefore
\begin{eqnarray}\label{4}
F\le E\le G\;\;\rightarrow\;\;
\widetilde F\cong 
\frac{\widetilde E}{{\rm Ann} _{\widetilde G}(F)/{\rm Ann} _{\widetilde G}(E)}
\end{eqnarray}
It is seen that the Pontryagin dual concept to `subgroup' is `quotient'.
Specific examples of these ideas, are discussed later.

\subsection{Partial orders}
A directed partially ordered set is a set ${\mathbb A}$ with a relation $\prec$ such that:
\begin{itemize}
\item
$a\prec a$, for all $a\in {\mathbb A}$; 
\item
if $a\prec b$ and $b\prec a$, then $a=b$;
\item
if $a\prec b$ and $b\prec c$, then $a\prec c$;
\item
for $a,b\in {\mathbb A}$ there exists $c\in {\mathbb A}$ such that $a\prec c$ and $b\prec c$.
\end{itemize}
We say that $a,b$ are comparable if $a\prec b$ or $b\prec a$.
A partially ordered set where any pair of elements is comparable, is a chain (total order).
A partially ordered set where no pair of elements is comparable, is an antichain.

Two partially ordered sets $({\mathbb A},\prec )$ and $({\mathbb B},\prec ')$ are order isomorphic,
if there is a bijective map $f$ from ${\mathbb A}$ to ${\mathbb B}$, and 
$f(a_1)\prec ' f(a_2)$, if and only if $a_1\prec  a_2$ (where $a_1,a_2\in {\mathbb A}$).
Below, for simplicity we use the same symbol $\prec$ for different partial orders.

An upper bound of a subset ${\mathbb B}$ of the partially ordered set ${\mathbb A}$, 
is an element $a\in {\mathbb A}$ such that $b\prec a$ for all $b\in {\mathbb B}$. 
If the set of all upper bounds of ${\mathbb B}$ has a smallest element, it is called the supremum 
(or least upper bound) of ${\mathbb B}$.

An element $m\in {\mathbb A}$ is called maximal, if there is no element $k \in {\mathbb A}$ such that $m\prec k$.
A partially ordered set might have many maximal elements.
A finite partially ordered set has at least one maximal element.
An infinite partially ordered set might have no maximal elements.
A different concept is the maximum element.
An element $t\in {\mathbb A}$ is called maximum element, if $k\prec t$ for all $k \in {\mathbb A}$.

\paragraph*{Directed-complete partial orders:}
We give the following proposition without proof\cite{D1,D2,D3}:
\begin{proposition}
Let ${\mathbb A}$ be a partially ordered set.
The following two statements are equivalent:
\begin{itemize}
\item[(1)]
Every directed subset of ${\mathbb A}$ has a supremum.
\item[(2)]
Every chain in ${\mathbb A}$ has a supremum. 
\end{itemize}
If they hold, then the set ${\mathbb A}$ is called directed-complete partial order (dcpo).
\end{proposition}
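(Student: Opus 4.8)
The plan is to prove the two implications separately, with essentially all of the content in the direction $(2)\Rightarrow(1)$. For $(1)\Rightarrow(2)$ I would simply observe that every nonempty chain is already a directed set: if $a,b$ lie in a chain they are comparable, say $a\prec b$, and then $b$ is itself an element of the chain lying above both. Hence a supremum supplied by $(1)$ for directed sets in particular exists for chains, and $(1)\Rightarrow(2)$ is immediate once one fixes a convention for the empty subset (equivalently, whether $\mathbb{A}$ is required to have a least element). The substance is therefore to assume that every chain has a supremum and produce $\sup D$ for an arbitrary directed $D\subseteq\mathbb{A}$.

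The tempting shortcut here is to find a cofinal chain inside $D$ and take its supremum, but this fails in general: the directed set of finite subsets of an uncountable set, ordered by inclusion, has no cofinal chain (a chain of finite sets is injectively sized by cardinality, hence countable), so a genuine transfinite argument is unavoidable and the use of the axiom of choice is essential. I would argue by transfinite induction on the cardinality $\kappa=|D|$. The base cases are handled directly: if $D$ is finite, directedness forces a greatest element (induct on $|D|$, replacing any two elements by an upper bound in $D$), which is the supremum; if $D$ is countable, enumerate $D=\{d_0,d_1,\dots\}$ and build a chain $c_0=d_0$, $c_{n+1}=$ an upper bound in $D$ of $c_n$ and $d_{n+1}$, which exists by directedness. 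This chain is cofinal and countable, so by $(2)$ its supremum exists and is readily checked to be $\sup D$.

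The inductive step is the heart of the proof. I would fix once and for all a choice function $c\colon D\times D\to D$ with $c(a,b)$ an upper bound of $a$ and $b$, and for $S\subseteq D$ let $\mathrm{cl}(S)$ denote the smallest superset of $S$ closed under $c$. Then $\mathrm{cl}(S)$ is directed, $\mathrm{cl}$ is monotone in $S$, and $|\mathrm{cl}(S)|\le\max(|S|,\aleph_0)$. Well-ordering $D$ as $\{x_\alpha:\alpha<\kappa\}$ and setting $D_\alpha=\mathrm{cl}(\{x_\beta:\beta<\alpha\})$, for uncountable $\kappa$ each $D_\alpha$ is directed with $|D_\alpha|<\kappa$, the $D_\alpha$ increase with $\alpha$, and $\bigcup_{\alpha<\kappa}D_\alpha=D$. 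By the induction hypothesis each $s_\alpha=\sup D_\alpha$ exists, and $D_\alpha\subseteq D_{\alpha'}$ gives $s_\alpha\prec s_{\alpha'}$, so $\{s_\alpha\}$ is a chain and by $(2)$ has a supremum $s$. I would finish by verifying $s=\sup D$: it is an upper bound because every $d\in D$ lies in some $D_\alpha$, whence $d\prec s_\alpha\prec s$, and it is least because any upper bound of $D$ bounds each $D_\alpha$, hence each $s_\alpha$, hence $s$.

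The main obstacle is the cardinality bookkeeping in the decomposition $D=\bigcup_\alpha D_\alpha$. The estimate $|D_\alpha|<\kappa$ breaks down precisely at $\kappa=\aleph_0$, since the directed closure of a finite set is already countable; this is exactly why the countable case must be dispatched separately as a base case rather than folded into the induction. The second delicate point, which I would be careful to respect, is that using a \emph{fixed} choice function makes $\mathrm{cl}$ monotone, so that the $D_\alpha$ genuinely form an increasing family (and $D_\alpha=\bigcup_{\beta<\alpha}D_\beta$ at limits), with $|D_\alpha|=\max(|\alpha|,\aleph_0)<\kappa$ for every $\alpha<\kappa$; this uniform bound is what allows the argument to pass through singular $\kappa$ without any additional case analysis.
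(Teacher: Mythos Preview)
Your proof is correct and follows the standard route (essentially Iwamura's lemma / the Markowsky argument): write the directed set as an increasing union of strictly smaller directed subsets via closure under a fixed choice function, apply the inductive hypothesis to each piece, and take the supremum of the resulting chain of suprema. The care you take with the countable base case and with the monotonicity of the closure operator is exactly what is needed.

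However, there is nothing to compare against: the paper states this proposition \emph{without proof}, citing the domain-theory literature (Scott; Abramsky--Jung; Gierz et al.). So your write-up is not an alternative to the paper's argument but rather a self-contained proof where the paper gives none. If you want to match the paper's level of detail you could simply cite the same references; if you want to include a proof, yours is a perfectly good one.
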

From Zorn's lemma follows that a directed-complete partially ordered set has at least one maximal element. 
A chain with a supremum is a complete chain.

\paragraph*{Finite partially ordered sets:}
The width (length) of a finite partially ordered set ${\mathbb A}$ is the cardinality of the largest antichain (chain) in 
${\mathbb A}$. Dilworth's theorem \cite{DIL} states that if the width of ${\mathbb A}$ is $n$, then it can be partitioned into $n$ chains.

\subsubsection{The partially ordered sets ${\mathbb N}$, ${\mathbb N}^{(p)}$ and ${\mathbb N}(n)$}\label{refe}
Let
\begin{eqnarray}
{\mathbb N}=\{2,3,4,...\};\;\;\;\;{\mathbb N}^{(p)}=\{p,p^2,p^3,...\}\subset {\mathbb N}
\end{eqnarray}
where $p$ is a prime number.
${\mathbb N}$ with divisibility as an order (i.e., $a\prec b$ if $a|b$)
is a directed partially ordered set, and
${\mathbb N}^{(p)}$ is a chain.
${\mathbb N}$ is not a directed-complete partial order.
For example the chain $p,p^2,p^3,...$ where $p\in \Pi$, has no supremum.
${\mathbb N}$ has no maximal elements. Below we enlarge this set in order to make it
a directed-complete partial order.

We also consider the following subset of ${\mathbb N}$:
\begin{eqnarray}
{\mathbb N}(n)=\{k\in {\mathbb N}\;|\;k|n\}
\end{eqnarray}
The cardinality of ${\mathbb N}(n)$ is $\sigma _0(n)-1$ (the divisor function minus $1$ because we have excluded $1$).
We factorize $n$ as in Eq.(\ref{factorize10}) and see easily that the partially ordered set $N(n)$ 
has width equal to $\ell$.
Then according to Dilworth's theorem, we can partition ${\mathbb N}(n)$ into $\ell$ chains. One such partition is:
\begin{eqnarray}\label{901}
&&{\mathbb M}_1(n)=\{p_1^{r_1}\;|\;1\le r_1\le e_1\}\nonumber\\
&&{\mathbb M}_2(n)=\{p_1^{r_1}p_2^{r_2}\;|\;0\le r_1\le e_1;\;\;1\le r_2\le e_2\}\nonumber\\
&&....\nonumber\\
&&{\mathbb M}_\ell (n)=\{p_1^{r_1},p_2^{r_2}...p_\ell ^{r_\ell}\;|\;0\le r_1\le e_1;... \;\;0\le r_{\ell-1}\le e_{\ell -1};\;\;1\le r_\ell \le e_\ell\}
\end{eqnarray}
We assume that the labelling is such that $\max (e_1,...,e_\ell)=e_\ell$. 
Then ${\mathbb M}_\ell (n)$ is the largest chain and the length of ${\mathbb N}(n)$ is $(e_1+1)...(e_{\ell-1}+1)e_\ell$.

\begin{remark}
Later we will consider quantum systems $\Sigma({\mathbb Z}(n), {\mathbb Z}(n))$ with $n\in {\mathbb N}$.
We have excluded $1$ from ${\mathbb N}$, because the physical meaning of the one-dimensional system 
$\Sigma({\mathbb Z}(1), {\mathbb Z}(1))$ is limited. 
From a mathematical point of view, the implication of the exclusion of $1$, is that
${\mathbb N}$ is not a lattice (the `meet' operation is not defined).
\end{remark}

\subsection{The supernatural numbers as a directed-complete partially ordered set:}

The set ${\mathbb N}_S$ of supernatural (Steinitz) numbers \cite{PRO1,PRO2} is:
\begin{eqnarray}
{\mathbb N}_S=\left \{n=\prod p^{e_p(n)}\;|\; p\in \Pi;\;\;\;\;
e_p\in {\mathbb Z}_0^+\cup \{\infty\}\right \}
\end{eqnarray}
The index $S$ indicates supernatural or Steinitz.
Here the exponents can take the value $\infty$, and the product might contain an infinite number of
prime numbers. 
If there is no danger of confusion, we will use the simpler notation $e_p$ for the exponents. 
In this set only
multiplication is well defined, and by definition
\begin{eqnarray}
p^\infty p^e=p^\infty;\;\;\;\;
e\in {\mathbb Z}_0^+\cup \{\infty\}.
\end{eqnarray}
For the reasons we explained earlier, we exclude $1$ from ${\mathbb N}_S$, i.e., we require that at least one of the exponents is 
different from zero. 

In the special case that all $e_p\ne \infty$ and only a finite number of them are different from zero, 
the $\prod p^{e_p}\in {\mathbb N}$ (i.e., ${\mathbb N}$ is a subset of ${\mathbb N}_S$).

\begin{notation}
Let $(e_p)$ (where $e_p\in {\mathbb Z}_0^+\cup \{\infty\}$) be an infinite sequence of exponents labelled by $p\in \Pi$.
The $(e_p)\prec (e_p')$ means that
$e_p\le e_p'$ for all $p$. 
By definition all numbers in ${\mathbb Z}_0^+$ are smaller than $\infty$.
\end{notation}
If $(e_p)\prec (e_p')$ then we say that $n=\prod p^{e_p}$ is a divisor of $n'=\prod p^{e_p'}$
and we denote it as $n|n'$ or as $n\prec n'$.
An element of ${\mathbb N}_S$, corresponding to the sequence where all $e_p=\infty$, is
\begin{eqnarray}
\Omega=\prod _{p\in \Pi} p^{\infty}
\end{eqnarray}
This is the maximum element in ${\mathbb N}_S$ (every element of ${\mathbb N}_S$ is a divisor of $\Omega$).
Let $\Pi_1$ be a subset (finite or infinite) of $\Pi$. We define the 
\begin{eqnarray}
\Omega (\Pi_1)=\prod  _{p\in \Pi _1}p^{\infty};\;\;\;\;\;\Omega(\Pi _1)|\Omega
\end{eqnarray}

\begin{notation}
Let $n=\prod p^{e_p(n)}$ (where $e_p(n)\in {\mathbb Z}_0^+\cup \{\infty\}$) be a supernatural number. 
We use the notation 
\begin{itemize}
\item
$\Pi ^{(\infty )}(n)$ for the set of prime numbers for which $e_p(n)=\infty$. 
\item
$\Pi ^{(\rm fin )}(n)$ for the set of prime numbers for which $0<e_p (n)<\infty$.
\item
$\Pi ^{(0)}(n)=\Pi-\Pi ^{(\infty )}(n)-\Pi ^{(\rm fin )}(n)$.
\end{itemize}
Then
\begin{eqnarray}\label{super}
n=\prod  _{p\in \Pi ^{(\infty )}(n)}p^{\infty}\prod _{p\in \Pi ^{(\rm fin )}(n)}p^{e_p(n)}=
\Omega \left (\Pi ^{(\infty )}(n)\right )\prod _{p\in \Pi ^{(\rm fin )}(n)}p^{e_p(n)}
\end{eqnarray}
If $n\in {\mathbb N}$ then $\Pi ^{(\infty )}(n)$ is the empty set and $\Pi ^{(\rm fin )}(n)=\Pi (n)$.
\end{notation}
The set ${\mathbb N}_S$ (ordered by divisibility) is a directed-complete partial order, with $\Omega$ as maximum element.
An example of a complete chain in ${\mathbb N}_S$ is
\begin{eqnarray}
{\mathbb N}_S^{(p)}=\{p, p^2, ...,p^\infty\}={\mathbb N}^{(p)}\cup \{p^\infty\}
\end{eqnarray}
Here the supremum is $p^\infty$.
Other examples of chains in ${\mathbb N}_S$ are
\begin{eqnarray}
&&p_1\prec p_1^2\prec ...\prec p_1^\infty\prec p_1^\infty p_2\prec p_1^\infty p_2^2\prec...\prec p_1^\infty p_2^\infty;\;\;\;\;
p_1,p_2\in \Pi\nonumber\\
&&p_1p_2\prec (p_1p_2)^2\prec ...\prec p_1^\infty p_2^\infty\nonumber\\
&&2^\infty\prec 2^\infty 3^\infty\prec 2 ^\infty 3^\infty 5^\infty \prec ...\prec \Omega
\end{eqnarray}
In the first and second chain the supremum is $p_1^\infty p_2^\infty$ and in the last chain the 
supremum is $\Omega$.

\subsection{$T_0$ topological spaces}\label{topology}

\begin{definition}
A topological space $(X,T_X)$ is a set $X$ and a collection $T_X$ of subsets of $X$ (called open sets), such that
\begin{itemize}
\item[(1)]
any finite intersection of elements in $T_X$ is also in $T_X$
\item[(2)]
any union of elements in $T_X$ is also in $T_X$
\item[(3)]
$X$ and $\emptyset$ are elements of $T_X$.
\end{itemize}
The elements of $X$ are called points.
If $U$ is an open set in $X$ then $X-U$ is a closed set.
A basis $B_X$ is a subset of $T_X$ such that every open set in $T_X$ can be written as a union of open sets in $B_X$.
\end{definition}
If $Y\subset X$, the topology on $X$ induces the subspace topology on $Y$ as follows:
\begin{eqnarray}
T_Y=\{Y\cap U\;|\;U\in T_X\}.
\end{eqnarray}

\begin{definition}
The topological space $({\mathbb N}_S, {T}_{{\mathbb N}_S})$
with the `divisor topology' ${T}_{{\mathbb N}_S}$ is generated by the base
\begin{eqnarray}
B_{{\mathbb N}_S}=\{\emptyset, {U}(n)\;|\;n\in {\mathbb N}_S\};\;\;\;\;\;
U(n)=\{m\in {{\mathbb N}_S}\;|\;m|n\};\;\;\;\;n=2,3,...
\end{eqnarray}
\end{definition}
Here the $\emptyset, {U}(n)$ and all their unions are the open sets in this topological space.
We note that in the general definition of a topological space we require that the 
intersection (resp., union) of a finite (resp., any) number of open sets is an open set.
In our case the restriction to a finite number of open sets is not needed, because for each point $n$ the ${U}(n)$
is the smallest open set which contains $n$. 
Therefore open and closed sets satisfy exactly the same conditions. Such topology is known as Alexandrov topology.

The $({\mathbb N}_S, {T}_{{\mathbb N}_S})$ induces subspace topologies on its subsets 
${\mathbb N}$, ${\mathbb N}_S^{(p)}$ and ${\mathbb N}^{(p)}$. For ${\mathbb N}$ this is generated by the base
\begin{eqnarray}
B_{{\mathbb N}}=\{\emptyset, {U}(n)\;|\;n=2,3,...\};\;\;\;\;\;
U(n)=\{m\in {{\mathbb N}}\;|\;m|n\};\;\;\;\;n=2,3,...
\end{eqnarray}
For ${\mathbb N}_S^{(p)}$ this is generated by the base
\begin{eqnarray}
B_{{\mathbb N}_S^{(p)}}=\{\emptyset, {U}(p^n)\;|\;n\in {\mathbb Z}^+\cup \{\infty\}\};\;\;\;\;\;
U(p^n)=\{p^m\in {{\mathbb N}_S^{(p)}}\;|\;m\le n\};\;\;\;\;n\in {\mathbb Z}^+\cup \{\infty\}
\end{eqnarray}

For ${\mathbb N}^{(p)}$ this is generated by the base
\begin{eqnarray}
B_{{\mathbb N}^{(p)}}=\{\emptyset, {U}(p^n)\;|\;n=1,2,...\};\;\;\;\;\;
U(p^n)=\{p^m\in {{\mathbb N}^{(p)}}\;|\;m\le n\};\;\;\;\;n=1,2,...
\end{eqnarray}

Below we give the separation axioms relevant to the present work
\begin{definition}
Let $({\mathbb A}, {T}_{\mathbb A})$ be a topological space.
\begin{itemize}
\item[(1)]
It is a $T_0$-space (Kolmogorov),
if for all pairs of distinct points $a,b$ there exist an open set $U\in {T}_{\mathbb A}$ such that
either $a\in U$ and $b\notin U$ or $a\notin U$ and $b\in U$.
\item[(2)]
It is a $T_1$-space (Frechet),
if for all pairs of distinct points $a,b$ there exist two open sets $U_1,U_2\in {T}_{\mathbb A}$ such that
\begin{eqnarray}
a\in U_1;\;\;\;\;b\notin U_1;\;\;\;\;a\notin U_2;\;\;\;\;b\in U_2.
\end{eqnarray}
\item[(3)]
It is a $T_2$-space (Hausdorff),
if for all pairs of distinct points $a,b$ there exist two open sets $U_1,U_2\in {T}_{\mathbb A}$ such that
\begin{eqnarray}
a\in U_1;\;\;\;\;b\notin U_1;\;\;\;\;a\notin U_2;\;\;\;\;b\in U_2;\;\;\;\;U_1\cap U_2=\emptyset
\end{eqnarray}
\end{itemize}
\end{definition}
Clearly if a topological space is $T_2$ then it is also $T_1$, and if it is $T_1$ it is also $T_0$.
\begin{proposition}\label{ttt}
\begin{itemize}
\mbox{}
\item[(1)]
The topological spaces $({\mathbb N}_S, {T}_{{\mathbb N}_S})$, $({\mathbb N}, {T}_{{\mathbb N}})$, $({\mathbb N}_S^{(p)}, {T}_{{\mathbb N}_S^{(p)}})$ and 
$({\mathbb N}^{(p)}, {T}_{{\mathbb N}^{(p)}})$, are $T_0$ spaces, but not $T_1$ spaces.
\item[(2)]
The topological spaces $({\mathbb N}, {T}_{{\mathbb N}})$ and 
$({\mathbb N}^{(p)}, {T}_{{\mathbb N}^{(p)}})$ are locally compact, but not compact.
The topological spaces $({\mathbb N}_S, {T}_{{\mathbb N}_S})$ and $({\mathbb N}_S^{(p)}, {T}_{{\mathbb N}_S^{(p)}})$ are compact.
\end{itemize}
\end{proposition}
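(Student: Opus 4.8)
The plan rests on one structural observation already implicit in the definitions: each of these topologies is Alexandrov, and for every point $n$ the basic open set $U(n)$ is the \emph{smallest} open set containing $n$ (any open set containing $n$ is a union of basic sets, one of which must contain $n$ and hence contain $U(n)$). Thus the separation and covering questions reduce entirely to containment of minimal neighbourhoods, i.e. to the divisibility order. I would treat all four spaces uniformly, since they differ only in which points and which sets $U(n)$ are present.

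For the $T_0$ claim, take distinct points $a,b$. Because divisibility is antisymmetric, we cannot have both $a\mid b$ and $b\mid a$, so at least one of $a\notin U(b)$ or $b\notin U(a)$ holds; the corresponding basic open set then contains exactly one of the two points, which is precisely the $T_0$ condition. To see that none of the spaces is $T_1$, I would exhibit a single bad pair, e.g.\ $a=p$ and $b=p^2$ (both lie in every one of the four spaces). Here $U(p^2)=\{p,p^2\}$ contains $p$, and since $U(p^2)$ is the smallest open set containing $p^2$, \emph{every} open set containing $p^2$ also contains $p$. Hence there is no open set isolating $p^2$ from $p$, so the $T_1$ axiom fails.

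For Part (2) I would separate the finite-neighbourhood spaces from the ones with a maximum element. In $({\mathbb N},T_{\mathbb N})$ and $({\mathbb N}^{(p)},T_{{\mathbb N}^{(p)}})$ each $U(n)$ is a \emph{finite} set (the divisors of $n$), hence compact, and it is an open neighbourhood of $n$; this gives local compactness at once. Non-compactness follows by exhibiting an open cover with no finite subcover: the family $\{U(n)\}$ covers the space (since $n\in U(n)$), but any finite subfamily is a finite union of finite sets and therefore cannot cover an infinite space. By contrast, $({\mathbb N}_S,T_{{\mathbb N}_S})$ and $({\mathbb N}_S^{(p)},T_{{\mathbb N}_S^{(p)}})$ each possess a maximum element ($\Omega$ and $p^\infty$ respectively), and I would exploit that $U(\Omega)={\mathbb N}_S$ and $U(p^\infty)={\mathbb N}_S^{(p)}$, i.e.\ the whole space is itself a basic open set. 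Then in any open cover some member must contain the top element, and being open it must contain that element's minimal neighbourhood, which is the whole space; that single member is already a finite subcover, so the space is compact.

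The arguments are short once the minimal-neighbourhood observation is in place, so I do not anticipate a genuine obstacle. The only point requiring care is the justification that $U(n)$ is the smallest open neighbourhood of $n$ (the Alexandrov property), since every separation and covering statement is then read off directly from the divisibility relation rather than proved by hand.
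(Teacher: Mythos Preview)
Your proposal is correct and follows essentially the same approach as the paper: both arguments reduce everything to properties of the basic sets $U(n)$ and the divisibility order, use $U(m)$ to separate points for $T_0$, use a pair $m\mid n$ to witness failure of $T_1$, use the cover $\{U(n)\}$ to show non-compactness of $\mathbb{N}$ and $\mathbb{N}^{(p)}$, and use the maximum element ($\Omega$, resp.\ $p^\infty$) to show compactness of $\mathbb{N}_S$ and $\mathbb{N}_S^{(p)}$. Your treatment of local compactness (each $U(n)$ is finite, hence compact) is slightly crisper than the paper's phrasing, but the underlying idea is the same.
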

\begin{proof}
\mbox{}
\begin{itemize}
\item[(1)]
We give the proof for the topological space $({\mathbb N}, {T}_{{\mathbb N}})$ and the proof for the other spaces is similar.
 
We consider a pair of elements $m,n\in {\mathbb N}$. If one of them is a divisor of the other, for example 
$m|n$, then the $U(m)$ is an open set such that $m\in U(m)$ and $n\notin U(m)$.
If none of the $m,n$ is a divisor of the other, then again $m\in U(m)$ and $n\notin U(m)$.
This proves that $({\mathbb N}, {T}_{{\mathbb N}})$ is a $T_0$ topological space.

$({\mathbb N}, {T}_{{\mathbb N}})$ is not a $T_1$ topological space,
because if $m|n$, every open set which contains $n$, also contains $m$.
\item[(2)]
We first prove that the topological space $({\mathbb N}, {T}_{{\mathbb N}})$ is locally compact.
For each point $n\in {\mathbb N}$, every open cover has a finite subcover which actually consist of one open set
$U(m)$ where $n|m$. This proves that $({\mathbb N}, {T}_{{\mathbb N}})$ is locally compact.
On the other hand, $({\mathbb N}, {T}_{{\mathbb N}})$ is not compact, because
its open cover 
\begin{eqnarray}
\bigcup_{n\in {\mathbb N}}U(n)={\mathbb N}
\end{eqnarray}
has no finite subcover. 
In a similar way we prove that $({\mathbb N}^{(p)}, {T}_{{\mathbb N}^{(p)}})$ is locally compact, but not compact.

We next prove that the topological space $({\mathbb N}_S, {T}_{{\mathbb N}_S})$ is compact.
In this case every open cover has a finite subcover. For example, the open cover 
\begin{eqnarray}
\bigcup_{n\in {\mathbb N}_S}U(n)={\mathbb N}_S
\end{eqnarray}
has the finite subcover $U(\Omega)={\mathbb N}_S$.
In a similar way we prove that $({\mathbb N}_S^{(p)}, {T}_{{\mathbb N}_S^{(p)}})$ is compact.

\end{itemize}
\end{proof}
\begin{remark}
For some authors compact spaces are by definition Hausdorff.
We do not include the requirement to be Hausdorff, in the definition of compact topological spaces.  
\end{remark}
The physical importance of this proposition will be discussed later.

\paragraph*{Topological spaces homeomorphic to $({\mathbb N}_S, {T}_{{\mathbb N}_S})$ and continuity:}
Below we will introduce several topological spaces $({\mathbb A}, {T}_{\mathbb A})$ which are homeomorphic to the topological space
$({\mathbb N}_S, {T}_{{\mathbb N}_S})$ (we denote this as (${\mathbb A}, 
{T}_{\mathbb A}) \sim ({\mathbb N}_S, {T}_{{\mathbb N}_S})$).
We construct them as follows.
We consider a set ${\mathbb A}$
with elements $a_n$ which are labelled with $n\in {\mathbb N}_S$ (i.e., there is a bijection between 
${\mathbb A}$ and ${\mathbb N}_S$). This induces
a partial order in ${\mathbb A}$ where $a_m\prec a_n$ if and only if $m|n$.
Then ${\mathbb A}$ is order isomorphic to ${\mathbb N}_S$.
We make ${\mathbb A}$ a topological space with topology ${T}_{\mathbb A}$
generated by the base
\begin{eqnarray}
B_{\mathbb A}=\{\emptyset, {U}_{\mathbb A}(a_n)\;|\;n\in {\mathbb N}_S\};\;\;\;\;\;
U_{\mathbb A}(a_n)=\{a_m\in {\mathbb A}\;|\;a_m\prec a_n\}
\end{eqnarray}
Then (${\mathbb A}, {T}_{\mathbb A}) \sim ({\mathbb N}_S, {T}_{{\mathbb N}_S})$.
Also the map from ${\mathbb N}_S$ to ${\mathbb A}$, where $n$ is mapped into $a_n$, is a continuous function.

In a similar way we will introduce topological spaces homeomorphic to 
$({\mathbb N}, {T}_{{\mathbb N}})$, to $({\mathbb N}_S^{(p)}, {T}_{{\mathbb N}_S^{(p)}})$,
and also to $({\mathbb N}^{(p)}, {T}_{{\mathbb N}^{(p)}})$.

\paragraph*{The topological spaces  ${\cal Z}$ and ${\cal Z}^{(p)}$:}
We consider the additive groups ${\mathbb Z}(n)$. Then 
${\mathbb Z}(m)\le {\mathbb Z}(n)$ if and only if $m|n$. 
Let ${\cal Z}$ be the set
\begin{eqnarray}\label{eq2}
{\cal Z}=\{{\mathbb Z}(2),{\mathbb Z}(3),...\}.
\end{eqnarray}
There is a bijective map from ${\mathbb N}$ to ${\cal Z}$, where $n$ is mapped into ${\mathbb Z}(n)$. This
induces the partial order `subgroup' into ${\cal Z}$.
Therefore ${\cal Z}$ is now order isomorphic to the partially ordered set ${\mathbb N}$.
Furthermore, it can become the topological space 
$({\cal Z}, T_{\cal Z})$ (as described above) which is a $T_0$-topological space, 
homeomeorphic to $({\mathbb N}, {T}_{{\mathbb N}})$.
The points in this topology are the groups ${\mathbb Z}(n)$, and
an open (resp. closed) set contains some groups and all their subgroups (resp. supergroups).

In a similar way the subset of ${\cal Z}$
\begin{eqnarray}\label{eq1}
{\cal Z}^{(p)}=\{{\mathbb Z}(p),{\mathbb Z}(p^2),...\}.
\end{eqnarray}
is a chain order isomorphic to ${\mathbb N}^{(p)}$, and it is also the topological space $({\cal Z}^{(p)}, T_{{\cal Z}^{(p)}})$ which is homeomorphic to 
$({\mathbb N}^{(p)}, {T}_{{\mathbb N}^{(p)}})$.

Later we will introduce larger topological spaces $({\cal Z}_S, T_{{\cal Z}_S}) \sim ({\mathbb N}_S, {T}_{{\mathbb N}_S})$
and $({\cal Z}_S^{(p)},T_{{\cal Z}_S^{(p)}})\sim ({\mathbb N}_S^{(p)}, {T}_{{\mathbb N}_S^{(p)}})$.

\subsection{$p$-adic numbers}

The field ${\mathbb Q}_p$ of $p$-adic numbers (where $p\in \Pi$), contains elements which  
can be written as
\begin{eqnarray}\label{1}
a_p=\sum _{\nu ={\rm ord}(a_p)}^{\infty} {\overline a} _{\nu} p^{\nu};\;\;\;\;\;\;\;
0\le {\overline a} _{\nu}\le p-1
\end{eqnarray}
${\rm ord}(a_p)$ is the ordinal or valuation of $a_p$.
Addition and multiplication is the usual addition and multiplication of series, together with the `carry' operation.
The ring ${\mathbb Z}_p$ of $p$-adic integers contains elements with ${\rm ord}(a_p)\ge 0$.
More generally the ring $p^n{\mathbb Z}_p$ contains elements with ${\rm ord}(a_p)\ge n$.
For $n,m\ge 0$ the $p^{-m}{\mathbb Z}_p/p^n{\mathbb Z}_p\cong {\mathbb Z}(p^{n+m})$. 
The $-1$ as a $p$-adic number is $(p-1)[1+p+p^2+...]$. 

The metric is non-Archimedean. The absolute value of $a_p$ is 
\begin{eqnarray}\label{abs}
|a_p|_p=p^{-{\rm ord}(a_p)}
\end{eqnarray}
and it has the usual properties, and also the property
\begin{eqnarray}
|a_p+b_p|_p\le \max (|a_p|_p,|b_p|_p).
\end{eqnarray}

We consider all absolute values  $|\lambda |_p$ of a fixed number $\lambda$, for all $p\in \Pi$.
Also let $|\lambda |_\infty$ be the `usual' absolute value of $\lambda$.
Ostrowski's theorem states that
\begin{eqnarray}\label{tgb}
|\lambda |_\infty \prod _{p\in \Pi}|\lambda |_p=1. 
\end{eqnarray}

Given a prime $p$, any rational number $\kappa/\lambda$, can be written as
\begin{eqnarray}
\frac{\kappa}{\lambda}=p^s\frac{\kappa_1}{\lambda_1}=p^s({\overline a} _0+{\overline a} _1p+{\overline a} _2p^2+...)
\end{eqnarray}
where any two of the $\kappa_1,\lambda_1,p$ are coprime integers. If $s\ge 0$ then $\kappa/\lambda$ is a $p$-adic integer.

\paragraph*{${\mathbb Q}_p/{\mathbb Z}_p$ as the Pontryagin dual group of ${\mathbb Z}_p$.} 
The ${\mathbb Q}_p/{\mathbb Z}_p$ contains fractional $p$-adic numbers.
Its elements are cosets and can be represented by 
\begin{eqnarray}\label{81}
{\mathfrak a}_p&=&\sum _{\nu =\nu_0}^{-1} {\overline a} _{\nu} p^{\nu}
\end{eqnarray}
They are defined modulo a $p$-adic integer.

${\mathbb Q}_p/{\mathbb Z}_p$ is isomorphic to the Pr\"ufer $p$-group (or quasi-cyclic group) $C(p^ {\infty})$:
\begin{eqnarray}
C(p^ {\infty})=\{\omega _{p^n}(\alpha )|\alpha \in {\mathbb Z}(p^n), n\in {\mathbb Z}^+\}
\cong {\mathbb Q}_p/{\mathbb Z}_p
\end{eqnarray}
Here ${\mathfrak a}_p \in {\mathbb Q}_p/{\mathbb Z}_p$ corresponds to $\exp(i2\pi {\mathfrak a}_p)$. 

\paragraph*{Characters:}
The product $a_p{\mathfrak b}_p$ where $a_p\in {\mathbb Z}_p$ and
${\mathfrak b}_p\in {\mathbb Q}_p/{\mathbb Z}_p$
is also a coset in ${\mathbb Q}_p/{\mathbb Z}_p$.
The Pontryagin dual group of ${\mathbb Z}_p$ is ${\mathbb Q}_p/{\mathbb Z}_p$.
This will also be seen below where ${\mathbb Z}_p$ (resp. ${\mathbb Q}_p/{\mathbb Z}_p$)
is studied as the inverse limit (resp. direct limit)
of the groups ${\mathbb Z}(p^n)$ (resp. of the Pontryagin dual groups to ${\mathbb Z}(p^n)$ 
which are isomorphic to ${\mathbb Z}(p^n)$).
Additive characters are given by
\begin{eqnarray}\label{81}
\chi _p(a_p{\mathfrak b}_p)=\exp (i2\pi a_p{\mathfrak b}_p)
\end{eqnarray}
\begin{remark}
The Pontryagin dual group of $p^{e_p}{\mathbb Z}_p$ is ${\mathbb Q}_p/(p^{-e_p}{\mathbb Z}_p)$.
Its elements are cosets represented with the
\begin{eqnarray}\label{160}
a_p=\sum _{\nu =\nu _0}^{-e_p-1} {\overline a} _{\nu} p^{\nu}.
\end{eqnarray}
In the case $e_p=\infty$ 
\begin{eqnarray}\label{600}
p^\infty {\mathbb Z}_p=\{0\};\;\;\;\;\;{\mathbb Q}_p/(p^{-\infty}{\mathbb Z}_p)=\{0\}.
\end{eqnarray}
The ${\mathbb Q}_p/(p^{-\infty}{\mathbb Z}_p)$ has one element which is the coset with all $p$-adic numbers, and we represent it with $0$.
\end{remark}

\subsection{The group ${\widehat {\mathbb Z}}$}

${\widehat {\mathbb Z}}$ is the additive group  
\begin{eqnarray}\label{0}
{\widehat {\mathbb Z}}=\prod _{p\in \Pi} {\mathbb Z}_p.
\end{eqnarray}
Its elements are
\begin{eqnarray}\label{467}
a=(a_2,...,a_p,...);\;\;\;\;a_p\in {\mathbb Z}_p;\;\;\;\;\;p\in \Pi.
\end{eqnarray}
Here addition is performed componentwise.
The set of integers ${\mathbb Z}$ can be embedded into ${\widehat {\mathbb Z}}$ as follows:
\begin{eqnarray}
{\mathbb Z}\ni n\;\;\rightarrow\;\;(n,n,...)\in {\widehat {\mathbb Z}}
\end{eqnarray}
Here on the right hand side, we have $n$ in the $p$-adic form (with $p=2,3,5...$).

\paragraph*{${\mathbb Q}/{\mathbb Z}$ as the Pontryagin dual group of ${\widehat {\mathbb Z}}$.}
The ${\mathbb Q}/{\mathbb Z}$ contains rational numbers on a circle, and it is
the Pontryagin dual group of ${\widehat {\mathbb Z}}$.
This will also be seen below where  ${\widehat {\mathbb Z}}$ (resp. ${\mathbb Q}/{\mathbb Z}$)
is studied as the inverse limit (resp. direct limit)
of the groups ${\mathbb Z}(n)$
(resp. of the Pontryagin dual groups to ${\mathbb Z}(n)$ 
which are isomorphic to ${\mathbb Z}(n)$). 

${\mathbb Q}/{\mathbb Z}$ is isomorphic to $\prod {\mathbb Q}_p/{\mathbb Z}_p$.
Indeed, an element of ${\mathbb Q}/{\mathbb Z}$
can be written as $\kappa/\lambda $ where $\kappa, \lambda$ are coprime integers and $\kappa <\lambda $. 
We write $\kappa/\lambda $ as 
\begin{eqnarray}
\frac{\kappa}{\lambda}=\sum _{p\in \Pi(\lambda)}\frac {\kappa _p}{p^{e_p}};\;\;\;\;p\in \Pi(\lambda);\;\;\;\;e_p\in E(\lambda)
\end{eqnarray}
and then write $\kappa _p/p^{e_p}$, in the $p$-adic form. We get
\begin{eqnarray}\label{468}
&&\frac{\kappa}{\lambda}=({\mathfrak a}_2,...,{\mathfrak a}_p,...);\;\;\;\;
{\mathfrak a}_p\in {\mathbb Q}_p/{\mathbb Z}_p\nonumber\\
&&p\in \Pi(\lambda)\;\;\rightarrow \;\;{\mathfrak a}_p=p^{-e_p}\kappa _p\nonumber\\
&&p\in \Pi-\Pi(\lambda)\;\;\rightarrow \;\;{\mathfrak a}_p=0
\end{eqnarray}
The set $\Pi (\lambda)$ is finite, and therefore
all but a finite number of the ${\mathfrak a}_p$, are equal to zero.
Addition is performed componentwise.

\paragraph*{Characters:}
Let ${\mathfrak b} =({\mathfrak b}_2,...,{\mathfrak b}_p,...)\in {\mathbb Q}/{\mathbb Z}$ and 
$a=(a_2,...,a_p,...)\in {\widehat {\mathbb Z}}$.
The product ${\mathfrak b}a=({\mathfrak b}_2a_2,...,{\mathfrak b}_pa_p,...)$ is an element of 
${\mathbb Q}/{\mathbb Z}$.
Additive characters in ${\mathbb Q}/{\mathbb Z}$ are given by
\begin{eqnarray}\label{81}
\chi (a{\mathfrak b})=\prod _{p\in \Pi}\chi _p(a_p {\mathfrak b}_p)
\end{eqnarray}
This converges because only a finite number of the 
${\mathfrak b}_p$ are 
different from zero.

\begin{remark}\label{rem12}
In the context of `fast Fourier transforms'
Good \cite{ER,G1,G2} used the Chinese remainder theorem to factorize characters 
$\omega _n(\mu\nu)$ where $\mu,\nu\in {\mathbb Z}(n)$ as follows.
We factorize $n$ as in Eq.(\ref{factorize10}), and we define the
\begin{eqnarray}\label{variables}
u_i=\frac{n}{p_i^{e_i}};\;\;\;t_iu_i=1({\rm mod}\; {p_i^{e_i}});\;\;\;\;
w_i=t_iu_i \in {\mathbb Z}(n).
\end{eqnarray}
where $p_i\in \Pi(n)$ and $e_i\in E(n)$.
We can show that
\begin{eqnarray}\label{135}
w_iw_j=\delta _{ij}w_j({\rm mod}\; n);\;\;\;\;\;\;w_iu_j=\delta _{ij}u_j({\rm mod}\; n).
\end{eqnarray}
We now consider two bijective maps between the isomorphic groups
\begin{eqnarray}\label{r15}
{\mathbb Z}(n)\cong {\mathbb Z}(p_1^{e_1})\times ...\times {\mathbb Z}(p_\ell^{e_\ell}).
\end{eqnarray} 
The first map is 
\begin{eqnarray}\label{m1}
&&\mu\;\leftrightarrow\;(\mu_1,...,\mu_\ell);\;\;\;\;\mu\in {\mathbb Z}(n);\;\;\;\;\mu_i\in {\mathbb Z}({p_i^{e_i}})
\nonumber\\
&&\mu_i=\mu\; ({\rm mod}\; {p_i^{e_i}});\;\;\;\;\;\mu=\sum _i\mu_iw_i.
\end{eqnarray}
The second map is:
\begin{eqnarray}\label{m2}
&&\nu\;\leftrightarrow\;({\widehat \nu}_1,...,{\widehat \nu}_\ell);\;\;\;\;\nu\in {\mathbb Z}(n)
;\;\;\;\;{\widehat \nu}_i\in {\mathbb Z}({p_i^{e_i}})
\nonumber\\
&&{\widehat \nu}_i=\nu t_i\;({\rm mod}\; {p_i^{e_i}});\;\;\;\;\;\frac{\nu}{n}=\sum _i \frac{{\widehat \nu}_i}{p_i^{e_i}}.
\end{eqnarray}
Using Eq.(\ref{135}) we easily prove
the following properties for the summation and product of two numbers in ${\mathbb Z}(n)$:
\begin{eqnarray}
&&\mu +\nu\;\leftrightarrow\;(\mu_1+\nu_1,...,\mu_\ell+\nu _\ell )\nonumber\\
&&\mu \nu\;\leftrightarrow\;(\mu_1\nu_1,...,\mu_\ell\nu _\ell ),
\end{eqnarray}
and also
\begin{eqnarray}
&&\mu +\nu\;\leftrightarrow\;(\widehat \mu_1+\widehat \nu_1,...,\widehat \mu_\ell+\widehat \nu _\ell )\nonumber\\
&&\mu \nu\;\leftrightarrow\;(\widehat \mu_1\nu_1,...,\widehat \mu_\ell\nu _\ell ).
\end{eqnarray}
Using Eq.(\ref{135}) we can prove that 
\begin{eqnarray}\label{82}
\omega _n(\mu\nu)=\omega _{p_1^{e_1}}\left ({\widehat \nu}_1 \mu_1\right )...
\omega _{p_\ell ^{e_\ell}}\left ({\widehat \nu}_\ell \mu_\ell \right )
\end{eqnarray}

Eq.(\ref{467}) can be viewed as generalization of Eq.(\ref{m1}).
Given $a\in {\mathbb Z}(n)$ we express it using Eq.(\ref{m1}) as
$(a _1,...,a _\ell)$ where $a _i\in {\mathbb Z}(p_i^{e_i})$
and then write $a_n$ as
\begin{eqnarray}\label{1001}
&&a=(a_2,a_3,a_5,...)\in {\widehat {\mathbb Z}}\nonumber\\
&&p_i\in \Pi(n)\;\;\rightarrow \;\;a_{p_i}=a _i\nonumber\\
&&p\in \Pi-\Pi(n)\;\;\rightarrow \;\;a_p=0.
\end{eqnarray}
Here $a_{p_i}$ is the $a_i$ expressed in the $p$-adic form.
Also Eq.(\ref{468}) can be viewed 
as generalization of Eq.(\ref{m2}).
Then Eq.(\ref{81}) is a generalization of Eq.(\ref{82}) where
the correspondence between the variables in these two equations is given by
\begin{eqnarray}
a_{p_i}\;\rightarrow \;\mu_i;\;\;\;\;{\mathfrak b}_{p_i}\;\rightarrow \;p_i^{-e_i}\widehat \nu_i.
\end{eqnarray}
The generalization involves both the exponents $e_i$ going to infinity, and the inclusion of all prime numbers.

\end{remark}

\begin{remark}
For later use, we partition ${\widehat {\mathbb Z}}$ into ${\widehat {\mathbb Z}}_{\rm even}$ and 
${\widehat {\mathbb Z}}_{\rm odd}$ as follows:
\begin{eqnarray}\label{467v}
&&{\widehat {\mathbb Z}}_{\rm even}=\{(a_2,...,a_p,...)|{\rm ord}(a_2)\ge 1\}\nonumber\\
&&{\widehat {\mathbb Z}}_{\rm odd}=\{(a_2,...,a_p,...)|{\rm ord}(a_2)=0\}
\end{eqnarray}
\end{remark}

\section{Inverse and direct limits in a quantum mechanical context}

Let $\{G_n\}$ be a set of finite Abelian groups, where the indices $n$ belong to some directed partially ordered set.
Also let $\{\widetilde G_n\}$ be the set of their Pontryagin dual groups.
Inverse limits of finite groups lead to profinite groups which are 
Hausdorff, compact and totally disconnected topological groups.
Let ${\mathfrak G}$ be the inverse limit  of $\{G_n\}$,  
and $\widetilde {\mathfrak G}$ the direct limit of $\{\widetilde G_n\}$.
Then $\widetilde {\mathfrak G}$ is the Pontryagin dual group of ${\mathfrak G}$.

In this section we consider two sets of such groups 
$\{{\mathbb Z}(p^n)\}$ and  $\{{\mathbb Z}(n)\}$.
All these groups are Pontryagin self-dual.
The inverse limit of $\{{\mathbb Z}(p^n)\}$ and $\{{\mathbb Z}(n)\}$
are the profinite groups ${\mathbb Z}_p$ and ${\widehat {\mathbb Z}}$, 
correspondingly.
The direct limits of $\{{\mathbb Z}(p^n)\}$ and $\{{\mathbb Z}(n)\}$ are the
${\mathbb Q}_p/{\mathbb Z}_p$ and  ${\mathbb Q}/{\mathbb Z}$ correspondingly.

The concept of  inverse (resp., direct) limit, involves homomorphisms between the groups 
$\{G_n\}$ (resp., $\{\widetilde G_n\}$) which are discussed below.
These homomorphisms will be used to define
a partial order in the set $\{(G_n,\widetilde G_n)\}$
with the relation `subgroup' among the $\widetilde G_n$.
Then the various $G_n$ are related through quotients as described in Eq.(\ref{4}).
The sets
\begin{eqnarray}
\{(G_n,\widetilde G_n)\;|\;n\in {\mathbb N}\}\subset \{(G_n,\widetilde G_n)\;|\;n\in {\mathbb N}_S\}
\end{eqnarray}
are directed partially ordered sets. The larger set $\{(G_n,\widetilde G_n)\;|\;n\in {\mathbb N}_S\}$
which contains the  $({\mathfrak G}, \widetilde {\mathfrak G})$ is a directed-complete partially ordered set.
Below the $G_n \times \widetilde G_n$ will be used as phase spaces of quantum systems.
The embeddings defined in this section, will be used to define embeddings of smaller quantum systems into larger ones.

\subsection{The profinite group ${\mathbb Z}_p$ and its Pontryagin dual group ${\mathbb Q}_p/{\mathbb Z}_p$}

\subsubsection{${\mathbb Z}_p$ as inverse limit}
We consider the 
$\{{\mathbb Z}(p^n)\}$ as additive topological groups with the discrete topology.
Here $n$ belongs to ${\mathbb Z}^+$ which with the usual order is a chain. 
For $k\le \ell$ we define the continous homomorphisms:
\begin{eqnarray}\label{hom1}
&&\varphi_{k\ell}:\;{\mathbb Z}(p^k)\;\leftarrow\;{\mathbb Z}(p^\ell);\;\;\;\;\;\;\;\;k\le \ell\nonumber\\
&&\varphi_{k\ell}(\alpha_{p^\ell})=\alpha_{p^k};\;\;\;\;\;\;\alpha_{p^\ell}= \alpha_{p^k} ({\rm mod}\; p^k)
;\;\;\;\;\;
\alpha_{p^\ell}\in {\mathbb Z}(p^\ell);\;\;\;\;\;\alpha_{p^k} \in {\mathbb Z}(p^k).
\end{eqnarray}
These homomorphisms are compatible in the sense that
if $k\le  \ell \le r$, then
$\varphi_{k\ell}\circ \varphi_{\ell r}=\varphi_{kr}$. Also $\varphi_{kk}={\bf 1}$.

The $\{{\mathbb Z}(p^\ell),\varphi _{k\ell}\}$
is an inverse system with inverse limit the ${\mathbb Z}_p$:
\begin{eqnarray}
\lim _{\longleftarrow} {\mathbb Z}(p^\ell)={\mathbb Z}_p.
\end{eqnarray}
The elements of the inverse limit of this inverse system are the sequences
\begin{eqnarray}\label{pt}
&&\alpha =(\alpha _p,\alpha _{p^2},...);\;\;\;\;\;\;\;\alpha _{p^\ell} \in {\mathbb Z}(p^\ell)\nonumber\\
&&k\le \ell\;\;\rightarrow\;\;\alpha _{p^\ell}=\alpha _{p^k}\;({\rm mod}\;p^k),
\end{eqnarray}
and addition is performed componentwise.
This is a different but equivalent representation to Eq.(\ref{1}), of $p$-adic numbers.
Indeed we can replace the sequence in Eq.(\ref{pt}) with the sequence
\begin{eqnarray}
\{\overline \alpha _0, \overline \alpha _1, \overline \alpha _2,...\};\;\;\;\;\;
{\overline a} _{\nu}=p^{-\nu}(\alpha_{p^{\nu +1}}-\alpha_{p^\nu});\;\;\;\;\;
\overline \alpha _0=\alpha _p,
\end{eqnarray}
which we rewrite as a sum as in Eq.(\ref{1}).

\paragraph*{Fundamental system of neighbourhoods of $0$:}
${\mathbb Z}_p$ is a profinite group and as such it is a topological group 
which is Hausdorff, compact and totally disconnected. The
\begin{eqnarray}
\{p^n{\mathbb Z}_p\;|\;n\in {\mathbb Z}^+\} 
\end{eqnarray}
are a fundamental system of neighbourhoods of $0$.
This topology is the same as the topology endowed by the $p$-adic metric in Eq.(\ref{abs}).
It is easily seen that
\begin{eqnarray}
{\mathbb Z}_p\ge p{\mathbb Z}_p\ge p^2{\mathbb Z}_p\ge ...
\end{eqnarray}

\paragraph*{Projections:}
There exist projections $\xi _k$ from ${\mathbb Z}_p$ to ${\mathbb Z}(p^k)$, given by the `truncated series'
\begin{eqnarray}\label{AA1}
\xi _k(a_p)=\sum _{\nu =0}^{k-1} {\overline a} _{\nu} p^{\nu}
\end{eqnarray}
These projections are compatible with the $\varphi_{k\ell}$:
\begin{eqnarray}
k\le \ell\;\;\rightarrow\;\;\varphi_{k\ell}\circ \xi _\ell=\xi_k
\end{eqnarray}

\subsubsection{${\mathbb Q}_p/{\mathbb Z}_p$ as direct limit}

We consider the Pontryagin dual groups to $\{{\mathbb Z}(p^n)\}$ used in the inverse limit of Eq.(\ref{hom1}).
They are the  $\{{\mathbb Z}(p^n)\cong C(p^n)\}$ (they are self-dual).
For $k\le \ell$ we define the homomorphisms (`embeddings')
\begin{eqnarray}\label{hom2}
&&{\widetilde \varphi } _{k\ell}:\;{\mathbb Z}(p^k)\;\rightarrow\;{\mathbb Z}(p^\ell);\;\;\;\;\;\;\;\;
k\le \ell,\nonumber\\
&& {\widetilde \varphi } _{k\ell}(\alpha _{p^k})=\alpha _{p^\ell};\;\;\;\;\;\;
\alpha _{p^\ell}= p^{\ell -k}\alpha _{p^k};\;\;\;\;\;
\alpha _{p^\ell}\in {\mathbb Z}(p^\ell);\;\;\;\;\;
\alpha _{p^k} \in {\mathbb Z}(p^k).
\end{eqnarray}
The ${\mathbb Z}(p^k)$ is a subgroup of ${\mathbb Z}(p^\ell)$,
and for a given element in ${\mathbb Z}(p^k)$ the homomorphism defines
the corresponding element in ${\mathbb Z}(p^\ell )$.
These homomorphisms are compatible in the sense that
if $k\le  \ell \le r$ then
${\widetilde \varphi}_{\ell r}\circ {\widetilde \varphi}_{k \ell }={\widetilde \varphi}_{kr}$
In addition to that, ${\widetilde \varphi}_{\ell \ell}={\bf 1}$.

The $\{{\mathbb Z}(p^\ell),{\widetilde \varphi} _{k\ell}\}$ 
is a direct system with direct limit
\begin{eqnarray}
\lim _{\longrightarrow} {\mathbb Z}(p^\ell)\cong {\mathbb Q}_p/{\mathbb Z}_p\cong C(p^ {\infty}).
\end{eqnarray}
The  direct limit is the disjoint union of all $C(p^ {\ell})\cong {\mathbb Z}(p^\ell)$
modulo an equivalence relation where 
we identify $\alpha_{p^k} \in {\mathbb Z}(p^k)$ with $p^{\ell -k}\alpha_{p^k} \in {\mathbb Z}(p^\ell)$.
The ${\mathbb Q}_p/{\mathbb Z}_p$ is a discrete group (as the Pontryagin dual group to ${\mathbb Z}_p$ which is compact).

There exist homomorphisms ${\widetilde \xi} _\ell$ from 
${\mathbb Z}(p^\ell)$ to 
${\mathbb Q}_p/{\mathbb Z}_p$:
\begin{eqnarray}\label{AA2}
{\widetilde \xi} _\ell (\alpha _{p^\ell})=p^{-\ell}\alpha_{p^\ell}
\end{eqnarray}
They are compatible in the sense that
\begin{eqnarray}\label{34f}
k\le \ell\;\;\rightarrow\;\;{\widetilde \xi} _\ell \circ {\widetilde \varphi}_{k\ell}  
={\widetilde \xi} _k.
\end{eqnarray}

\paragraph*{The topological space ${\cal Z}_S^{(p)}$}

We can now extend the topological space ${\cal Z}^{(p)}$ in Eq.(\ref{eq1}), by adding the $C(p^ {\infty})\cong {\mathbb Q}_p/{\mathbb Z}_p$
(all the $C(p^n)$ are subgroups of $C(p^ {\infty})$):
\begin{eqnarray}
{\cal Z}_S^{(p)}={\cal Z}^{(p)}\cup \{C(p^ {\infty})\}
\end{eqnarray}
We make it a topological space as described in section \ref{topology}, and 
$({\cal Z}_S^{(p)},T_{{\cal Z}_S^{(p)}})\sim ({\mathbb N}_S^{(p)}, {T}_{{\mathbb N}_S^{(p)}})$.

\subsubsection{The chains ${\cal G}^{(p)}$ and ${\cal G}_S^{(p)}$ of Pontryagin dual pairs of groups}

We consider the set
\begin{eqnarray}\label{zc5}
{\cal G}_S^{(p)}=\{({\mathbb Z}(p), C(p)), ({\mathbb Z}(p^2), C(p^2)),...\}
\cup \{({\mathbb Z}_p,C(p^\infty))\}
\end{eqnarray}
Each element contains a pair of groups which are Pontryagin dual to each other. 
We order the elements of ${\cal G}_S^{(p)}$ by ordering the second elements of the pairs with the relation `subgroup',
and we get the following chain which is order isomorphic to ${\mathbb N}_S^{(p)}$:
\begin{eqnarray}\label{zc5}
({\mathbb Z}(p), C(p))\prec ({\mathbb Z}(p^2), C(p^2))\prec ({\mathbb Z}(p^3), C(p^3))\prec ...
\prec ({\mathbb Z}_p,C(p^\infty))
\end{eqnarray}
The last pair $({\mathbb Z}_p,C(p^\infty))$ is the supremum of this chain.
We added it, so that the chain is complete.
The first elements of these pairs are ordered indirectly by the fact that their Pontryagin duals are ordered with the relation `subgroup'.
A more direct relation can be found using Eq.(\ref{ann}), which involve quotients.
As an example, we consider the first elements of the pairs $({\mathbb Z}(p^n), C(p^n))$ and $({\mathbb Z}_p,C(p^\infty))$.
In this case $G=C(p^\infty)\cong {\mathbb Q}_p/{\mathbb Z}_p$ and $E=C(p^n)$. Then
\begin{eqnarray}
&&\widetilde G={\mathbb Z}_p;\;\;\;\;\;
{\rm Ann} _{\widetilde G}(E)\cong p^n{\mathbb Z}_p;\;\;\;\;\widetilde E=\widetilde G/{\rm Ann} _{\widetilde G}(E)\cong {\mathbb Z}_p/p^n{\mathbb Z}_p
\cong {\mathbb Z}(p^n).
\end{eqnarray}
To see this we write the elements of $E=C(p^n)$ as ${\mathfrak a}_p=\overline a_{-n}p^{-n}+...+\overline a_{-1}p^{-1}$ where $0\le \overline a_{-i}\le p-1$
and then it is easily seen that $\chi _p({\mathfrak a}_pb_p)=1$ for all $b_p\in p^n{\mathbb Z}_p$.
Therefore ${\rm Ann} _{\widetilde G}(E)\cong p^n{\mathbb Z}_p$.
From this follows that the Pontryagin dual of $E=C(p^n)$ is ${\mathbb Z}_p/p^n{\mathbb Z}_p$ which is isomorphic to ${\mathbb Z}(p^n)$.
This shows explicitly the `quotient relationship' between 
the first elements of the pairs $({\mathbb Z}(p^n), C(p^n))$ and $({\mathbb Z}_p,C(p^\infty))$.

A subset of ${\cal G}_S^{(p)}$ is
\begin{eqnarray}
{\cal G}^{(p)}=\{({\mathbb Z}(p), C(p)), ({\mathbb Z}(p^2), C(p^2)),...\}
\end{eqnarray}
This is also a chain, but it has no supremum.
It is order isomorphic to ${\mathbb N}^{(p)}$.

\paragraph*{The topological spaces $({\cal G}^{(p)}, T_{{\cal G}^{(p)}})$ and $({\cal G}_S^{(p)}, T_{{\cal G}_S^{(p)}})$:}
There is a bijective map between ${\mathbb N}_S^{(p)}$ and ${\cal G}_S^{(p)}$:
\begin{eqnarray}
&&p^n\;\;\leftrightarrow \;\;({\mathbb Z}(p^n), C(p^n))\nonumber\\
&&p^\infty \;\;\leftrightarrow \;\;({\mathbb Z}_p,{\mathbb Q}_p/{\mathbb Z}_p).
\end{eqnarray}
Using the methodology of section \ref{topology} we can make the ${\cal G}_S^{(p)}$ a topological space and 
$({\mathbb N}_S^{(p)}, T_{{\mathbb N}_S^{(p)}})\sim ({\cal G}_S^{(p)}, T_{{\cal G}_S^{(p)}})$.
In a similar way we can make ${\cal G}^{(p)}$ a topological space and 
$({\mathbb N}^{(p)}, T_{{\mathbb N}^{(p)}})\sim ({\cal G}^{(p)}, T_{{\cal G}^{(p)}})$.

\paragraph*{Embeddings:}
The products ${\mathbb Z}(p^k)\times {\mathbb Z}(p^k)$  
will be used later as phase spaces of quantum systems.
We use the inverse of a restriction of the homomorphism $\varphi _{k\ell}$  in Eq.(\ref{hom1}), and 
the homomorphism $\widetilde \varphi _{k\ell}$ in Eq.(\ref{hom2}), 
to define embeddings among them.
\begin{definition}\label{def1}
\mbox{}
\begin{itemize}
\item[(1)]
For $k\le \ell$, $E_{k\ell}$ is the embedding of ${\mathbb Z}(p^k)\times {\mathbb Z}(p^k)$ into
${\mathbb Z}(p^\ell)\times {\mathbb Z}(p^\ell)$, by mapping the
$(\alpha _{p^k},\beta _{p^k}) \in {\mathbb Z}(p^k)\times {\mathbb Z}(p^k)$ 
into $(\alpha _{p^\ell},\beta _{p^\ell}) \in {\mathbb Z}(p^\ell)\times {\mathbb Z}(p^\ell)$, as follows:
\begin{eqnarray}\label{bn1}
&&E_{k\ell}:\;(\alpha _{p^k},\beta _{p^k})\;\rightarrow \;(\alpha _{p^\ell},\beta _{p^\ell})
;\;\;\;\;k\le \ell \nonumber\\
&&\alpha _{p^\ell}=\alpha _{p^k};\;\;\;\;\;
\beta _{p^\ell}=p^{\ell -k}\beta _{p^k}.
\end{eqnarray}
\item[(2)]
$E_{k\infty}$ is the embedding of
${\mathbb Z}(p^k)\times {\mathbb Z}(p^k)$ into ${\mathbb Z}_p\times ({\mathbb Q}_p/{\mathbb Z}_p)$
by mapping the
$(\alpha _{p^k},\beta _{p^k}) \in {\mathbb Z}(p^k)\times {\mathbb Z}(p^k)$ into
$(a_p,{\mathfrak b}_p)\in {\mathbb Z}_p\times {\mathbb Q}_p/{\mathbb Z}_p$, as follows:
\begin{eqnarray}\label{bn2}
&&E_{k\infty}:\;(\alpha _{p^k},\beta _{p^k})\;\rightarrow \;(a_p,{\mathfrak b}_p)\nonumber\\
&&\alpha _{p}=\alpha _{p^k}=\overline \alpha _0+...+\overline \alpha _{k-1}p^{k-1};\;\;\;\;
{\mathfrak b}_{p}=p^{-k}\beta _{p^k}.
\end{eqnarray}
\end{itemize}
\end{definition}
The $\varphi _{k\ell}$ is not invertible because it is not injective 
($p^{\ell-k}$ elements 
of ${\mathbb Z}(p^\ell)$ are mapped into the same element in ${\mathbb Z}(p^k)$). 
But in Eq.(\ref{bn1}) we use the inverse of a restriction of $\varphi _{k\ell}$, by mapping
the $\alpha _{p^k}\in {\mathbb Z}(p^k)$
into $\alpha _{p^\ell}=\alpha _{p^k}$ in ${\mathbb Z}(p^\ell)$.
We also use in Eq.(\ref{bn1}), the homomorphism $\widetilde \varphi _{k\ell}$.

It is easily seen that
\begin{eqnarray}\label{omega}
\omega _{p^\ell}\left ( \alpha _{p^\ell}\beta _{p^\ell}\right )
=\omega _{p^k}\left ( \alpha _{p^k}\beta _{p^k}\right );\;\;\;\;\;
\chi _{p}\left (a_{p}\mathfrak b _{p}\right )=\omega _{p^k}\left ( \alpha _{p^k}\beta _{p^k}\right ).
\end{eqnarray}
The embeddings $E_{k\ell}$ are compatible:
\begin{eqnarray}
k\le \ell \le r\;\;\rightarrow\;\;E_{\ell r}\circ E_{k\ell}=E_{kr}.
\end{eqnarray}
\begin{remark}\label{re}
In Eq.(\ref{bn1}), the map $\alpha _{p^k}\;\;\rightarrow\;\;\alpha _{p^\ell}=\alpha _{p^k}$
 is different from the map $\beta _{p^k}\;\;\rightarrow\;\;\beta _{p^\ell}=p^{\ell -k}\beta _{p^k}$
(the factor $p^{\ell -k}$).
This is related to the fact that the $\alpha$-variables are eventually linked to ${\mathbb Z}_p$, while the 
$\beta$-variables are eventually linked to its Pontryagin dual group ${\mathbb Q}_p/{\mathbb Z}_p$.
The extra factor $p^{\ell -k}$ with the $\beta$-variables is important for the equality in Eq.(\ref{omega}), 
and it is related to the fact that the Pontryagin dual concept to subgroup is quotient, as we explained earlier. 
We also discuss this later in remark \ref{last1}.
\end{remark}

\subsection{The profinite group ${\widehat {\mathbb Z}}$ and its Pontryagin dual group ${\mathbb Q}/{\mathbb Z}$}

\subsubsection{${\widehat {\mathbb Z}}$ as inverse limit}\label{nei}

We consider the additive topological groups $\{{\mathbb Z}(n)\}$ 
with the discrete topology. 
Here $n$ belongs to ${\mathbb Z}^+$ which with divisibility as a partial order, is a directed partially ordered set.
For $k |\ell$ 
we define the continuous homomorphisms:
\begin{eqnarray}\label{hom3}
&&\Phi_{k\ell}:\;{\mathbb Z}(k)\;\leftarrow\;{\mathbb Z}(\ell);\;\;\;\;\;\;\;\;k|\ell\nonumber\\
&&\Phi_{k\ell}(\alpha_\ell)=\alpha_k;\;\;\;\;\;\;\alpha_\ell= \alpha_k ({\rm mod}\;k);\;\;\;\;\;
\alpha_{\ell}\in {\mathbb Z}(\ell);\;\;\;\;\;\alpha_k \in {\mathbb Z}(k).
\end{eqnarray}
These homomorphisms are compatible in the sense that
if $k|\ell |r$, then
$\Phi_{k\ell}\circ \Phi_{\ell r}=\Phi_{kr}$. Also $\Phi_{kk}={\bf 1}$.

The $\{{\mathbb Z}(\ell),\Phi _{k\ell}\}$
is an inverse system and we call ${\widehat {\mathbb Z}}$ its inverse limit 
\begin{eqnarray}
\lim _{\longleftarrow} {\mathbb Z}(\ell)={\widehat {\mathbb Z}}.
\end{eqnarray}
Its elements can be represented as sequences
\begin{eqnarray}\label{Z}
a=(\alpha_1, \alpha_2,...);\;\;\;\;\alpha_k\in {\mathbb Z}(k);\;\;\;\;\alpha_\ell= \alpha_k ({\rm mod}\;k)
\end{eqnarray}
and addition is performed componentwise.
The Chinese remainder theorem and
the fact that  $\alpha_\ell= \alpha_k ({\rm mod}\;k)$ for $k |\ell$, show that only the elements with indices which are powers of a prime, 
are needed in order to define uniquely the sequence.
Therefore $a$ can also be written as
\begin{eqnarray}\label{ZA}
a=(a_2,a_3,a_5,...,a_p,...);\;\;\;\;a_p=(\alpha_p,\alpha _{p^2},...)\in {\mathbb Z}_p
\end{eqnarray}
$a_p$ is a $p$-adic integer (written in the representation of Eq.(\ref{pt})).

\paragraph*{Fundamental system of neighbourhoods of $0$ and the product topology:}
${\widehat {\mathbb Z}}$ is a profinite group and as such it is a topological group 
which is Hausdorff, compact and totally disconnected. The
\begin{eqnarray}
n{\widehat {\mathbb Z}}\cong \prod _{p\in \Pi} p^{e_p}{\mathbb Z}_p
\end{eqnarray}
are a fundamental system of neighbourhoods of $0$.
Here we have factorized $n$ as in Eq.(\ref{factorize10}), and therefore in the sequence $(e_p)$ all but a finite number of $e_p$, are equal to zero.
This is related to the fact that ${\widehat {\mathbb Z}}=\prod {\mathbb Z}_p$ as a topological group, has the 
product (Tychonoff) topology.
This means that the open sets in ${\widehat {\mathbb Z}}$
are $\prod U_p$ where $U_p$ is an open sets in ${\mathbb Z}_p$, and $U_p={\mathbb Z}_p$ for all but a finite number of $p$. 
This topology is important later.

\paragraph*{Projections:}
We factorize a positive integer $n$ in terms of prime numbers, as in Eq.(\ref{factorize10}).
There exist projections $\pi _n$ from ${\widehat {\mathbb Z}}$
to ${\mathbb Z}(n)$, where the $a=(a_2,a_3,a_5,...)\in {\widehat {\mathbb Z}}$ is mapped into
\begin{eqnarray}\label{BB1}
\pi _n(a)=\xi _{e_1}(a_{p_1})...\xi _{e_\ell}(a_{p_\ell})\in {\mathbb Z}(n);\;\;\;\;\;p_1,...,p_\ell\in \Pi(n);\;\;\;\;\;e_1,...,e_\ell\in E(n)
\end{eqnarray}
Here $\xi _{e_i}(a_{p_i})\in {\mathbb Z}(p_i^{e_i})$ (see Eq.(\ref{AA1})).
These projections are compatible with the $\Phi_{n\ell}$:
\begin{eqnarray}
n|\ell\;\;\rightarrow\;\;\Phi_{n\ell}\circ \pi _\ell=\pi_n
\end{eqnarray}

\subsubsection{${\mathbb Q}/{\mathbb Z}$ as direct limit}

We consider the Pontryagin dual groups to $\{{\mathbb Z}(n)\}$ used in the inverse limit of Eq.(\ref{hom3}).
They are the  
$\{{\mathbb Z}(n)\}$ (they are self-dual).
For $k |\ell$ 
we define the homomorphisms (`embeddings'):
\begin{eqnarray}\label{hom4}
&&\widetilde \Phi_{k\ell}:\;{\mathbb Z}(k)\;\rightarrow\;{\mathbb Z}(\ell);\;\;\;\;\;\;\;\;k|\ell\nonumber\\
&&\widetilde \Phi_{k\ell}(\alpha_k)=\alpha_\ell;\;\;\;\;\;\;\alpha_\ell= \frac{\ell}{k}\alpha_k;\;\;\;\;\;
\alpha_{\ell}\in {\mathbb Z}(\ell);\;\;\;\;\;\alpha_k \in {\mathbb Z}(k).
\end{eqnarray}
The ${\mathbb Z}(k)$ is a subgroup of ${\mathbb Z}(\ell)$, and for a given element in ${\mathbb Z}(k)$
the homomorphism defines the corresponding
element in ${\mathbb Z}(\ell )$.
These homomorphisms are compatible in the sense that
if $k|\ell |r$, then
$\widetilde \Phi_{\ell r}\circ \widetilde \Phi_{k\ell}=\widetilde \Phi_{kr}$. Also $\widetilde \Phi_{kk}={\bf 1}$.

The $\{{\mathbb Z}(k),{\widetilde \Phi} _{k\ell}\}$ 
is a direct system with direct limit
\begin{eqnarray}
\lim _{\longrightarrow} {\mathbb Z}(k)={\mathbb Q}/{\mathbb Z}.
\end{eqnarray}
The  direct limit is the disjoint union of all $C(k)\cong {\mathbb Z}(k)$
modulo an equivalence relation where 
we identify $\alpha_{k} \in {\mathbb Z}(k)$ with $\alpha _\ell=\frac{\ell}{k}\alpha_{k} \in {\mathbb Z}(\ell)$.
The ${\mathbb Q}/{\mathbb Z}$ is a discrete group (as the Pontryagin dual group to ${\widehat {\mathbb Z}}$ which is profinite and therefore compact).

For a positive integer $n$,
there exist homomorphisms ${\widetilde \pi} _n$ from ${\mathbb Z}(n)$ to
${\mathbb Q}/{\mathbb Z}$, where
\begin{eqnarray}\label{BB2}
{\widetilde \pi} _n (\alpha _n)=\frac{\alpha_n}{n}=({\mathfrak a}_2,{\mathfrak a}_3,{\mathfrak a}_5,...). 
\end{eqnarray}
The mechanism of expressing $\alpha_n/n$ as $({\mathfrak a}_2,{\mathfrak a}_3,{\mathfrak a}_5,...)$
is described in Eq.(\ref{468}).
The ${\widetilde \pi} _n$ are compatible in the sense that
\begin{eqnarray}
n|\ell\;\;\rightarrow\;\;\widetilde \pi _\ell \circ \widetilde \Phi_{n\ell}=\widetilde \pi_n.
\end{eqnarray}

\paragraph*{Subgroups of ${\mathbb Q}/{\mathbb Z}$ and their Pontryagin dual groups:}
Given a supernatural number $n$ factorized as in Eq.(\ref{super}), we consider the group
\begin{eqnarray}\label{x1}
{\mathfrak Z}(n)=\left [
{\mathbb Z}\left (\prod _{p\in \Pi ^{(\rm fin )}(n)}p^{e_p(n)}\right )\prod _{p\in \Pi ^{\infty }(n)}{\mathbb Z}_p\right ]
\cong \frac{\widehat {\mathbb Z}}{{\rm Ann}_{\widehat {\mathbb Z}}\widetilde {\mathfrak Z}(n)};\;\;\;\;
n\in {\mathbb N}_S,
\end{eqnarray}
and its Pontryagin dual group
\begin{eqnarray}\label{x2}
\widetilde {\mathfrak Z}(n)=\left [
C\left (\prod _{p\in \Pi ^{(\rm fin )}(n)}p^{e_p(n)}\right )\prod _{p\in \Pi ^{\infty }(n)}{\mathbb Q}_p/{\mathbb Z}_p\right ]
\le {\mathbb Q}/{\mathbb Z}.
\end{eqnarray}
We note here that
\begin{eqnarray}
{\mathbb Z}\left (\prod _{p\in \Pi ^{(\rm fin )}(n)}p^{e_p(n)}\right )\cong \prod _{p\in \Pi ^{(\rm fin )}(n)}{\mathbb Z}\left (p^{e_p(n)}\right )
\end{eqnarray}
and that Eq.(\ref{47}) has been used in Eq.(\ref{x1}).
The elements of ${\mathfrak Z}(n)$ are $(a_2,...,a_p,...)$ where $a_p=0$ for $p\in \Pi ^{(0)}(n)$.
Similarly the elements of $\widetilde {\mathfrak Z}(n)$ are $({\mathfrak a}_2,...,{\mathfrak a}_p,...)$ 
where ${\mathfrak a}_p=0$ for $p\in \Pi ^{(0)}(n)$.
In the special case that $n\in {\mathbb N}$, we get 
${\mathfrak Z}(n)\cong \widetilde {\mathfrak Z}(n)\cong {\mathbb Z}(n)$.

An example is
\begin{eqnarray}
&&{\mathfrak Z}(3^\infty 5^\infty)={\mathbb Z}_3\times {\mathbb Z}_5
\nonumber\\
&&\widetilde {\mathfrak Z}(3^\infty 5^\infty)=({\mathbb Q}_3/{\mathbb Z}_3)\times ({\mathbb Q}_5{\mathbb Z}_5).
\end{eqnarray}
The elements of ${\mathfrak Z}(3^\infty 5^\infty)$ are $(0,a_3,a_5,0,0,...)$ where $a_3 \in {\mathbb Z}_3$ and $a_5 \in {\mathbb Z}_5$.
Similarly the elements of $\widetilde {\mathfrak Z}(3^\infty 5^\infty)$ are $(0, {\mathfrak a}_3, {\mathfrak a}_5, 0,0,...)$ where 
${\mathfrak a}_3 \in {\mathbb Q}_3/{\mathbb Z}_3$ and ${\mathfrak a}_5 \in {\mathbb Q}_5/{\mathbb Z}_5$.
Also
\begin{eqnarray}
{\mathfrak Z}(\Omega)=\widehat {\mathbb Z};\;\;\;\;\widetilde {\mathfrak Z}(\Omega)={\mathbb Q}/{\mathbb Z}. 
\end{eqnarray}

\paragraph*{The topological space ${\cal Z}_S$:}
We can now extend the topological space ${\cal Z}$ in Eq.(\ref{eq2}). We consider the set
\begin{eqnarray}
{\cal Z}_S=\{\widetilde {\mathfrak Z}(n)\;|\;n\in {\mathbb N}_S\}.
\end{eqnarray}
There is a bijective map between ${\mathbb N}_S$ and ${\cal Z}_S$.
Using the methodology in section \ref{topology},
we make ${\cal Z}_S$ a topological space and 
$({\cal Z}_S,T_{{\cal Z}_S})\sim ({\mathbb N}_S, {T}_{{\mathbb N}_S})$.

\subsubsection{The directed partially ordered sets  ${\cal G}$ and ${\cal G}_S$ of Pontryagin dual pairs of groups}\label{c56}

We consider the set 
\begin{eqnarray}
{\cal G}_S=\{({\mathfrak Z}(n),\widetilde {\mathfrak Z}(n))\;|\;n\in {\mathbb N}_S\} 
\end{eqnarray}
of pairs of Pontryagin dual groups. 
We order the pairs by ordering the second elements $\widetilde {\mathfrak Z}(n)$ 
of the pairs with the relation `subgroup'. Then
\begin{eqnarray}\label{478}
&&k|\ell\;\;\rightarrow\;\;({\mathfrak Z}(k),\widetilde {\mathfrak Z}(k)) \prec
({\mathfrak Z}(\ell),\widetilde {\mathfrak Z}(\ell))
\end{eqnarray}
The set ${\cal G}_S$ is now a partially ordered set, which is order isomorphic to ${\mathbb N}_S$ (with divisibility as an order).
The first elements ${\mathfrak Z}(n)$ of these pairs, are partially ordered indirectly by the fact that their Pontryagin duals are 
partially ordered with the relation `subgroup'.
In order to find a more direct relation we can use Eq.(\ref{4}), but we are not discussing the details here.

All chains in ${\cal G}_S$ have a supremum, and therefore ${\cal G}_S$ is a directed-complete partial order.
The $(\widehat {\mathbb Z},{\mathbb Q}/{\mathbb Z})$ is maximum element in ${\cal G}_S$.
An example of a chain in ${\cal G}_S$ is 
\begin{eqnarray}\label{byu}
({\mathfrak Z}(k),\widetilde {\mathfrak Z}(k))\prec ({\mathfrak Z}(k^2),\widetilde {\mathfrak Z}(k^2))
\prec...\prec \left (
\prod _{p\in \Pi -\Pi ^{(0)}(k)}{\mathbb Z}_p,
\prod _{p\in \Pi -\Pi ^{(0)}(k)}{\mathbb Q}_p/{\mathbb Z}_p
\right ).
\end{eqnarray}
where $k\in {\mathbb N}$.
The supremum of this chain is the pair in the right hand side. 
Another chain in ${\cal G}_S$ is
\begin{eqnarray}\label{byu1}
({\mathfrak Z}(2^\infty),\widetilde {\mathfrak Z}(2^\infty))\prec 
({\mathfrak Z}(2^\infty 3^\infty),\widetilde {\mathfrak Z}(2^\infty 3^\infty))
\prec...\prec (\widehat {\mathbb Z},{\mathbb Q}/{\mathbb Z})
\end{eqnarray}
The $(\widehat {\mathbb Z},{\mathbb Q}/{\mathbb Z})$ is the supremum of this chain.

A subset of ${\cal G}_S$ is
\begin{eqnarray}
{\cal G}=\{({\mathbb Z}(n),{\mathbb Z}(n))\;|\;n\in {\mathbb N}\} 
\end{eqnarray}
This is a directed partially ordered set, , which is order isomorphic to ${\mathbb N}$. It is  not  a directed-complete partial order, and it
has no maximal elements.

\paragraph*{The topological spaces $({\cal G}_S, T_{{\cal G}_S})$ and $({\cal G}, T_{{\cal G}})$:}
There is a bijective map between ${\mathbb N}_S$ and ${\cal G}_S$: 
\begin{eqnarray}
&&n\;\;\leftrightarrow \;\;({\mathfrak Z}(n),\widetilde {\mathfrak Z}(n))\nonumber\\
&&\Omega \;\;\leftrightarrow \;\;({\widehat {\mathbb Z}},{\mathbb Q}/{\mathbb Z}).
\end{eqnarray}
Therefore we can make ${\cal G}_S$ a topological space and $({\mathbb N}_S, T_{{\mathbb N}_S})\sim ({\cal G}_S, T_{{\cal G}_S})$ 
(see section \ref{topology}).
In a similar way we make ${\cal G}$ a topological space and $({\mathbb N}, T_{{\mathbb N}})\sim ({\cal G}, T_{{\cal G}})$.
According to proposition \ref{ttt}, $({\cal G}_S, T_{{\cal G}_S})$ is a $T_0$-space which is compact,
and $({\cal G}, T_{{\cal G}})$ is a $T_0$-space which is locally compact.

\paragraph*{Embeddings:}
We consider the products ${\mathfrak Z}(k)\times \widetilde {\mathfrak Z}(k)$ 
which will be used later as phase spaces of quantum systems.
We define embeddings among them, using the 
inverse of a restriction of $\Phi _{k\ell}$ in Eq.(\ref{hom3}),
and the homomorphism $\widetilde \Phi _{k\ell}$
of Eq.(\ref{hom4}).
\begin{definition}\label{def2}
For $k|\ell$, ${\bf E}_{k\ell}$ is the embedding of
${\mathfrak Z}(k)\times \widetilde {\mathfrak Z}(k)$ into 
${\mathfrak Z}(\ell)\times \widetilde {\mathfrak Z}(\ell)$,
with the map
\begin{eqnarray}
&&{\bf E}_{k\ell}:\;\;(x,{\mathfrak p})\;\;\rightarrow\;\;(x',{\mathfrak p}');\;\;\;\;k|\ell\nonumber\\
&&x=(x_2,...,x_p,...)\in {\mathfrak Z}(k);\;\;\;\;
{\mathfrak p}=({\mathfrak p}_2,...,{\mathfrak p}_p,...)\in \widetilde {\mathfrak Z}(k)
\nonumber\\
&&x'=(x_2',...,x_p',...)\in {\mathfrak Z}(\ell);\;\;\;\;
{\mathfrak p}'=({\mathfrak p}_2',...,{\mathfrak p}_p',...)\in \widetilde {\mathfrak Z}(\ell),
\end{eqnarray}
where:
\begin{itemize}
\item[(1)]
if $p\in \Pi ^{(0)}(k)$, then $x_p'=0$ and ${\mathfrak p}_p'=0$.
\item[(2)]
if $p\in \Pi ^{({\rm fin})}(k) \cap \Pi ^{({\rm fin})}(\ell)$
\begin{eqnarray}
&&x_p'=x_p\;\;\;\;\;\;{\mathfrak p}_p'={\mathfrak p}_p p^{e_p(\ell)-e_p(k)}\nonumber\\
&&x_p, {\mathfrak p}_p\in {\mathbb Z}(p^{e_p(k)});\;\;\;\;x_p', {\mathfrak p}_p' \in {\mathbb Z}(p^{e_p(\ell)}),
\end{eqnarray}
as in Eq.(\ref{bn1}).
\item[(3)]
if $p\in \Pi ^{({\rm fin})}(k) \cap \Pi ^{({\infty})}(\ell)$ then 
\begin{eqnarray}
&&x_p'=x_p;\;\;\;\;\;{\mathfrak p}_p'={\mathfrak p}_p p^{-e_p(k)}\nonumber\\
&&x_p ,{\mathfrak p}_p \in {\mathbb Z}(p^{e_p});\;\;\;\;x_p'\in {\mathbb Z}_p;\;\;\;\;\;
{\mathfrak p}_p'\in {\mathbb Q}_p/{\mathbb Z}_p,
\end{eqnarray}
as in Eq.(\ref{bn2}).
\item[(4)]
if $p\in \Pi ^{(\infty )}(k)$ then $x_p'=x_p$ and ${\mathfrak p}_p'={\mathfrak p}_p$.
\end{itemize}
\end{definition}
The relations ${\mathfrak p}_p'={\mathfrak p}_p p^{e_p(\ell)-e_p(k)}$
and ${\mathfrak p}_p'={\mathfrak p}_p p^{-e_p(k)}$ above, are basically the 
homomorphism $\widetilde \Phi _{k\ell}$.
The relation $x_p'=x_p$ is the inverse of a restriction of $\Phi _{k\ell}$.
The $\Phi _{k\ell}$ is not invertible because it is not injective, but here we use a restriction of $\Phi _{k\ell}$ which is invertible.

It is easily seen that 
\begin{eqnarray}
\chi (x{\mathfrak p})=\chi (x'{\mathfrak p}').
\end{eqnarray}
The embeddings ${\bf E}_{k\ell}$ are compatible:
\begin{eqnarray}
k|\ell |m\;\;\rightarrow\;\;{\bf E}_{\ell m} \circ {\bf E}_{k\ell}={\bf E}_{km}.
\end{eqnarray}

\begin{remark}
As in remark \ref{re}, we note here that the map $x_p'=x_p$ is different from the map ${\mathfrak p}_p'={\mathfrak p}_p p^{-e_p(k)}$
(the factor $p^{-e_p(k)}$). We comment on this again in remark \ref{last}.
\end{remark}

\section{The system $\Sigma[{\mathbb Z}_p, ({\mathbb Q}_p/{\mathbb Z}_p)]$}

\subsection{Locally constant functions and functions with compact support}

\begin{definition}
Let $f_p(a_p)$ be a complex function of $a_p\in {\mathbb Q}_p$.
\begin{itemize}
\item[(1)]
$f_p(a_p)$ is locally constant with degree $n$, if $f_p(a_p+b_p)=f_p(a_p)$ for all $|b_p|_p\le p^{-n}$.
\item[(2)]
$f_p(a_p)$ has compact  support with degree $k$, if $f_p(a_p)=0$ for all $|a_p|_p> p^k$.
\end{itemize}
\end{definition}
\begin{remark}
A function $f_p(a_p)$ which is locally constant with degree $n$ and has compact  support with degree $k$,
is effectively defined on $p^{-k}{\mathbb Z}_p/p^{n}{\mathbb Z}_p\cong {\mathbb Z}(p^{n+k})$.
\end{remark}

\subsection{Integrals}
In integrals of functions over ${\mathbb Q}_p$ we use the Haar measure, normalized as:
\begin{eqnarray}\label{86}
\int _{{\mathbb Z}_p}da_p=1. 
\end{eqnarray}
The integral  over ${\mathbb Q}_p$, of a function $f_p(a_p)$
which is locally constant with degree $n$ and has compact  support with degree $k$,
is given by
\begin{eqnarray}\label{877}
\int _{{\mathbb Q}_p}f_p(a_p)da_p=p^{-n}
\sum f_p({\overline a}_{-k}p^{-k}+...+{\overline a}_{n-1}p^{n-1}).
\end{eqnarray}
The sum is here finite and it is over all ${\overline a}_{-k},...,{\overline a}_{n-1}$.
If we truncate the sum at $n'>n$, we get the same result, because 
the function is locally constant with degree $n$.
If we truncate the sum at $k'>k$, we get the same result because the function 
has constant  support with degree $k$.

\paragraph*{Change of variables:}

Let $f_p(a_p)$ be a function which is locally constant with degree $n$ and has compact  support with degree $k$.
We express $\lambda \in {\mathbb Z}^+$ as $\lambda=p^r\lambda _1$, where $p$ and $\lambda _1$ are coprime, and 
$|\lambda |_p=p^{-r}$.
Then the function $f'_p(a_p)=f_p(\lambda a_p)$ is
locally constant with degree $n-r$ and has compact  support with degree $k+r$. From this, and the relation
\begin{eqnarray}
\int _{{\mathbb Q}_p}f_p(a_p)da_p=\int _{{\mathbb Q}_p}f_p'(a_p)d(\lambda a_p),
\end{eqnarray}
follows that
\begin{eqnarray}\label{cha1}
d(\lambda a_p)=|\lambda |_p da_p=p^{-r} da_p;\;\;\;\;\;\;a_p\in {\mathbb Q}_p.
\end{eqnarray}
If $\lambda, p$ are coprime then $d(\lambda a_p)=da_p$. 

\paragraph*{Integrals over ${\mathbb Q}_p/{\mathbb Z}_p$:} 
Let $F_p({\mathfrak p}_p)$ be a complex function of ${\mathfrak p}_p \in {\mathbb Q}_p/{\mathbb Z}_p$, which has compact support with degree $k$.
Then
\begin{eqnarray}\label{89}
\int _{{\mathbb Q}_p/{\mathbb Z}_p}F_p({\mathfrak p}_p)d{\mathfrak p}_p=\sum
F_p({\overline {\mathfrak p}}_{-k}p^{-k}+{\overline {\mathfrak p}}_{-k+1}p^{-k+1}+...
+{\overline {\mathfrak p}}_{-1}p^{-1}).
\end{eqnarray}
The sum is finite and it is over all ${\overline {\mathfrak p}}_{-k},...,{\overline {\mathfrak p}}_{n-1}$.
The counting measure is used here. 
The function $F_p({\mathfrak p}_p)$ can be regarded 
as a functions $F_p(u_p)$ over ${\mathbb Q}_p$ which is
periodic:
\begin{eqnarray}\label{www10}
F_p(u_p+1)=F_p(u_p);\;\;\;\;\;\;u_p\in {\mathbb Q}_p. 
\end{eqnarray}
We write $u_p={\mathfrak p}_p+x_p$ where 
${\mathfrak p}_p \in {\mathbb Q}_p/{\mathbb Z}_p$ and $x_p \in {\mathbb Z}_p$.
Here we represent the coset ${\mathfrak p}_p$ with the element that has zero integer part.
Then the $F_p(u_p)$ does not depend on $x_p$ and integration of $F_p(u_p)$ over ${\mathbb Q}_p$ is given by
\begin{eqnarray}\label{n7}
\int _{{\mathbb Q}_p} du_p F_p(u_p)=
\int _{{\mathbb Q}_p/{\mathbb Z}_p}d{\mathfrak p}_p \int _{{\mathbb Z}_p}dx_p\;F_p({\mathfrak p}_p)=
\int _{{\mathbb Q}_p/{\mathbb Z}_p}d{\mathfrak p}_pF_p({\mathfrak p}_p).
\end{eqnarray}
The counting measure for integration over ${\mathbb Q}_p/{\mathbb Z}_p$ ensures that this relation holds.
Eq.(\ref{n7}) can be regarded as a Weil transform \cite{N3} which in the present context, takes functions from 
${\mathbb Q}_p$ to ${\mathbb Q}_p/{\mathbb Z}_p$.

\paragraph*{Change of variables:}
As above a change of variables is performed with the relation 
\begin{eqnarray}
d(\lambda {\mathfrak a}_p)=|\lambda |_p d{\mathfrak a}_p;\;\;\;\;\;{\mathfrak a}_p\in {\mathbb Q}_p/{\mathbb Z}_p;\;\;\;\;\;\lambda \in {\mathbb Z}^+
\end{eqnarray}
The following formula for the case $p=2$, is needed later: 
\begin{eqnarray}
2^{-n}\int _{{\mathbb Q}_2/2^{-n}{\mathbb Z}_2}d{\mathfrak a}_2F(2^n{\mathfrak a}_2)=\int _{{\mathbb Q}_2/{\mathbb Z}_2}d(2^n{\mathfrak a}_2)F(2^n{\mathfrak a}_2)=
\int _{{\mathbb Q}_2/{\mathbb Z}_2}d{\mathfrak a}_2'F({\mathfrak a}_2')
\end{eqnarray}
Here the ${\mathfrak a}_2$ takes values in ${\mathbb Q}_2/2^{-n}{\mathbb Z}_2$, and therefore the
$2^n{\mathfrak a}_2$ takes values in ${\mathbb Q}_2/{\mathbb Z}_2$.
We rewrite this relation in a way that it is valid for any $p$ as
\begin{eqnarray}\label{567}
|2^{n}|_p\int _{{\mathbb Q}_p/|2^{n}|_p{\mathbb Z}_p}d{\mathfrak a}_pF(2^n{\mathfrak a}_p)=\int _{{\mathbb Q}_p/{\mathbb Z}_p}d(2^n{\mathfrak a}_p)F(2^n{\mathfrak a}_p)=
\int _{{\mathbb Q}_p/{\mathbb Z}_p}d{\mathfrak a}_p'F({\mathfrak a}_p')
\end{eqnarray}
We recall here that for $p\ne 2$, we have $|2^{n}|_p=1$ and also $d(2^n{\mathfrak a}_p)=d({\mathfrak a}_p)$.

\paragraph*{Delta functions:}
In calculations it is useful to have delta functions.
$\delta _p(x_p)$ where $x_p\in {\mathbb Z}_p$ is a generalized function
(it is not locally constant).
Generalized functions in the present context are discussed in \cite{b7}.
We note that 
\begin{eqnarray}\label{44}
\delta_p(\lambda x_p)=\frac{\delta_p(x_p)}{|\lambda |_p};\;\;\;\;\;\lambda \in {\mathbb Z}^+.
\end{eqnarray}
We also introduce the function $\Delta _p({\mathfrak p}_p)$ where
${\mathfrak p}_p\in {\mathbb Q}_p/{\mathbb Z}_p$, such that $\Delta _p(0)=1$
and $\Delta _p({\mathfrak p}_p)=0$ if ${\mathfrak p}_p\ne 0$.
This is not a generalized function.
Then
\begin{eqnarray}
\int _{{\mathbb Z}_p} dx_pf _p(x_p)\delta_p(x_p-a_p)=f_p(a_p);\;\;\;\;\;
\int _{{\mathbb Q}_p/{\mathbb Z}_p} d{\mathfrak p}_pF _p({\mathfrak p}_p) \Delta _p({\mathfrak p}_p-{\mathfrak a}_p)
=F _p({\mathfrak a}_p).
\end{eqnarray}
Also
\begin{eqnarray}\label{44}
\int _{{\mathbb Z}_p} dx_p\chi _p(x_p{\mathfrak p}_p)=\Delta _p({\mathfrak p}_p);\;\;\;\;\;
\int _{{\mathbb Q}_p/{\mathbb Z}_p} d{\mathfrak p}_p\chi_p(x_p{\mathfrak p}_p)
=\delta_p(x_p).
\end{eqnarray}

\subsection{Fourier transforms}

The Fourier transform of the complex function $f_p(x_p)$ where $x_p\in {\mathbb Q}_p$, is given by
\begin{eqnarray}\label{qa1}
[{\mathfrak F}_pf_p](y_p)={\widetilde f}_p(y_p)=\int _{{\mathbb Q}_p}dx_p
\chi _p(-x_py_p)f_p(x_p);\;\;\;\;\;y_p\in {\mathbb Q}_p.
\end{eqnarray} 
\begin{proposition}
If $f_p(x_p)$ is locally constant with degree $n$ then its Fourier transform
${\widetilde f}_p(y_p)$ has compact support with the same degree $n$.
Also, if $f_p(x_p)$ has compact support with degree $k$ then 
${\widetilde f}_p(y_p)$  is locally constant with the same degree $k$.
\end{proposition}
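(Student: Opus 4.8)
The plan is to exploit the translation invariance of the Haar measure on ${\mathbb Q}_p$ together with the multiplicativity of the additive character $\chi _p$. Both halves of the statement reduce to tracking when $\chi _p$ applied to a product of $p$-adic numbers is trivial, and this is governed entirely by the valuation: $\chi _p(c_p)=1$ precisely when $c_p\in {\mathbb Z}_p$, i.e. when ${\rm ord}(c_p)\ge 0$. I would record this elementary fact first and then apply it in the two directions.

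For the first statement, suppose $f_p$ is locally constant with degree $n$, so that $f_p(x_p+b_p)=f_p(x_p)$ for every $b_p\in p^n{\mathbb Z}_p$. Substituting $x_p\mapsto x_p+b_p$ in Eq.(\ref{qa1}) and using invariance of the measure, I would obtain the functional equation
\begin{eqnarray}
{\widetilde f}_p(y_p)=\chi _p(-b_py_p)\,{\widetilde f}_p(y_p);\;\;\;\;b_p\in p^n{\mathbb Z}_p.
\end{eqnarray}
Hence ${\widetilde f}_p(y_p)=0$ as soon as some admissible $b_p$ makes $\chi _p(-b_py_p)\ne 1$. If $|y_p|_p>p^n$, i.e. ${\rm ord}(y_p)\le -n-1$, the choice $b_p=p^n$ gives ${\rm ord}(p^ny_p)\le -1$, so $p^ny_p\notin {\mathbb Z}_p$ and $\chi _p(-p^ny_p)\ne 1$. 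This forces ${\widetilde f}_p(y_p)=0$ for all $|y_p|_p>p^n$, which is exactly compact support with degree $n$.

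For the second statement, suppose $f_p$ has compact support with degree $k$, so $f_p$ is supported on $\{|x_p|_p\le p^k\}=p^{-k}{\mathbb Z}_p$. For $c_p\in p^k{\mathbb Z}_p$ I would write, using multiplicativity of $\chi _p$,
\begin{eqnarray}
{\widetilde f}_p(y_p+c_p)=\int _{{\mathbb Q}_p}dx_p\,\chi _p(-x_py_p)\chi _p(-x_pc_p)f_p(x_p).
\end{eqnarray}
On the support one has ${\rm ord}(x_p)\ge -k$ and ${\rm ord}(c_p)\ge k$, so ${\rm ord}(x_pc_p)\ge 0$, giving $x_pc_p\in {\mathbb Z}_p$ and $\chi _p(-x_pc_p)=1$. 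Thus the extra factor is identically $1$ on the support and ${\widetilde f}_p(y_p+c_p)={\widetilde f}_p(y_p)$ for all $|c_p|_p\le p^{-k}$, i.e. ${\widetilde f}_p$ is locally constant with degree $k$.

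The only delicate point is the sharpness of the degree: the valuation bookkeeping must show not merely that $\chi _p$ is trivial or nontrivial past some threshold, but that the threshold is \emph{exactly} $n$ (resp. $k$). In the first part this is why $b_p=p^n$ is chosen and paired against the first $y_p$ lying outside $\{|y_p|_p\le p^n\}$; in the second part it is the exact balance ${\rm ord}(x_p)+{\rm ord}(c_p)\ge 0$ at the edge of the support. I expect no genuine obstacle beyond this careful matching of exponents, since the integral is a finite sum by Eq.(\ref{877}) and every manipulation above is legitimate termwise.
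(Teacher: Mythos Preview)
Your proof is correct and follows essentially the same route as the paper: both extract the identity $(1-\chi_p(-b_py_p))\,\widetilde f_p(y_p)=0$ for $b_p\in p^n{\mathbb Z}_p$ and then choose $b_p=p^n$ to force $\widetilde f_p(y_p)=0$ whenever $|y_p|_p>p^n$. The only difference is packaging: the paper reaches this identity via the inverse Fourier representation of $f_p(x_p+a_p)-f_p(x_p)$ together with injectivity of the transform, whereas you obtain it more directly by shifting the integration variable in the forward integral and using translation invariance of the Haar measure; your second half also correctly fills in what the paper leaves as ``in a similar way''.
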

\begin{proof}
If the function $f_p(x_p)$ is locally constant
with degree $n$, then for all $|\alpha _p|_p\le p^{-n}$ we get
\begin{eqnarray}
f_p(x_p+a_p)-f_p(x_p)=0\;\;\rightarrow\;\;
\int _{{\mathbb Q}_p/{\mathbb Z}_p}
d{\mathfrak p}_p\;\chi _p(x_p{\mathfrak p}_p){\tilde f}_p({\mathfrak p}_p)[1-\chi _p (\alpha _p{\mathfrak p}_p)]=0.
\end{eqnarray}
Here the Fourier transform of ${\tilde f}_p({\mathfrak p}_p)[1-\chi _p(\alpha _p{\mathfrak p}_p)]$ is zero, and therefore
${\tilde f}_p({\mathfrak p}_p)[1-\chi _p(\alpha _p{\mathfrak p}_p)]=0$.
But for $|\alpha _p|_p\le p^{-n}$ and $|{\mathfrak p}_p|> p^n$ the $1-\chi _p(\alpha _p{\mathfrak p}_p)\ne 0$
and therefore ${\tilde f}_p({\mathfrak p}_p)=0$.
This proves that the function ${\tilde f}_p({\mathfrak p}_p)$ has compact support
with degree $n$.

In a similar way we prove the second part of the proposition.

\end{proof}
Therefore, if $f_p(x_p)$ is locally constant with degree $n$
and has compact support with degree $k$ (in which case it can be regarded as a function 
on $p^{-k}{\mathbb Z}_p/p^{n}{\mathbb Z}_p\cong {\mathbb Z}(p^{n+k})$),
its Fourier transform ${\widetilde f}_p(y_p)$  is locally constant with degree $k$
and has compact support with degree $n$ (in which case it can be regarded as a function 
on $p^{-n}{\mathbb Z}_p/p^{k}{\mathbb Z}_p\cong {\mathbb Z}(p^{n+k})$).

\subsection{The Schwartz-Bruhat space ${\mathfrak B}[{\mathbb Z}_p, ({\mathbb Q}_p/{\mathbb Z}_p)]$}\label{L1}

We consider a quantum system in which the position variable $x_p$ takes values in ${\mathbb Z}_p$  
and the momentum variable ${\mathfrak p}_p$ takes values in the Pontryagin dual group 
${\mathbb Q}_p/{\mathbb Z}_p$. Its wavefunctions belong to the
Schwartz-Bruhat space ${\mathfrak B}[{\mathbb Z}_p, ({\mathbb Q}_p/{\mathbb Z}_p)]$ \cite{b1,b2,b3}, which 
is defined as follows:

\begin{definition}\label{def3}
The Schwartz-Bruhat space ${\mathfrak B}[{\mathbb Z}_p, ({\mathbb Q}_p/{\mathbb Z}_p)]$
can be defined by one of the following two ways, which are related through a Fourier transform, and are equivalent to each other:
\begin{itemize}
\item[(1)]
It consists of locally constant complex functions  $f_p(x_p)$ (where $x_p\in {\mathbb Z}_p$).
The fact that these functions are defined in ${\mathbb Z}_p$ implies that they also
have compact support with degree $k\le 0$.
The scalar product is given by
\begin{eqnarray}\label{qa1}
(f_p,g_p)=\int_{{\mathbb Z}_p}[f_p(x_p)]^*g_p(x_p)dx_p.
\end{eqnarray}  
\item[(2)]
It consists of complex functions  with compact support $F_p({\mathfrak p}_p)$ 
(where ${\mathfrak p}_p\in  {\mathbb Q}_p/{\mathbb Z}_p$).
The fact that these functions are defined in ${\mathbb Q}_p/{\mathbb Z}_p$ implies that 
they are also locally constant with degree $n\le 0$.
The scalar product of two such functions is given by
\begin{eqnarray}\label{qa2}
(F_p,G_p)=
\int _ {{\mathbb Q}_p/{\mathbb Z}_p}[F_p({\mathfrak p}_p)]^*
G_p({\mathfrak p}_p)d{\mathfrak p}_p.
\end{eqnarray} 

\end{itemize}
\end{definition}

The Fourier transform is given by 
\begin{eqnarray}\label{qa3}
({\mathfrak F}_pf_p)({\mathfrak p}_p)={\widetilde f}_p({\mathfrak p}_p) =\int _{{\mathbb Z}_p}dx_pf_p(x_p)\chi _p(-x_p{\mathfrak p}_p).
\end{eqnarray} 
The inverse Fourier transform is 
\begin{eqnarray}\label{qa4}
({\mathfrak F}_p^{-1}\widetilde f_p)(x_p)=f_p(x_p)=\int _{{\mathbb Q}_p/{\mathbb Z}_p}d{\mathfrak p}_p{\widetilde f}_p({\mathfrak p}_p)\chi _p(x_p{\mathfrak p}_p).
\end{eqnarray} 

According to Parceval's theorem $(f_p,g_p)=({\widetilde f}_p, {\widetilde g}_p)$. Also ${\mathfrak F}_p^4={\bf 1}$.

\subsubsection{Time evolution}

In general, time evolution can be described with unitary operators ${\cal U}(x_p,y_p;t)$:
\begin{eqnarray}\label{ev1}
\int _{{\mathbb Z}_p}dy_p\;{\cal U}(x_p,y_p;t)[{\cal U}(x'_p,y_p;t)]^*=\delta _p(x_p-x'_p).
\end{eqnarray}
Here $t\in {\mathbb R}$ is the time and $x_p,x'_p\in {\mathbb Z}_p$.
Then the wavefunction $f_p(x_p)$ evolves in time as 
\begin{eqnarray}\label{ev2}
f_p(x_p;t)=\int _{{\mathbb Z}_p}dy_p\;{\cal U}(x_p,y_p;t)f_p(y_p).
\end{eqnarray}  
In ref.\cite{r11} we have studied in the present context, evolution operators 
which are analogous to $[\exp(i\alpha \hat x)\exp(i\beta \hat p)]^t$
in the harmonic oscillator.

\subsection{The Heisenberg-Weyl group ${\bf HW}[({\mathbb Q}_p/{\mathbb Z}_p),
{\mathbb Z}_p, ({\mathbb Q}_p/{\mathbb Z}_p)]$}

The phase space of this system is ${\mathbb Z}_p \times ({\mathbb Q}_p/{\mathbb Z}_p)$.
The displacement operators $D_p({\mathfrak a}_p,b_p,{\mathfrak c}_p)$ 
act on the wavefunctions $f_p(x_p)$ and $F_p({\mathfrak p}_p)$, as follows:
\begin{eqnarray}\label{450}      
&&[D_p({\mathfrak a}_p,b_p,{\mathfrak c}_p)f_p](x_p)=\chi _p\left ({\mathfrak c}_p-{\mathfrak a}_pb_p+
2{\mathfrak a}_px_p\right )f_p(x_p-b_p)\nonumber\\
&&[D_p({\mathfrak a}_p,b_p,{\mathfrak c}_p)F_p]({\mathfrak p}_p)=
\chi _p\left ({\mathfrak c}_p+{\mathfrak a}_pb_p-b_p{\mathfrak p}_p\right )
F_p({\mathfrak p}_p-2{\mathfrak a}_p)
\nonumber\\
&&x_p,b_p\in {\mathbb Z}_p;\;\;\;\;\;{\mathfrak p}_p, {\mathfrak a}_p, {\mathfrak c}_p \in
{\mathbb Q}_p/{\mathbb Z}_p
\end{eqnarray}
It is easily proved that these two relations are equivalent to each other, and therefore they can be used as 
definition of the displacement operators.
It is also seen that 
\begin{eqnarray}\label{fth}
D_p({\mathfrak a}_p+1,b_p,{\mathfrak c}_p)=D_p({\mathfrak a}_p,b_p,{\mathfrak c}_p)
=D_p({\mathfrak a}_p,b_p,{\mathfrak c}_p+1),
\end{eqnarray}
and this is consistent with the fact that ${\mathfrak a}_p, {\mathfrak c}_p \in {\mathbb Q}_p/{\mathbb Z}_p$.

We next prove the multiplication rule
\begin{eqnarray}\label{137}
D_p({\mathfrak a}_p,b_p,{\mathfrak c}_p)D_p({\mathfrak a}'_p,b'_p,{\mathfrak c}'_p)=
D_p({\mathfrak a}_p+{\mathfrak a}'_p,b_p+b'_p,{\mathfrak c}_p+{\mathfrak c}'_p+({\mathfrak a}_pb'_p-{\mathfrak a}'_pb_p))
\end{eqnarray} 
Therefore these displacement operators form a representation 
of the Heisenberg-Weyl group, for which we use the notation
${\bf HW}({\mathbb Q}_p/{\mathbb Z}_p, {\mathbb Z}_p, {\mathbb Q}_p/{\mathbb Z}_p)$
which indicates clearly where the three variables belong.
It is easily seen that
\begin{eqnarray}\label{33}     
[D_p({\mathfrak a}_p,b_p,{\mathfrak c}_p)]^{\dagger}=D_p(-{\mathfrak a}_p,-b_p,-{\mathfrak c}_p);\;\;\;\;\;\;
D_p({\mathfrak a}_p,b_p,{\mathfrak c}_p)[D_p({\mathfrak a}_p,b_p,{\mathfrak c}_p)]^{\dagger}={\bf 1}
\end{eqnarray}
The following are subgroups of ${\bf HW}({\mathbb Q}_p/{\mathbb Z}_p, {\mathbb Z}_p, {\mathbb Q}_p/{\mathbb Z}_p)$:
\begin{eqnarray}\label{30}     
&&{\bf HW}_1({\mathbb Q}_p/{\mathbb Z}_p)=\{D_p({\mathfrak a}_p,0,0)\;|\;{\mathfrak a}_p \in {\mathbb Q}_p/{\mathbb Z}_p\}
\cong {\mathbb Q}_p/{\mathbb Z}_p\nonumber\\
&&{\bf HW}_2({\mathbb Z}_p)=\{D_p(0,b_p,0)\;|\;b_p\in {\mathbb Z}_p\}\cong {\mathbb Z}_p\nonumber\\
&&{\bf HW}_3({\mathbb Q}_p/{\mathbb Z}_p)=\{D_p(0,0,{\mathfrak c}_p)\;|\;{\mathfrak c}_p \in {\mathbb Q}_p/{\mathbb Z}_p\}\cong {\mathbb Q}_p/{\mathbb Z}_p.
\end{eqnarray}
${\bf HW}_2({\mathbb Z}_p)$ is a profinite group.
${\bf HW}_1({\mathbb Q}_p/{\mathbb Z}_p)$ and ${\bf HW}_3({\mathbb Q}_p/{\mathbb Z}_p)$ are discrete groups.
${\bf HW}_3({\mathbb Q}_p/{\mathbb Z}_p)$ is a normal subgroup of ${\bf HW}({\mathbb Q}_p/{\mathbb Z}_p, {\mathbb Z}_p, {\mathbb Q}_p/{\mathbb Z}_p)$.
\begin{proposition}\label{local}
${\bf HW}[({\mathbb Q}_p/{\mathbb Z}_p), {\mathbb Z}_p, ({\mathbb Q}_p/{\mathbb Z}_p)]$ is a locally compact topological group. 
\end{proposition}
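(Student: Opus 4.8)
The plan is to realise this Heisenberg-Weyl group as a topological product carrying a twisted (cocycle) multiplication, and then to check that the product topology makes the group law and inversion continuous while leaving the underlying space locally compact. First I would fix the underlying set. By Eq.(\ref{fth}) the operator $D_p({\mathfrak a}_p,b_p,{\mathfrak c}_p)$ depends on ${\mathfrak a}_p$ and ${\mathfrak c}_p$ only through their classes in ${\mathbb Q}_p/{\mathbb Z}_p$, so the displacement operators are parametrised faithfully by $X=({\mathbb Q}_p/{\mathbb Z}_p)\times {\mathbb Z}_p\times ({\mathbb Q}_p/{\mathbb Z}_p)$. I would give $X$ the product topology, in which ${\mathbb Z}_p$ carries its profinite (equivalently $p$-adic) topology and each factor ${\mathbb Q}_p/{\mathbb Z}_p$ carries the discrete topology, the latter being correct since ${\mathbb Q}_p/{\mathbb Z}_p$ is the Pontryagin dual of the compact group ${\mathbb Z}_p$ and is therefore discrete.

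Next I would dispose of the point-set side. Since ${\mathbb Z}_p$ is profinite, it is compact Hausdorff and hence locally compact, and each discrete factor ${\mathbb Q}_p/{\mathbb Z}_p$ is trivially locally compact Hausdorff. A finite product of locally compact Hausdorff spaces is locally compact Hausdorff, so $X$ is locally compact Hausdorff. It then remains only to verify continuity of the group operations. Inversion is handled at once by Eq.(\ref{33}), which gives the map $({\mathfrak a}_p,b_p,{\mathfrak c}_p)\mapsto(-{\mathfrak a}_p,-b_p,-{\mathfrak c}_p)$; this is continuous because negation is continuous in each topological-group factor.

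For multiplication I would read off the components from Eq.(\ref{137}): the first two, ${\mathfrak a}_p+{\mathfrak a}_p'$ and $b_p+b_p'$, are continuous because ${\mathbb Q}_p/{\mathbb Z}_p$ and ${\mathbb Z}_p$ are topological groups, and the third, ${\mathfrak c}_p+{\mathfrak c}_p'+({\mathfrak a}_pb_p'-{\mathfrak a}_p'b_p)$, reduces the problem to the single nontrivial ingredient, namely the bilinear pairing $({\mathbb Q}_p/{\mathbb Z}_p)\times {\mathbb Z}_p\to {\mathbb Q}_p/{\mathbb Z}_p$, $({\mathfrak a}_p,b_p)\mapsto {\mathfrak a}_pb_p$. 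The main obstacle is precisely the continuity of this cocycle term. I would argue that a class ${\mathfrak a}_p\in {\mathbb Q}_p/{\mathbb Z}_p$ has a representative of the form $\sum_{\nu=\nu_0}^{-1}{\overline a}_\nu p^\nu$ with $\nu_0\ge -n$ for some integer $n\ge 0$ (as in Eq.(\ref{81})); consequently ${\mathfrak a}_pb_p\bmod {\mathbb Z}_p$ depends on $b_p$ only modulo $p^{\,n}{\mathbb Z}_p$. Hence for each fixed ${\mathfrak a}_p$ the map $b_p\mapsto {\mathfrak a}_pb_p$ is locally constant, i.e. continuous from the profinite ${\mathbb Z}_p$ into the discrete ${\mathbb Q}_p/{\mathbb Z}_p$.

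Finally I would assemble joint continuity. Because the ${\mathfrak a}_p$-factor is discrete, a map out of a product one of whose factors is discrete is continuous as soon as it is continuous in the remaining variable for each value of the discrete factor; applying this to the pairing above makes it jointly continuous, and therefore the cocycle term and the full multiplication map $X\times X\to X$ are continuous. Combining this with the continuity of inversion shows that $X$ is a topological group, and since its underlying space was shown to be locally compact Hausdorff, ${\bf HW}[({\mathbb Q}_p/{\mathbb Z}_p),{\mathbb Z}_p,({\mathbb Q}_p/{\mathbb Z}_p)]$ is a locally compact topological group, as claimed. I expect the only genuinely delicate point to be the local-constancy argument for the pairing, which is exactly where the boundedness of denominators in ${\mathbb Q}_p/{\mathbb Z}_p$ against the profinite topology of ${\mathbb Z}_p$ is used.
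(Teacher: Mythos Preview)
Your argument is correct and actually more self-contained than the paper's. The paper takes a different route: rather than verifying continuity of the group law directly, it invokes the extension structure. It observes that ${\bf HW}_3({\mathbb Q}_p/{\mathbb Z}_p)$ is a normal subgroup with
\[
\frac{{\bf HW}[({\mathbb Q}_p/{\mathbb Z}_p),{\mathbb Z}_p,({\mathbb Q}_p/{\mathbb Z}_p)]}{{\bf HW}_3({\mathbb Q}_p/{\mathbb Z}_p)}\ \cong\ {\bf HW}_1({\mathbb Q}_p/{\mathbb Z}_p)\times {\bf HW}_2({\mathbb Z}_p),
\]
notes that the quotient (discrete $\times$ profinite) and the kernel (discrete) are each locally compact, and concludes that the full group is locally compact. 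This is quicker, but it tacitly presupposes that a group topology on the total space has already been fixed and that ${\bf HW}_3$ is closed in it---exactly the point you handle explicitly by writing down the product topology and checking the cocycle. What your approach buys is that the only nontrivial step, the joint continuity of $({\mathfrak a}_p,b_p)\mapsto {\mathfrak a}_pb_p$, is isolated and proved from first principles (bounded denominators in ${\mathbb Q}_p/{\mathbb Z}_p$ against the filtration $p^n{\mathbb Z}_p$); what the paper's approach buys is brevity and a template that transfers verbatim to the ${\bf HW}[({\mathbb Q}/{\mathbb Z}),\widehat{\mathbb Z},({\mathbb Q}/{\mathbb Z})]$ case.
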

\begin{proof}
The ${\bf HW}_1({\mathbb Q}_p/{\mathbb Z}_p)$ is a discrete topological group and therefore it is locally compact.
${\bf HW}_2({\mathbb Z}_p)\cong {\mathbb Z}_p$ is a profinite group and therefore it is a compact topological group.
Therefore the ${\bf HW}_1({\mathbb Q}_p/{\mathbb Z}_p)\times {\bf HW}_2({\mathbb Z}_p)$ 
with the product topology, is a locally compact topological group.

The ${\bf HW}_3({\mathbb Q}_p/{\mathbb Z}_p)$ is a normal subgroup of ${\bf HW}({\mathbb Q}_p/{\mathbb Z}_p, {\mathbb Z}_p, {\mathbb Q}_p/{\mathbb Z}_p)$
and
\begin{eqnarray}
\frac{{\bf HW}({\mathbb Q}_p/{\mathbb Z}_p, {\mathbb Z}_p, {\mathbb Q}_p/{\mathbb Z}_p)}{{\bf HW}_3({\mathbb Q}_p/{\mathbb Z}_p)} \cong 
{\bf HW}_1({\mathbb Q}_p/{\mathbb Z}_p)\times {\bf HW}_2({\mathbb Z}_p)
\end{eqnarray}
From the fact that both ${\bf HW}_1({\mathbb Q}_p/{\mathbb Z}_p)\times {\bf HW}_2({\mathbb Z}_p)$ and 
${\bf HW}_3({\mathbb Q}_p/{\mathbb Z}_p)$ are locally compact topological groups, follows that 
${\bf HW}({\mathbb Q}_p/{\mathbb Z}_p, {\mathbb Z}_p, {\mathbb Q}_p/{\mathbb Z}_p)$ is a locally compact topological group.
\end{proof} 
\begin{definition}
${\mathfrak d}_2(r_2)$ is the generalized function
\begin{eqnarray}
{\mathfrak d}_2(r_2)=\int _{{\mathbb Q}_2/{\mathbb Z}_2}d{\mathfrak p}_2 \chi _2(r_2{\mathfrak p}_2)
;\;\;\;\;r_2\in 2^{-1}{\mathbb Z}_2.
\end{eqnarray}
If ${\rm ord}(r_2)\ge 0$ (i.e., $r_2\in {\mathbb Z}_2$) then this function is $\delta _2(r_2)$.
If ${\rm ord}(r_2)=-1$ then
\begin{eqnarray}
{\mathfrak d}_2(r_2)=\int _{{\mathbb Q}_2/{\mathbb Z}_2}d{\mathfrak p}_2 \chi _2(r_2'{\mathfrak p}_2)
\chi _2(2^{-1}{\mathfrak p}_2);\;\;\;r_2=2^{-1}+r_2';\;\;\;\;r_2'\in {\mathbb Z}_2.
\end{eqnarray}
\end{definition}
\begin{definition}
Let $f_2(x_2)$ be a function in  ${\mathfrak B}[{\mathbb Z}_2, ({\mathbb Q}_2/{\mathbb Z}_2)]$, 
where $x_2\in {\mathbb Z}_2$.
Its transform ${\widehat f}_2(2^{-1}y_2)$ (where $y_2\in {\mathbb Z}_2$ and therefore
$2^{-1}y_2\in 2^{-1}{\mathbb Z}_2$) is defined as
\begin{eqnarray}\label{126}
{\widehat f}_2(2^{-1}y_2)=\int _{{\mathbb Z}_2}dx_2\;f_2(x_2) {\mathfrak d}_2(2^{-1}y_2-x_2);\;\;\;\;
y_2\in {\mathbb Z}_2
\end{eqnarray}
\end{definition}
\begin{remark}\label{rema}
If ${\rm ord}(y_2)\ge 1$ then ${\widehat f}_2(2^{-1}y_2)=f_2(2^{-1}y_2)$.
If ${\rm ord}(y_2)=0$ then
\begin{eqnarray}\label{997}
{\widehat f}_2(2^{-1}y_2)=\int _{{\mathbb Q}_2/{\mathbb Z}_2}d{\mathfrak p}_2 \chi _2(2^{-1}y_2{\mathfrak p}_2)\widetilde f_2({\mathfrak p}_2)
\end{eqnarray}
\end{remark}
\begin{definition}
An operator $\theta$ acts on a function $f_p(x_p)$ as
\begin{eqnarray}
(\theta f_p)(x_p)=\int _{{\mathbb Z}_p}dy_p \theta (x_p,y_p) f_p(y_p)
\end{eqnarray}
It also acts on its Fourier transform $\widetilde f_p({\mathfrak p}_p)$ as
\begin{eqnarray}
(\theta \widetilde f_p)({\mathfrak p}_p)
=\int _{{\mathbb Q}_p/{\mathbb Z}_p}d{\mathfrak p}'_p\widetilde \theta ({\mathfrak p}_p,{\mathfrak p}'_p)
\widetilde f_p({\mathfrak p}'_p) 
\end{eqnarray}
where
\begin{eqnarray}
\widetilde \theta ({\mathfrak p}_p,{\mathfrak p}'_p)
=\int _{{\mathbb Z}_p}dx_p \int _{{\mathbb Z}_p}dy_p \theta (x_p,y_p)
\chi_p[x_p({\mathfrak p}_p-{\mathfrak p}'_p)].
\end{eqnarray}
The trace of $\theta$ is given by
\begin{eqnarray}
{\rm tr}\theta=\int _{{\mathbb Z}_p}dy_p \theta (x_p,x_p)=
\int _{{\mathbb Q}_p/{\mathbb Z}_p}d{\mathfrak p}'_p\widetilde \theta ({\mathfrak p}_p,{\mathfrak p}_p)
\end{eqnarray}
\end{definition}
The following lemma gives a property of the integrands for $p=2$, in the integrals in
the proposition below. 
\begin{lemma}\label{lemma}
Let
\begin{eqnarray}
&&S({\mathfrak a}_2,b_2)=D_2({\mathfrak a}_2, b_2,0)\;\theta \;[D_2({\mathfrak a}_2, b_2,0)]^{\dagger}\nonumber\\
&&R({\mathfrak a}_2,b_2)=D_2({\mathfrak a}_2, b_2,0){\rm tr} [D_2(-{\mathfrak a}_2, -b_2,0)\theta]
\end{eqnarray}
where $\theta$ is a trace class operator.
Then
\begin{itemize}
\item[(1)]
\begin{eqnarray}\label{evenodd}
&&{\rm ord}(b_2)\ge 1\;\;\rightarrow\;\;D_2({\mathfrak a}_2+2^{-1},b_2,{\mathfrak c}_2)=D_2({\mathfrak a}_2,b_2,{\mathfrak c}_2)\nonumber\\
&&{\rm ord}(b_2)=0\;\;\rightarrow\;\;D_2({\mathfrak a}_2+2^{-1},b_2,{\mathfrak c}_2)=-D_2({\mathfrak a}_2,b_2,{\mathfrak c}_2).
\end{eqnarray}
\item[(2)]
\begin{eqnarray}
S({\mathfrak a}_2+2^{-1},b_2)=S({\mathfrak a}_2+2^{-1},b_2);\;\;\;\;\;R({\mathfrak a}_2+2^{-1},b_2)=R({\mathfrak a}_2+2^{-1},b_2)
\end{eqnarray}
\end{itemize}
\end{lemma}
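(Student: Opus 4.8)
The plan is to obtain part (2) as an immediate corollary of part (1), so the only real computation lies in the sign rule of part (1). Throughout I would work in the position representation, i.e. the first line of Eq.~(\ref{450}), and I read part (2) as the (evidently intended) invariance statements $S({\mathfrak a}_2+2^{-1},b_2)=S({\mathfrak a}_2,b_2)$ and $R({\mathfrak a}_2+2^{-1},b_2)=R({\mathfrak a}_2,b_2)$.

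For part (1), I substitute ${\mathfrak a}_2 \to {\mathfrak a}_2 + 2^{-1}$ into the phase $\chi_2({\mathfrak c}_2 - {\mathfrak a}_2 b_2 + 2{\mathfrak a}_2 x_2)$; the translation factor $f_2(x_2-b_2)$ is untouched, and since $\chi_2$ is a character the shift multiplies $D_2({\mathfrak a}_2,b_2,{\mathfrak c}_2)$ by two extra factors. The term $-{\mathfrak a}_2 b_2$ contributes $\chi_2(-2^{-1}b_2)$, and the term $2{\mathfrak a}_2 x_2$ contributes $\chi_2(2\cdot 2^{-1}x_2)=\chi_2(x_2)$. Because $x_2\in{\mathbb Z}_2$ represents $0$ in ${\mathbb Q}_2/{\mathbb Z}_2$, this last factor is $1$, so the entire shift is controlled by $\chi_2(-2^{-1}b_2)$. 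Evaluating it: if ${\rm ord}(b_2)\ge 1$ then $2^{-1}b_2\in{\mathbb Z}_2$ and the factor is $1$; if ${\rm ord}(b_2)=0$ then $b_2$ is a $2$-adic unit, $2^{-1}b_2$ has fractional part $2^{-1}$, and $\chi_2(-2^{-1}b_2)=\exp(-i\pi)=-1$. These are exactly the two cases of Eq.~(\ref{evenodd}).

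Write $\epsilon=\epsilon(b_2)\in\{+1,-1\}$ for this sign, with $\epsilon=+1$ iff ${\rm ord}(b_2)\ge 1$. Then part (2) is purely algebraic. In $S$ the displacement enters once as $D_2$ and once as $D_2^{\dagger}$, so the factor $\epsilon$ and its conjugate multiply to $|\epsilon|^2=1$, giving $S({\mathfrak a}_2+2^{-1},b_2)=S({\mathfrak a}_2,b_2)$. For $R$ I would first record that in ${\mathbb Q}_2/{\mathbb Z}_2$ one has $-2^{-1}=2^{-1}$, so shifting ${\mathfrak a}_2\to{\mathfrak a}_2+2^{-1}$ also shifts the first argument of the trace factor $D_2(-{\mathfrak a}_2,-b_2,0)$ from $-{\mathfrak a}_2$ to $-{\mathfrak a}_2-2^{-1}=-{\mathfrak a}_2+2^{-1}$. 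Applying part (1) to both displacement factors, and using ${\rm ord}(-b_2)={\rm ord}(b_2)$ so that the \emph{same} sign $\epsilon$ appears in each, produces an overall factor $\epsilon^2=1$, hence $R({\mathfrak a}_2+2^{-1},b_2)=R({\mathfrak a}_2,b_2)$.

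The argument is essentially routine bookkeeping; I expect no genuine obstacle, only two points that must be handled with care. First, in part (1) one must notice that the position variable $x_2\in{\mathbb Z}_2$ contributes trivially to the character, so that the $2^{-1}$-shift collapses to the single factor $\chi_2(-2^{-1}b_2)$ whose value is dictated by the parity of $b_2$. Second, in the $R$-case the cancellation relies on the identity $-2^{-1}=2^{-1}$ in ${\mathbb Q}_2/{\mathbb Z}_2$ together with the invariance of ${\rm ord}$ under negation, which is precisely what forces the two sign factors to coincide and their product to be $1$. The main discipline throughout is tracking which quantities live in ${\mathbb Z}_2$ and which in ${\mathbb Q}_2/{\mathbb Z}_2$.
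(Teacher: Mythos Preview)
Your argument is correct and is exactly the straightforward computation the paper has in mind; the paper itself simply writes ``The proof is straightforward'' and gives no further details. Your reading of part (2) as the invariance statements $S({\mathfrak a}_2+2^{-1},b_2)=S({\mathfrak a}_2,b_2)$ and $R({\mathfrak a}_2+2^{-1},b_2)=R({\mathfrak a}_2,b_2)$ is the intended one, and your handling of the two subtle points (triviality of $\chi_2(x_2)$ for $x_2\in{\mathbb Z}_2$, and $-2^{-1}=2^{-1}$ in ${\mathbb Q}_2/{\mathbb Z}_2$) is exactly what is needed.
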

The proof is straightforward.
In view of this lemma, it is not surprising that integration of ${\mathfrak a}_p$ below is over ${\mathbb Q}_p/|2|_p{\mathbb Z}_p$, i.e.,
for $p\ne 2$ it is over ${\mathbb Q}_p/{\mathbb Z}_p$ and for $p=2$ it is over ${\mathbb Q}_2/2^{-1}{\mathbb Z}_2$.
\begin{proposition}\label{di}
\mbox{}
\begin{itemize}
\item[(1)]
For any trace class operator $\theta$ acting on ${\mathfrak B}[{\mathbb Z}_p, ({\mathbb Q}_p/{\mathbb Z}_p)]$
\begin{eqnarray}\label{150AA}
|2|_p\int _{{\mathbb Q}_p/|2|_p{\mathbb Z}_p}d{\mathfrak a}_p\;\int _{{\mathbb Z}_p}db_p\;
D_p({\mathfrak a}_p, b_p,0)\;\theta \;[D_p({\mathfrak a}_p, b_p,0)]^{\dagger}={\bf 1}{\rm tr}\theta.
\end{eqnarray}
\item[(2)]
A trace class operator $\theta$ acting on ${\mathfrak B}[{\mathbb Z}_p, ({\mathbb Q}_p/{\mathbb Z}_p)]$, 
can be expanded in terms of displacement operators, as
\begin{eqnarray}\label{XAA}
\theta=|2|_p\int _{{\mathbb Q}_p/|2|_p{\mathbb Z}_p}d{\mathfrak a}_p\;\int _{{\mathbb Z}_p}db_p\;
D_p({\mathfrak a}_p, b_p,0){\rm tr} [D_p(-{\mathfrak a}_p, -b_p,0)\theta]
\end{eqnarray}
\item[(3)]
Let $A_p({\mathfrak a}_p)$ and $B_p(b_p)$ be the `marginal operators'
\begin{eqnarray}\label{t950}
A_p({\mathfrak a}_p)=\int _{{\mathbb Z}_p}db_p\;D_p({\mathfrak a}_p, b_p,0);\;\;\;\;\;
B_p(b_p)=|2|_p\int _{{\mathbb Q}_p/|2|_p{\mathbb Z}_p}d{\mathfrak a}_p\;D_p({\mathfrak a}_p, b_p,0).
\end{eqnarray}
For any $\widetilde g_p({\mathfrak p}_p),\widetilde f_p({\mathfrak p}_p)$ in 
${\mathfrak B}[{\mathbb Z}_p, ({\mathbb Q}_p/{\mathbb Z}_p)]$,  
\begin{itemize}
\item[(i)]
\begin{eqnarray}\label{951cs}
\left (\widetilde g_p,A_p({\mathfrak a}_p)\widetilde f_p\right )=[{\widetilde g}_p({\mathfrak a}_p)]^*\;
{\widetilde f}_p(-{\mathfrak a}_p);\;\;\;\;{\mathfrak a}_p\in {\mathbb Z}_p.
\end{eqnarray}
\item[(ii)]
\begin{eqnarray}\label{VV1}
&&p\ne 2\;\;\rightarrow\;\;\left (\widetilde g_p,B_p(b_p)\widetilde f_p\right )=[g_p\left(2^{-1}b_p\right)]^*\;f_p(-2^{-1}b_p);\;\;\;\;2^{-1}{b}_p\in {\mathbb Z}_p\nonumber\\
&&p=2\;\;\rightarrow\;\;\left (\widetilde g_2,B_2(b_2)\widetilde f_2\right )=[\widehat g_2\left(2^{-1}b_2\right)]^*\;\widehat f_2(-2^{-1}b_2)
\end{eqnarray}
The transformation of Eq.(\ref{126}), is used here and if ${\rm ord}(b_2) \ge 1$ then $\widehat f_2(-2^{-1}b_2)=f_2(-2^{-1}b_2)$
and also $\widehat g_2\left(2^{-1}b_2\right)=g_2\left(2^{-1}b_2\right)$.
\end{itemize}
\end{itemize}
\end{proposition}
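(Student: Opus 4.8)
The plan is to reduce all three parts to explicit kernel (or momentum-space) computations, using the closed-form action of the displacement operators in Eq.~(\ref{450}), the character-orthogonality integrals in Eq.~(\ref{44}), and the measure-scaling rule of Eq.~(\ref{567}). For part (1) I would work in the position representation. Reading off Eq.~(\ref{450}) with ${\mathfrak c}_p=0$, the operator $D_p({\mathfrak a}_p,b_p,0)$ has kernel $\chi_p(-{\mathfrak a}_pb_p+2{\mathfrak a}_px_p)\,\delta_p(x_p-b_p-y_p)$, and by Eq.~(\ref{33}) its adjoint is $D_p(-{\mathfrak a}_p,-b_p,0)$. Composing, the kernel of $D_p\,\theta\,D_p^{\dagger}$ at $(x_p,x_p')$ works out to $\chi_p\!\left(2{\mathfrak a}_p(x_p-x_p')\right)\theta(x_p-b_p,x_p'-b_p)$, where $\theta(u,v)$ is the kernel of $\theta$. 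The key manoeuvre is the substitution ${\mathfrak a}_p\mapsto 2{\mathfrak a}_p$: this is precisely what converts the prefactor $|2|_p$ and the domain ${\mathbb Q}_p/|2|_p{\mathbb Z}_p$ of Eq.~(\ref{150AA}) into an ordinary integral over ${\mathbb Q}_p/{\mathbb Z}_p$ via Eq.~(\ref{567}), after which Eq.~(\ref{44}) gives $\int_{{\mathbb Q}_p/{\mathbb Z}_p}d{\mathfrak a}_p\,\chi_p({\mathfrak a}_p(x_p-x_p'))=\delta_p(x_p-x_p')$. With $x_p'$ pinned to $x_p$, the remaining $b_p$-integral is $\int_{{\mathbb Z}_p}db_p\,\theta(x_p-b_p,x_p-b_p)={\rm tr}\,\theta$ by translation invariance of the Haar measure, which is the kernel of ${\bf 1}\,{\rm tr}\,\theta$.

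For part (2) I would run the same machine once more. First compute ${\rm tr}[D_p(-{\mathfrak a}_p,-b_p,0)\theta]=\int_{{\mathbb Z}_p}dx_p\,\chi_p(-{\mathfrak a}_pb_p-2{\mathfrak a}_px_p)\,\theta(x_p+b_p,x_p)$, insert it into the right-hand side of Eq.~(\ref{XAA}), use the $\delta_p$ in the kernel of $D_p({\mathfrak a}_p,b_p,0)$ to fix $b_p=x_p-y_p$, and then carry out the ${\mathfrak a}_p$-integral (again after ${\mathfrak a}_p\mapsto 2{\mathfrak a}_p$) to produce a second $\delta_p$ that collapses the expression back to $\theta(x_p,y_p)$. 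Alternatively this is a formal consequence of the trace-orthogonality of the family $\{D_p({\mathfrak a}_p,b_p,0)\}$ underlying part (1), but the direct kernel computation is self-contained and is the route I would present.

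Part (3) is cleanest in the momentum representation, using the second line of Eq.~(\ref{450}). For (i), acting with $A_p({\mathfrak a}_p)$ on $\widetilde f_p$ leaves the $b_p$-integral $\int_{{\mathbb Z}_p}db_p\,\chi_p(b_p({\mathfrak a}_p-{\mathfrak p}_p))=\Delta_p({\mathfrak a}_p-{\mathfrak p}_p)$ from Eq.~(\ref{44}), which sets ${\mathfrak p}_p={\mathfrak a}_p$ in the scalar product and yields $[\widetilde g_p({\mathfrak a}_p)]^*\,\widetilde f_p(-{\mathfrak a}_p)$ directly. For (ii), I would write $(\widetilde g_p,B_p(b_p)\widetilde f_p)$ as a double integral over ${\mathfrak a}_p$ and ${\mathfrak p}_p$, perform ${\mathfrak a}_p\mapsto 2{\mathfrak a}_p$, and then set ${\mathfrak q}_p={\mathfrak p}_p-{\mathfrak a}_p$ (the argument of $\widetilde f_p$) to decouple the variables; the phase factorizes as $\chi_p(-2^{-1}b_p{\mathfrak p}_p)\chi_p(-2^{-1}b_p{\mathfrak q}_p)$, so the double integral splits into a product of two one-variable transforms evaluated at $\pm 2^{-1}b_p$. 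For $p\ne 2$ (and for $p=2$ with ${\rm ord}(b_2)\ge 1$) these are inverse Fourier transforms in the sense of Eq.~(\ref{qa4}) evaluated at points of ${\mathbb Z}_p$, giving $[g_p(2^{-1}b_p)]^*f_p(-2^{-1}b_p)$; for $p=2$ with ${\rm ord}(b_2)=0$ the point $2^{-1}b_2\notin{\mathbb Z}_2$ and each factor is instead the extended transform of Eq.~(\ref{997}), giving $[\widehat g_2(2^{-1}b_2)]^*\,\widehat f_2(-2^{-1}b_2)$.

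The main obstacle is bookkeeping the factor of two and the $p=2$ anomaly, rather than any deep difficulty. The substitution ${\mathfrak a}_p\mapsto 2{\mathfrak a}_p$ is what forces the unusual integration region ${\mathbb Q}_p/|2|_p{\mathbb Z}_p$, and for $p=2$ one must first check --- this is the content of Lemma~\ref{lemma} and Eq.~(\ref{evenodd}) --- that although $D_2$ changes sign under ${\mathfrak a}_2\mapsto{\mathfrak a}_2+2^{-1}$ when $b_2$ is odd, the bilinear combinations $S$ and $R$ do not, so the integrands in (1) and (2) are genuinely well defined on ${\mathbb Q}_2/2^{-1}{\mathbb Z}_2$. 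The most delicate step is (ii) for odd $b_2$: there the ordinary position wavefunction is undefined at $2^{-1}b_2$, so the generalized function ${\mathfrak d}_2$ and the transform $\widehat f_2$ of Eq.~(\ref{126}) must be handled carefully, and one must verify via Remark~\ref{rema} that the odd- and even-$b_2$ formulas join up consistently, with $\widehat f_2$ reducing to $f_2$ whenever ${\rm ord}(b_2)\ge 1$.
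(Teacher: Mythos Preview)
Your proposal is correct and follows essentially the same strategy as the paper: reduce everything to character-orthogonality integrals via Eq.~(\ref{44}) and absorb the awkward factor of two through the change of variable ${\mathfrak a}_p\mapsto 2{\mathfrak a}_p$ using Eq.~(\ref{567}). The only cosmetic difference is that for parts (1) and (2) you work in the position representation (computing the kernel of $D_p\,\theta\,D_p^{\dagger}$ at $(x_p,x_p')$ and integrating ${\mathfrak a}_p$ first to produce $\delta_p(x_p-x_p')$, then $b_p$ to produce the trace), whereas the paper works in the momentum representation (sandwiching between test functions $G_p,F_p$ and integrating $b_p$ first to produce $\Delta_p$, then ${\mathfrak a}_p$); for part (3) your route and the paper's coincide almost verbatim. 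Both orderings are equally valid and neither buys anything the other does not.
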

\begin{proof}
\mbox{}
\begin{itemize}
\item[(1)]
We act with $D_p({\mathfrak a}_p, b_p,0)\;\theta \;[D_p({\mathfrak a}_p, b_p,0)]^{\dagger}$ 
on a function
$F_p({\mathfrak p}_p)\in {\mathfrak B}[{\mathbb Z}_p, ({\mathbb Q}_p/{\mathbb Z}_p)]$ and we get
\begin{eqnarray}
|2|_p\left [D_p({\mathfrak a}_p, b_p,0)\;\theta \;[D_p({\mathfrak a}_p, b_p,0)]^{\dagger}F_p\right ]({\mathfrak p}_p)=
&&|2|_p\int _{{\mathbb Q}_p/{\mathbb Z}_p}
d{\mathfrak p}'_p\;\chi _p(2{\mathfrak a}_pb_p-{\mathfrak p}_pb_p+{\mathfrak p}_p'b_p)\;\nonumber\\
&&\times\theta ({\mathfrak p}_p-2{\mathfrak a}_p,{\mathfrak p}_p')\;
F_p({\mathfrak p}_p'+2{\mathfrak a}_p).
\end{eqnarray}
Consequently, for any function $G_p({\mathfrak p}_p)\in {\mathfrak B}[{\mathbb Z}_p, ({\mathbb Q}_p/{\mathbb Z}_p)]$ we get
\begin{eqnarray}\label{nz}
&&|2|_p\int _{{\mathbb Q}_p/|2|_p{\mathbb Z}_p}d{\mathfrak a}_p\;\int _{{\mathbb Z}_p}db_p\;
\left (G_p,D_p({\mathfrak a}_p, b_p,0)\;\theta \;[D_p({\mathfrak a}_p, b_p,0)]^{\dagger}F_p\right )\nonumber\\&&=|2|_p
\int _{{\mathbb Q}_p/|2|_p{\mathbb Z}_p}d{\mathfrak a}_p\;\int _{{\mathbb Z}_p}db_p\;
\int _{{\mathbb Q}_p/{\mathbb Z}_p}d{\mathfrak p}_p
\int _{{\mathbb Q}_p/{\mathbb Z}_p}d{\mathfrak p}_p'\;[G_p({\mathfrak p}_p)]^*\;
\chi _p(2{\mathfrak a}_pb_p-{\mathfrak p}_pb_p+{\mathfrak p}'_pb_p)\nonumber\\&&\times \theta ({\mathfrak p}_p-2{\mathfrak a}_p,{\mathfrak p}_p')\;
F_p({\mathfrak p}_p'+2{\mathfrak a}_p)\nonumber\\&&=|2|_p
\int _{{\mathbb Q}_p/|2|_p{\mathbb Z}_p}d{\mathfrak a}_p\;
\int _{{\mathbb Q}_p/{\mathbb Z}_p}d{\mathfrak p}_p
\int _{{\mathbb Q}_p/{\mathbb Z}_p}d{\mathfrak p}_p'\;[G_p({\mathfrak p}_p)]^*\;\Delta _p(2{\mathfrak a}_p-{\mathfrak p}_p+{\mathfrak p}_p')
\nonumber\\&&\times \theta ({\mathfrak p}_p-2{\mathfrak a}_p,{\mathfrak p}_p')\;
F_p({\mathfrak p}_p'+2{\mathfrak a}_p)
\end{eqnarray}
We now change the variable ${\mathfrak a}_p$ into $2{\mathfrak a}_p$ taking into account Eq.(\ref{567}). 
Then the right hand side of Eq.(\ref{nz}) gives
\begin{eqnarray}\label{ve}
&&\int _{{\mathbb Q}_p/{\mathbb Z}_p}d(2{\mathfrak a}_p)\;
\int _{{\mathbb Q}_p/{\mathbb Z}_p}d{\mathfrak p}_p
\int _{{\mathbb Q}_p/{\mathbb Z}_p}d{\mathfrak p}_p'\;[G_p({\mathfrak p}_p)]^*\;\Delta _p(2{\mathfrak a}_p-{\mathfrak p}_p+{\mathfrak p}_p')
\theta ({\mathfrak p}_p-2{\mathfrak a}_p,{\mathfrak p}_p')\;
F_p({\mathfrak p}_p'+2{\mathfrak a}_p)\nonumber\\&&=
\int _{{\mathbb Q}_p/{\mathbb Z}_p}d{\mathfrak p}_p\;[G_p({\mathfrak p}_p)]^*\;F_p({\mathfrak p}_p)
\int _{{\mathbb Q}_p/{\mathbb Z}_p}d{\mathfrak p}_p'\;
\theta ({\mathfrak p}_p',{\mathfrak p}_p')=(G_p,F_p){\rm tr}(\theta).
\end{eqnarray}
This completes the proof.

\item[(2)]

We act with $D_p(-{\mathfrak a}_p, -b_p,0)$ on the kernel $\widetilde \theta ({\mathfrak p}_p,{\mathfrak p}_p')$ of the operator $\theta$ and we get
\begin{eqnarray}
[D_p(-{\mathfrak a}_p, -b_p,0)\widetilde \theta]({\mathfrak p}_p,{\mathfrak p}_p')=
\chi _p\left ({\mathfrak a}_pb_p+{\mathfrak p}_pb_p\right )\widetilde \theta ({\mathfrak p}_p+2{\mathfrak a}_p,{\mathfrak p}_p')
\end{eqnarray}
Therefore its trace is 
\begin{eqnarray}\label{950C}
{\rm tr} [D_p(-{\mathfrak a}_p, -b_p,0)\theta]&=&
\int_{{\mathbb Q}_p/{\mathbb Z}_p}d{\mathfrak p}_p\;\chi _p\left ({\mathfrak a}_pb_p+{\mathfrak p}_pb_p\right )
\widetilde \theta ({\mathfrak p}_p+2{\mathfrak a}_p,{\mathfrak p}_p)
\end{eqnarray}
We now act with the operator on the right hand side of Eq.(\ref{XAA}) on a function 
$F_p({\mathfrak p}_p)\in {\mathfrak B}[{\mathbb Z}_p, ({\mathbb Q}_p/{\mathbb Z}_p)]$
and we get
\begin{eqnarray}\label{950D}
&&|2|_p\int _{{\mathbb Q}_p/|2|_p{\mathbb Z}_p}d{\mathfrak a}_p\;\int _{{\mathbb Z}_p}db_p\;
\int_{{\mathbb Q}_p/{\mathbb Z}_p}d{\mathfrak p}_p\;\int_{{\mathbb Q}_p/{\mathbb Z}_p}d{\mathfrak p}_p'\;
\chi _p\left ({\mathfrak a}_pb_p+{\mathfrak p}_pb_p\right )
\widetilde \theta ({\mathfrak p}_p+2{\mathfrak a}_p,{\mathfrak p}_p)\nonumber\\&&\times
\chi _p\left ({\mathfrak a}_pb_p-{\mathfrak p}_p'b_p\right )
F_p({\mathfrak p}_p'-2{\mathfrak a}_p)\nonumber\\&&=|2|_p
\int _{{\mathbb Q}_p/|2|_p{\mathbb Z}_p}d{\mathfrak a}_p\;
\int_{{\mathbb Q}_p/{\mathbb Z}_p}d{\mathfrak p}_p\;\int_{{\mathbb Q}_p/{\mathbb Z}_p}d{\mathfrak p}_p'\;
\widetilde \theta ({\mathfrak p}_p+2{\mathfrak a}_p,{\mathfrak p}_p)\;\Delta_p(2{\mathfrak a}_p+{\mathfrak p}_p-{\mathfrak p}_p')
F_p({\mathfrak p}_p'-2{\mathfrak a}_p)\nonumber\\&&=
\int_{{\mathbb Q}_p/{\mathbb Z}_p}d(2{\mathfrak a}_p)\;\int_{{\mathbb Q}_p/{\mathbb Z}_p}d{\mathfrak p}_p\;
\widetilde \theta ({\mathfrak p}_p+2{\mathfrak a}_p,{\mathfrak p}_p)F_p({\mathfrak p}_p)\nonumber\\&&=
\int_{{\mathbb Q}_p/{\mathbb Z}_p}d{\mathfrak p}_p\;\int_{{\mathbb Q}_p/{\mathbb Z}_p}d{\mathfrak p}_p''\;
\widetilde \theta ({\mathfrak p}_p'',{\mathfrak p}_p)F_p({\mathfrak p}_p)
\end{eqnarray}
Therefore the right hand side of Eq.(\ref{XAA}) is equal to the operator $\theta$.
\item[(3)]
Eq.(\ref{951cs}) is proved as follows:
\begin{eqnarray}
\int _{{\mathbb Z}_p}db_p\;(\widetilde g_p,D_p({\mathfrak a}_p, b_p,0)\widetilde f_p)
&=&\int _{{\mathbb Q}_p/{\mathbb Z}_p} d{\mathfrak p}_p
\int _{{\mathbb Z}_p}db_p\;[\widetilde g_p({\mathfrak p}_p)]^*
\chi _p\left ({\mathfrak a}_pb_p-{\mathfrak p}_pb_p\right )\widetilde f_p({\mathfrak p}_p -2{\mathfrak a}_p)\nonumber\\
&=&\int _{{\mathbb Q}_p/{\mathbb Z}_p}d{\mathfrak p}_p\;
[\widetilde g_p({\mathfrak p}_p)]^*
\Delta_p({\mathfrak a}_p-{\mathfrak p}_p)\widetilde f_p({\mathfrak p}_p-2{\mathfrak a}_p)\nonumber\\&=&
[\widetilde g({\mathfrak a}_p)]^*\;\widetilde f_p(-{\mathfrak a}_p)
\end{eqnarray}
Eq.(\ref{VV1}) is proved as follows:
\begin{eqnarray}\label{5g}
|2|_p\int _{{\mathbb Q}_p/|2|_p{\mathbb Z}_p}d{\mathfrak a}_p\;(\widetilde g_p,D_p({\mathfrak a}_p, b_p,0)\widetilde f_p)
&=&\int _ {{\mathbb Q}_p/{\mathbb Z}_p}d{\mathfrak p}_p
\int _{{\mathbb Q}_p/{\mathbb Z}_p}d(2{\mathfrak a}_p)\;[\widetilde g({\mathfrak p}_p)]^*\nonumber\\&\times &
\chi _p\left ({\mathfrak a}_pb_p-{\mathfrak p}_pb_p\right )\widetilde f_p({\mathfrak p}_p-2{\mathfrak a}_p)
\end{eqnarray}
We change the variable $2{\mathfrak a}_p$ to
${\mathfrak p}_p'={\mathfrak p}_p-2{\mathfrak a}_p$ and we get
\begin{eqnarray}\label{6g}
\int _ {{\mathbb Q}_p/{\mathbb Z}_p}d{\mathfrak p}_p\;[\widetilde g_p({\mathfrak p}_p)]^*\chi _p\left (-2^{-1}b_p{\mathfrak p}_p\right )
\int _{{\mathbb Q}_p/{\mathbb Z}_p}d{\mathfrak p}_p'\;
\chi _p\left (-2^{-1}b_p{\mathfrak p}_p'\right )\widetilde f_p({\mathfrak p}_p').
\end{eqnarray}
For $p\ne 2$, these integrals are inverse Fourier transforms and we prove the first of Eqs.(\ref{VV1}) (in this case $2^{-1}b_p\in {\mathbb Z}_p$).
For $p=2$, we take into account Eq.(\ref{997}), to prove the second of Eqs.(\ref{VV1}).

\end{itemize}
\end{proof}
\begin{remark}
In the case $p=2$, in the integrals of Eqs(\ref{150AA}), (\ref{XAA}) 
we integrate the variable ${\mathfrak a}_2$ over ${\mathbb Q}_2/2^{-1}{\mathbb Z}_2$.
It is easily seen from lemma \ref{lemma}, that if we integrate ${\mathfrak a}_2$ over ${\mathbb Q}_2/{\mathbb Z}_2$, we get the quoted result times $2$.
Also in $B_2(b_2)$ in Eq.(\ref{t950}), if we integrate ${\mathfrak a}_2$ over ${\mathbb Q}_2/{\mathbb Z}_2$,
we will get the quoted result times $2$ or zero, in the cases that ${\rm ord}(b_2)\ge 1$ or ${\rm ord}(b_2)= 0$, correspondingly.

\end{remark}
\subsubsection{Coherent states}
The set of coherent states consists of the states 
\begin{eqnarray}\label{451v}      
f_{\rm coh}(x_p|{\mathfrak a}_p,b_p)=
[D_p({\mathfrak a}_p,b_p,0)g_p](x_p);\;\;\;\;\;
{\mathfrak a}_p\in {\mathbb Q}_p/{\mathbb Z}_p;\;\;\;\;\;\;b_p\in {\mathbb Z}_p
\end{eqnarray}
where $g_p(x_p)$ is a state in ${\mathfrak B}[{\mathbb Z}_p, ({\mathbb Q}_p/{\mathbb Z}_p)]$,
which we normalize so that
$(g_p,g_p)=1$. A special case of Eq.(\ref{150AA}), is the following resolution of the identity in terms of 
coherent states:
\begin{eqnarray}\label{50A}
|2|_p\int _{{\mathbb Q}_p/|2|_p{\mathbb Z}_p}d{\mathfrak a}_p\;\int _{{\mathbb Z}_p}db_p\;
f_{\rm coh}(x_p|{\mathfrak a}_p,b_p)\;
[f_{\rm coh}(x'_p|{\mathfrak a}_p,b_p)]^*=\delta _p(x_p-x_p').
\end{eqnarray}

\subsection{Parity operators}\label{B7}

The parity operator is given by
\begin{eqnarray}\label{nmk}
&&P_p({\mathfrak a}_p,b_p)=[D_p({\mathfrak a}_p,b_p,0)]^\dagger\;{\mathfrak F}_p^2\;D_p({\mathfrak a}_p,b_p,0)=
[D_p(2{\mathfrak a}_p,2b_p,0)]^\dagger\;{\mathfrak F}_p^2
={\mathfrak F}_p^2\;D_p(2{\mathfrak a}_p,2b_p,0)\nonumber\\
&&[P_p({\mathfrak a}_p,b_p)]^2 ={\bf 1};\;\;\;\;\;[P_p({\mathfrak a}_p,b_p)]^\dagger=P_p({\mathfrak a}_p,b_p).
\end{eqnarray}
It acts on the functions $f_p(x_p)\in {\mathfrak B}[{\mathbb Z}_p, ({\mathbb Q}_p/{\mathbb Z}_p)]$,
or $\widetilde f_p({\mathfrak p}_p)\in {\mathfrak B}[{\mathbb Z}_p, ({\mathbb Q}_p/{\mathbb Z}_p)]$, as follows:
\begin{eqnarray}\label{azx}
&&P_p({\mathfrak a}_p,b_p)f_p(x_p)=\chi _p(-4{\mathfrak a}_pb_p-4{\mathfrak a}_px_p)f_p(-x_p-2b_p)\nonumber\\
&&P_p({\mathfrak a}_p,b_p)\widetilde f_p({\mathfrak p}_p)=\chi _p(4{\mathfrak a}_pb_p+2{\mathfrak p}_pb_p)
\widetilde f_p(-{\mathfrak p}_p-4{\mathfrak a}_p).
\end{eqnarray}
It is easily seen that
\begin{eqnarray}\label{3400A}     
P_p\left ({\mathfrak a}_p+\frac{1}{4},b_p\right )=P_p({\mathfrak a}_p,b_p).
\end{eqnarray}
In view of this property, it is not surprising that integration of ${\mathfrak a}_p$ below is over ${\mathbb Q}_p/|4|_p{\mathbb Z}_p$, i.e.,
for $p\ne 2$ it is over ${\mathbb Q}_p/{\mathbb Z}_p$ and for $p=2$ it is over ${\mathbb Q}_2/2^{-2}{\mathbb Z}_2$.
\begin{proposition}\label{pa}
\mbox{}
\begin{itemize}
\item[(1)]
\begin{eqnarray}\label{agg2}
P_p({\mathfrak a},b)= |2|_p\int _{{\mathbb Q}_p/|2|_p{\mathbb Z}_p}d{\mathfrak a}'_p\int _{{\mathbb Z}_p}db'_p\;
D_p({\mathfrak a}'_p,b'_p,0)
\chi _p(2{\mathfrak a}'_pb_p-2{\mathfrak a}_pb'_p)
\end{eqnarray}
\item[(2)]
For any trace class operator $\theta$ acting on ${\mathfrak B}[{\mathbb Z}_p, ({\mathbb Q}_p/{\mathbb Z}_p)]$
\begin{eqnarray}\label{a50AA}
|4|_p\int _{{\mathbb Q}_p/|4|_p{\mathbb Z}_p}d{\mathfrak a}_p\;\int _{{\mathbb Z}_p}db_p\;
P_p({\mathfrak a}_p, b_p)\;\theta \;P_p({\mathfrak a}_p, b_p)={\bf 1}{\rm tr}\theta.
\end{eqnarray}
\item[(3)]
A trace class operator $\theta$ acting on ${\mathfrak B}[{\mathbb Z}_p, ({\mathbb Q}_p/{\mathbb Z}_p)]$, 
can be expanded in terms of displacement operators, as
\begin{eqnarray}\label{a950AA}
\theta=|8|_p\int _{{\mathbb Q}_p/|4|_p{\mathbb Z}_p}d{\mathfrak a}_p\;\int _{{\mathbb Z}_p}db_p\;
P_p({\mathfrak a}_p, b_p){\rm tr} [\theta P_p({\mathfrak a}_p, b_p)]
\end{eqnarray}
\item[(4)]
Let ${\cal A}_p({\mathfrak a}_p)$ and ${\cal B}_p(b_p)$ be the `marginal operators'
\begin{eqnarray}\label{aA950}
{\cal A}_p({\mathfrak a}_p)=|2|_p\int _{{\mathbb Z}_p}db_p\;P_p({\mathfrak a}_p, b_p);\;\;\;\;\;
{\cal B}_p(b_p)=|4|_p\int _{{\mathbb Q}_p/|4|_p{\mathbb Z}_p}d{\mathfrak a}_p\;P_p({\mathfrak a}_p, b_p).
\end{eqnarray}
For any $\widetilde g_p({\mathfrak p}_p),\widetilde f_p({\mathfrak p}_p)$ in 
${\mathfrak B}[{\mathbb Z}_p, ({\mathbb Q}_p/{\mathbb Z}_p)]$, 
\begin{itemize}
\item[(i)]
\begin{eqnarray}\label{CC1}
(\widetilde g_p,{\cal A}_p({\mathfrak a}_p)\widetilde f_p)=[{\widetilde g}_p(-2{\mathfrak a}_p)]^*\;
{\widetilde f}_p(-2{\mathfrak a}_p);\;\;\;\;{\mathfrak a}_p\in {\mathbb Q}_p/{\mathbb Z}_p.
\end{eqnarray}
\item[(ii)]
\begin{eqnarray}\label{CC2}
(\widetilde g_p,{\cal B}_p(b_p)\widetilde f_p)=[\widetilde g_p(-b_p)]^*\;\widetilde f_p(-b_p);\;\;\;\;{b}_p\in {\mathbb Z}_p\label{12S1}.
\end{eqnarray}
\end{itemize}
\end{itemize}
\end{proposition}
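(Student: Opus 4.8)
The plan is to derive all four parts from the displacement-operator results of Proposition \ref{di}, together with the explicit parity action Eq.(\ref{azx}) and the defining relations Eq.(\ref{nmk}). For part (1) I would verify the claimed identity directly: acting with the right-hand side on a momentum wavefunction $\widetilde f_p({\mathfrak p}_p)$ by means of the second line of Eq.(\ref{450}) (with ${\mathfrak c}_p=0$), I first perform the $b'_p$-integral over ${\mathbb Z}_p$, which by the delta identities Eq.(\ref{44}) produces $\Delta_p({\mathfrak a}'_p-2{\mathfrak a}_p-{\mathfrak p}_p)$ and pins ${\mathfrak a}'_p=2{\mathfrak a}_p+{\mathfrak p}_p$; the remaining ${\mathfrak a}'_p$-integral then collapses and reproduces exactly the phase $\chi_p(4{\mathfrak a}_p b_p+2{\mathfrak p}_p b_p)$ and the argument $\widetilde f_p(-{\mathfrak p}_p-4{\mathfrak a}_p)$ of the parity action Eq.(\ref{azx}). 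Since that action characterizes $P_p({\mathfrak a}_p,b_p)$, the identity follows. Equivalently, one may apply the expansion of Proposition \ref{di}(2) to $\theta=P_p({\mathfrak a}_p,b_p)$, reducing the claim to the single computation ${\rm tr}[D_p(-{\mathfrak a}'_p,-b'_p,0)P_p({\mathfrak a}_p,b_p)]=\chi_p(2{\mathfrak a}'_p b_p-2{\mathfrak a}_p b'_p)$.

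For parts (2) and (3) the strategy is to substitute the expression obtained in part (1) for every occurrence of a parity operator, so that $P_p\,\theta\,P_p$ (resp. $P_p\,{\rm tr}[\theta P_p]$) becomes a fourfold integral over two displacement pairs $({\mathfrak a}'_p,b'_p)$ and $({\mathfrak a}''_p,b''_p)$. Performing the outer $({\mathfrak a}_p,b_p)$-integration, the phases factorize as $\chi_p(2b_p({\mathfrak a}'_p+{\mathfrak a}''_p))\chi_p(-2{\mathfrak a}_p(b'_p+b''_p))$; the $b_p$-integral over ${\mathbb Z}_p$ gives $\Delta_p(2{\mathfrak a}'_p+2{\mathfrak a}''_p)$ and the ${\mathfrak a}_p$-integral a companion delta in $b'_p+b''_p$, together forcing $({\mathfrak a}''_p,b''_p)=(-{\mathfrak a}'_p,-b'_p)$. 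This collapses the double displacement integral to a single one: for part (2) the surviving integrand is $D_p({\mathfrak a}'_p,b'_p,0)\theta[D_p({\mathfrak a}'_p,b'_p,0)]^\dagger$, handled by Proposition \ref{di}(1) to give ${\bf 1}\,{\rm tr}\theta$; for part (3) it is $D_p({\mathfrak a}'_p,b'_p,0)\,{\rm tr}[D_p(-{\mathfrak a}'_p,-b'_p,0)\theta]$ (using cyclicity of the trace), handled by Proposition \ref{di}(2) to recover $\theta$. Collecting the two factors $|2|_p$ from the part-(1) substitutions with the Jacobians of the delta collapses yields the stated coefficients $|4|_p$ and $|8|_p$.

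For part (4) I would compute the matrix elements directly in the momentum representation from Eq.(\ref{azx}). For ${\cal A}_p({\mathfrak a}_p)$ the $b_p$-integral over ${\mathbb Z}_p$ gives $|2|_p\Delta_p(2{\mathfrak p}_p+4{\mathfrak a}_p)$ by Eq.(\ref{44}), which pins ${\mathfrak p}_p=-2{\mathfrak a}_p$ and, after the ${\mathfrak p}_p$-integration against $[\widetilde g_p]^*$, yields Eq.(\ref{CC1}). For ${\cal B}_p(b_p)$ the change of variable ${\mathfrak q}_p=-{\mathfrak p}_p-4{\mathfrak a}_p$, which absorbs the factor $|4|_p$ into $d(4{\mathfrak a}_p)$ via Eq.(\ref{567}), turns the ${\mathfrak a}_p$-integral into an inverse Fourier transform; applying Eq.(\ref{qa4}) to each resulting factor converts the integrals into function values and gives Eq.(\ref{CC2}).

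The main obstacle throughout is the non-Archimedean bookkeeping at $p=2$. Because the parity operator only involves the doubled displacements $D_p(2{\mathfrak a}_p,2b_p,0)$ of Eq.(\ref{nmk}) and is periodic in ${\mathfrak a}_p$ with period $1/4$ (Eq.(\ref{3400A})), the integration domains ${\mathbb Q}_p/|2|_p{\mathbb Z}_p$ and ${\mathbb Q}_p/|4|_p{\mathbb Z}_p$ and the compensating factors $|2|_p,|4|_p,|8|_p$ must be tracked exactly, and the delta collapses in parts (2)--(3) acquire extra $2$-torsion solutions when $p=2$. Lemma \ref{lemma} and the change-of-variable formula Eq.(\ref{567}) are the precise tools that control these multiplicities, and getting every power of $|2|_p$ right is the delicate step.
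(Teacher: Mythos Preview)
Your approach is correct and your parts (1) and (4) coincide with the paper's. The difference lies in parts (2)--(3). The paper does \emph{not} substitute the integral representation from part (1); instead it exploits directly the structural relation $P_p({\mathfrak a}_p,b_p)={\mathfrak F}_p^2 D_p(2{\mathfrak a}_p,2b_p,0)=[D_p(2{\mathfrak a}_p,2b_p,0)]^\dagger{\mathfrak F}_p^2$ from Eq.~(\ref{nmk}). For part (2) this reduces the claim to a single line: one rewrites Eq.~(\ref{150AA}) with the doubled variables $(2{\mathfrak a}_p,2b_p)$ and then conjugates both sides by the fixed unitary ${\mathfrak F}_p^2$, using ${\mathfrak F}_p^4={\bf 1}$. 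For part (3) the paper again works directly with the explicit parity action Eq.~(\ref{azx}), computing ${\rm tr}[P_p\theta]$ as a single ${\mathfrak p}_p$-integral and then testing against an arbitrary $F_p$, rather than going through a fourfold displacement integral.

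Your route---feed part (1) into each $P_p$ and collapse via deltas to land back on Proposition~\ref{di}---is more uniform and makes the parallel with the displacement case transparent, but it pays for this with the $|2|_p$-bookkeeping you flag at the end (two extra integrations, two delta collapses, each carrying a power of $|2|_p$). The paper's ${\mathfrak F}_p^2$-conjugation sidesteps almost all of that in part (2): the only change-of-variable is the passage from $({\mathfrak a}_p,b_p)$ to $(2{\mathfrak a}_p,2b_p)$ in the domain of Eq.~(\ref{150AA}), which immediately explains why $|2|_p$ becomes $|4|_p$ and ${\mathbb Q}_p/|2|_p{\mathbb Z}_p$ becomes ${\mathbb Q}_p/|4|_p{\mathbb Z}_p$. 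So both arguments are valid; the paper's is shorter and keeps the $p=2$ accounting to a minimum, while yours is more systematic but demands the careful tracking you describe.
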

\begin{proof}
\mbox{}
\begin{itemize}
\item[(1)]
We act with the right hand side of Eq.(\ref{agg2}) on an arbitrary function 
$F_p({\mathfrak p}_p)\in {\mathfrak B}[{\mathbb Z}_p, ({\mathbb Q}_p/{\mathbb Z}_p)]$, and we get 
\begin{eqnarray}\label{agg21}
&&|2|_p\int _{{\mathbb Q}_p/|2|_p{\mathbb Z}_p}d{\mathfrak a}_p'\;
\chi _p(2{\mathfrak a}_p'b_p)F_p({\mathfrak p}_p-2{\mathfrak a}_p')
\int _{{\mathbb Z}_p}db_p'\;
\chi _p[b_p'({\mathfrak a}_p'-{\mathfrak p}_p-2{\mathfrak a}_p)]\nonumber\\&&=
\int _{{\mathbb Q}_p/{\mathbb Z}_p}d(2{\mathfrak a}_p')\;
\chi _p(2{\mathfrak a}_p'b_p)F_p({\mathfrak p}_p-2{\mathfrak a}_p')
\Delta _p({\mathfrak a}_p'-{\mathfrak p}_p-2{\mathfrak a}_p)\nonumber\\&&=
\chi _p(4b_p{\mathfrak a}_p+2b_p{\mathfrak p}_p)\;F_p(-{\mathfrak p}_p-4{\mathfrak a}_p)=P_p({\mathfrak a},b)F_p({\mathfrak p}_p).
\end{eqnarray}
\item[(2)]
We rewrite Eq.(\ref{150AA}) as
\begin{eqnarray}\label{a150AA}
|4|_p\int _{{\mathbb Q}_p/|4|_p{\mathbb Z}_p}d{\mathfrak a}_p\;\int _{{\mathbb Z}_p}db_p\;
D_p(2{\mathfrak a}_p, 2b_p,0)\;\theta \;[D_p(2{\mathfrak a}_p, 2b_p,0)]^{\dagger}={\bf 1}{\rm tr}\theta.
\end{eqnarray}
Then we multiply each side with ${\mathfrak F}_p^2$ on the left and with ${\mathfrak F}_p^2$ on the right, and we get
\begin{eqnarray}\label{a50AA}
|4|_p\int _{{\mathbb Q}_p/|4|_p{\mathbb Z}_p}d{\mathfrak a}_p\;\int _{{\mathbb Z}_p}db_p\;
P_p({\mathfrak a}_p, b_p)\;\theta \;P_p({\mathfrak a}_p, b_p)={\mathfrak F}_p^4{\rm tr}\theta={\bf 1}{\rm tr}\theta.
\end{eqnarray}  
\item[(3)]
We first point out that 
\begin{eqnarray}\label{950CC}
{\rm tr} [P_p({\mathfrak a}, b)\theta ]&=&
\int_{{\mathbb Q}_p/{\mathbb Z}_p}d{\mathfrak p}_p\int_{{\mathbb Q}_p/{\mathbb Z}_p}d{\mathfrak p}_p'\;
\chi _p\left (4{\mathfrak a}_pb_p+2{\mathfrak p}_pb_p\right )\widetilde  \theta (-{\mathfrak p}_p-4{\mathfrak a},{\mathfrak p}_p')
\Delta _p({\mathfrak p}_p+{\mathfrak p}_p'+4{\mathfrak a}_p)\nonumber\\&=&
\int_{{\mathbb Q}_p/{\mathbb Z}_p}d{\mathfrak p}_p\;
\widetilde \theta ({\mathfrak p}_p ,-{\mathfrak p}_p-4{\mathfrak a}_p)\chi _p(-4{\mathfrak a}_pb_p-2{\mathfrak p}_pb_p).
\end{eqnarray}
We now act with the operator on the right hand side of Eq.(\ref{a950AA}) on a function 
$F_p({\mathfrak p}_p)\in {\mathfrak B}[{\mathbb Z}_p, ({\mathbb Q}_p/{\mathbb Z}_p)]$, and we get
\begin{eqnarray}\label{950DD}
&&|8|_p\int _{{\mathbb Q}_p/|4|_p{\mathbb Z}_p}d{\mathfrak a}_p\;\int _{{\mathbb Z}_p}db_p\;
\int_{{\mathbb Q}_p/{\mathbb Z}_p}d{\mathfrak p}_p'\;
\widetilde \theta ({\mathfrak p}_p' ,-{\mathfrak p}_p'-4{\mathfrak a}_p)\chi _p(-4{\mathfrak a}_pb_p-2{\mathfrak p}_p'b_p)\nonumber\\&&\times
\chi (4{\mathfrak a}_pb_p+2{\mathfrak p}_pb_p)
F_p(-{\mathfrak p}_p-4{\mathfrak a}_p)\nonumber\\&&=|4|_p
\int _{{\mathbb Q}_p/|4|_p{\mathbb Z}_p}d{\mathfrak a}_p\;
\int_{{\mathbb Q}_p/{\mathbb Z}_p}d{\mathfrak p}_p'\;
\widetilde \theta ({\mathfrak p}_p' ,-{\mathfrak p}_p'-4{\mathfrak a}_p)
\Delta _p({\mathfrak p}_p-{\mathfrak p}_p')
F_p(-{\mathfrak p}_p-4{\mathfrak a}_p)\nonumber\\&&=|4|_p
\int _{{\mathbb Q}_p/|4|_p{\mathbb Z}_p}d{\mathfrak a}_p\;
\widetilde \theta ({\mathfrak p}_p ,-{\mathfrak p}_p-4{\mathfrak a}_p)
F_p(-{\mathfrak p}_p-4{\mathfrak a}_p)
=\int_{{\mathbb Q}_p/{\mathbb Z}_p}d{\mathfrak p}_p'
\widetilde \theta ({\mathfrak p}_p,{\mathfrak p}_p')F_p({\mathfrak p}_p')
\end{eqnarray}
Eq.(\ref{567}) has been used in the change of variables.
This completes the proof.
\item[(4)]
\begin{eqnarray}\label{951c}
&&|2|_p\int _{{\mathbb Z}_p}db_p\;(\widetilde g _p,P({\mathfrak a}_p, b_p)\widetilde f _p)\nonumber\\
&&=|2|_p\int _ {{\mathbb Q}_p/{\mathbb Z}_p}d{\mathfrak p}_p
\int _{{\mathbb Z}_p}db_p\;[\widetilde g({\mathfrak p}_p)]^*
\chi _p\left (4{\mathfrak a}_p b_p+2{\mathfrak p}_p b_p\right )\widetilde f_p(-{\mathfrak p}_p-4{\mathfrak a}_p)\nonumber\\
&&=\int _ {{\mathbb Q}_p/{\mathbb Z}_p}d{\mathfrak p}_p\; [\widetilde g_p({\mathfrak p}_p)]^*\;
\Delta _p(2{\mathfrak a}_p+{\mathfrak p}_p )\;\widetilde f _p(-{\mathfrak p}_p -4{\mathfrak a}_p)
=[\widetilde g _p(-2{\mathfrak a}_p]^*\;\widetilde f_p(-2{\mathfrak a}_p)
\end{eqnarray}
This proves Eq.(\ref{CC1}).

Also
\begin{eqnarray}\label{951v}
|4|_p\int _{{\mathbb Q}_p/|4|_p{\mathbb Z}_p}d{\mathfrak a}_p\;(\widetilde g_p,P({\mathfrak a}_p, b_p)\widetilde f_p)
=&&|4|_p\int _ {{\mathbb Q}_p/{\mathbb Z}_p}d{\mathfrak p}_p
\int _{{\mathbb Q}_p/|4|_p{\mathbb Z}_p}d{\mathfrak a}_p\;[g_p({\mathfrak p}_p)]^*\nonumber\\&&
\times \chi _p\left (4{\mathfrak a}_pb_p+2{\mathfrak p}_pb_p\right )F_p(-{\mathfrak p}_p-4{\mathfrak a}_p)
\end{eqnarray}
Taking into account Eq.(\ref{567}), we change variables from ${\mathfrak a}_p$ to
${\mathfrak p}_p'=-{\mathfrak p}_p-4{\mathfrak a}_p$, and we get 
\begin{eqnarray}\label{951a}
\int _ {{\mathbb Q}_p/{\mathbb Z}_p}d{\mathfrak p}_p\;[g({\mathfrak p}_p)]^*\chi _p\left (b{\mathfrak p}_p\right )
\int _{{\mathbb Q}_p/{\mathbb Z}_p}d{\mathfrak p}_p'\;
\chi _p\left (-b{\mathfrak p}_p'\right )F_p({\mathfrak p}_p')=[{\widetilde g}(-b)]^*\;{\widetilde f}(-b).
\end{eqnarray}
This proves Eq.(\ref{CC2}).

\end{itemize}
\end{proof}

\subsection{The profinite Heisenberg-Weyl group ${\bf HW}({\mathbb Z}_p,{\mathbb Z}_p,{\mathbb Z}_p)$}

We define the profinite Heisenberg-Weyl group ${\bf HW}({\mathbb Z}_p,{\mathbb Z}_p,{\mathbb Z}_p)$
as the inverse limit of the finite Heisenberg-Weyl groups ${\bf HW}[{\mathbb Z}(p^n),{\mathbb Z}(p^n),{\mathbb Z}(p^n)]$.
This is different from the ${\bf HW}[({\mathbb Q}_p/{\mathbb Z}_p), {\mathbb Z}_p, ({\mathbb Q}_p/{\mathbb Z}_p)]$
discussed earlier.

We consider the ${\bf HW}[{\mathbb Z}(p^\ell),{\mathbb Z}(p^\ell),{\mathbb Z}(p^\ell)]$ (discussed in section\ref{HW1}) as multiplicative topological groups 
with the discrete topology.
For $k\le \ell$ we define the homomorphisms
\begin{eqnarray}
\psi_{k\ell}:\;\;{\bf HW}[{\mathbb Z}(p^k),{\mathbb Z}(p^k),{\mathbb Z}(p^k)]\;\leftarrow\;\;
{\bf HW}[{\mathbb Z}(p^\ell),{\mathbb Z}(p^\ell),{\mathbb Z}(p^\ell)];\;\;\;\;\;\;k\le \ell,
\end{eqnarray} 
where
\begin{eqnarray}\label{500}
&&\psi_{k\ell}[D^{(p^\ell)}(\alpha _{p^\ell}, \beta _{p^\ell}, \gamma _{p^\ell})]
=D^{(p^k)}(\alpha _{p^k}, \beta _{p^k}, \gamma _{p^k})\nonumber\\
&&\alpha _{p^k}=\varphi_{k\ell}(\alpha _{p^\ell});\;\;\;\;\;\; 
\beta _{p^k}=\varphi_{k\ell}(\beta _{p^\ell});\;\;\;\;\;\; 
\gamma _{p^k}=\varphi_{k\ell}(\gamma _{p^\ell}).
\end{eqnarray}
The map $\varphi_{k\ell}$ has been defined in Eq.(\ref{hom1}).
The $\psi_{k\ell}$ are compatible in the sense that for $k\le \ell\le r$
we get
$\psi_{\ell r}\circ \psi_{k\ell}=\psi_{k\ell}$.
Also $\psi_{r r}={\bf 1}$ and therefore the
$\{HW[{\mathbb Z}(p^\ell),{\mathbb Z}(p^\ell),{\mathbb Z}(p^\ell)],\psi _{k\ell}\}$ is an inverse system,
whose inverse limit we denote as ${\bf HW}({\mathbb Z}_p,{\mathbb Z}_p,{\mathbb Z}_p)$.
The elements of this group are the sequences
\begin{eqnarray}\label{4m}
&&{\mathfrak D}_p(a_p,b_p,c_p)=(D^{(p)}(\alpha _{p}, \beta _{p}, \gamma _{p}), D^{(p^2)}(\alpha _{p^2}, \beta _{p^2}, \gamma _{p^2}),...)\nonumber\\
&&a_p=(\alpha _{p}, \alpha _{p^2}, ...);\;\;\;\;\;b_p=(\beta _{p}, \beta _{p^2}, ...);\;\;\;\;\;c_p=(\gamma _{p}, \gamma _{p^2}, ...)\nonumber\\
&&a_p,b_p,c_p \in {\mathbb Z}_p;\;\;\;\;\alpha _{p^k}, \beta _{p^k}, \gamma _{p^k} \in {\mathbb Z}(p^k).
\end{eqnarray} 
The $a_p,b_p,c_p$ are $p$-adic integers written here in the representation of Eq.(\ref{pt}).
Multiplication is performed componentwise, and we can prove that
\begin{eqnarray}
{\mathfrak D}_p(a_p,b_p,c_p){\mathfrak D}_p(a_p',b_p',c_p')={\mathfrak D}_p(a_p+a_p',b_p+b_p',c_p+c_p'+a_pb_p'-a_p'b_p).
\end{eqnarray} 
The subgroups $\{{\bf HW}(p^n{\mathbb Z}_p, p^n{\mathbb Z}_p, p^{n}{\mathbb Z}_p)\}$ are a fundamental system of neighbourhoods of the identity of the
profinite group ${\bf HW}({\mathbb Z}_p,{\mathbb Z}_p,{\mathbb Z}_p)$.
\begin{remark}\label{last1}
We have noted earlier in Eq.(\ref{bn1}), that the map between the $\alpha$-variables, is different from the map between the $\beta$-variables. 
This is because the $\alpha$-variables are eventually embedded into ${\mathbb Z}_p$ and the $\beta$-variables into its Pontryagin dual group
${\mathbb Q}_p/{\mathbb Z}_p$.  
In contrast, in Eq.(\ref{500}), the map between the $\alpha$-variables is the same as the map between the $\beta$-variables, 
because they are both associated with ${\mathbb Z}_p$.

\end{remark}
\begin{remark}
We consider the following subgroups of ${\bf HW}({\mathbb Z}_p,{\mathbb Z}_p,{\mathbb Z}_p)$:
\begin{eqnarray}
&&\{{\mathfrak D}_p(a_p,0,0)\;|\;a_p\in {\mathbb Z}_p\}\cong {\mathbb Z}_p\nonumber\\
&&\{{\mathfrak D}_p(0,b_p,0)\;|\;b_p\in {\mathbb Z}_p\}\cong {\mathbb Z}_p.
\end{eqnarray} 
Clearly these subgroups are {\bf not} Pontryagin dual to each other.
Therefore the ${\bf HW}({\mathbb Z}_p,{\mathbb Z}_p,{\mathbb Z}_p)$ cannot be linked with displacements of
dual variables and it is not relevant to quantum mechanics.
\end{remark}

\section{Subsystems of $\Sigma [{\mathbb Z}_p, ({\mathbb Q}_p/{\mathbb Z}_p)]$}

\subsection{Subsystems and supersystems}

Let $(G,\widetilde G)$ be a pair of groups which are Pontryagin dual to each other.
Also let $(E,\widetilde E)$ be another pair of groups which are Pontryagin dual to each other,
and in addition to that $\widetilde E$ is a subgroup of $\widetilde G$.
In this case there is a quotient relationship between $E$ and $G$ which has been discussed in section \ref{Pont}.

We consider the quantum systems $\Sigma (G,\widetilde G)$ and $\Sigma (E,\widetilde E)$.
We say that $\Sigma (E,\widetilde E)$ is a subsystem of $\Sigma (G,\widetilde G)$ (and we denote this as
$\Sigma (E,\widetilde E) \prec \Sigma (G,\widetilde G)$), or equivalently that
$\Sigma (G,\widetilde G)$ is a supersystem of $\Sigma (E,\widetilde E)$.

We explain in detail below how $\Sigma (E,\widetilde E)$ can be embedded into $\Sigma (G,\widetilde G)$.
We also consider sets of many quantum systems $\{\Sigma (G_n,{\widetilde G}_n)\}$ where the binary relation `subsystem' is a partial order.
We show how these directed partially ordered sets can become directed-complete partially ordered sets, by adding appropriate  
`top elements'.

The set $\{\Sigma (G_n,{\widetilde G}_n)\}$ can become a topological space, with the use of the partial order `subsystem'.
In this topology the `points' are the systems $\Sigma (G_n,{\widetilde G}_n)$.
Open sets contain some systems and their subsystems.
This clearly obeys the requirements that the union of any number of open sets is an open set, and that the
intersection of a finite number of open sets is an open set.
Closed sets are the complements of open sets and contain some systems and their supersystems. 

We will show that this topological space is a $T_0$ space and 
that the topology formalizes some very fundamental logical relations between quantum systems and their subsystems and supersystems.

\subsection{The subsystems $\Sigma[{\mathbb Z}(p^n),{\mathbb Z}(p^n)]$}

\begin{definition}
The subspace ${\mathfrak B}[{\mathbb Z}(p^n),{\mathbb Z}(p^{n})]$
of ${\mathfrak B}[{\mathbb Z}_p, ({\mathbb Q}_p/{\mathbb Z}_p)]$
is defined by one of the following ways, which are equivalent to each other:
\begin{itemize}
\item[(1)]
It contains functions $f_p(x_p)$ (where $x_p \in {\mathbb Z}_p$), 
which are locally constant with degree $n$.
They can be regarded as functions 
\begin{eqnarray}\label{k11}
f^{(p^n)}({\cal X} _{p^n})=f_p(x_p);\;\;\;\;
{\cal X} _{p^n}=\xi _n(x_p)\in {\mathbb Z}_p/p^{n}{\mathbb Z}_p\cong {\mathbb Z}(p^{n}), 
\end{eqnarray}
where $\xi _n(x_p)$ is given in Eq.(\ref{AA1}).
In this case the scalar product of Eq.(\ref{qa1}) reduces to
\begin{eqnarray}
(f^{(p^n)},g^{(p^n)})=\frac{1}{p^n}\sum _{{\cal X} _{p^n}\in {\mathbb Z}(p^{n})}
[f^{(p^n)}({\cal X} _{p^n})]^*g^{(p^n)}({\cal X} _{p^n}).
\end{eqnarray}  
\item[(2)]
It contains functions $F_p({\mathfrak p}_p)$ (where ${\mathfrak p}_p \in {\mathbb Q}_p/{\mathbb Z}_p$)
which have compact support with degree $n$.
They can be regarded as functions $F_p({\cal P} _{p^n})$ where ${\cal P} _{p^n}\in 
p^{-n}{\mathbb Z}_p/{\mathbb Z}_p\cong {\mathbb Z}(p^{n})$.
The relationship between ${\mathfrak p}_p$ and ${\cal P} _{p^n}$ is described by the equation
\begin{eqnarray}\label{k12}
F^{(p^n)}({\cal P} _{p^n})=F_p({\mathfrak p}_p)=F_p(p^{-n}{\cal P} _{p^n});\;\;\;\;
{\mathfrak p}_p=\widetilde \xi _n({\cal P} _{p^n})=p^{-n}{\cal P} _{p^n}
\end{eqnarray}
where $\widetilde \xi _n({\cal P} _{p^n})$ is given in Eq.(\ref{AA2}).
In this case the scalar product of Eq.(\ref{qa2}) reduces to
\begin{eqnarray}
(F^{(p^n)},G^{(p^n)})=\sum _{{\cal P} _{p^n}\in {\mathbb Z}(p^{n})}[F^{(p^n)}({\cal P} _{p^n})]^*
G^{(p^n)}({\cal P} _{p^n})
\end{eqnarray}  
\end{itemize}
This subspace describes a finite quantum system $\Sigma[{\mathbb Z}(p^n),{\mathbb Z}(p^n)]$
with positions and momenta in ${\mathbb Z}(p^n)$.
\end {definition}
\paragraph*{Fourier transform:}The restriction of the Fourier transform ${\mathfrak F}_p$ in the subspace ${\mathfrak B}[{\mathbb Z}(p^n),{\mathbb Z}(p^{n})]$
is a finite Fourier transform, which we denote as 
${\mathfrak F}^{(p^n)}$:
\begin{eqnarray}\label{qa19}
{\widetilde f}^{(p^n)}({\cal P} _{p^n})=
\left ({\mathfrak F}^{(p^n)}f^{(p^n)}\right )({\cal P} _{p^n})=\frac{1}{p^n} 
\sum _{{\cal X} _{p^n}\in {\mathbb Z}(p^n)} f^{(p^n)}({\cal X} _{p^n})
\omega _{p^n} (-{\cal X} _{p^n} {\cal P} _{p^n})
;\;\;\;\;{\cal X} _{p^n}, {\cal P} _{p^n} \in {\mathbb Z}(p^n).
\end{eqnarray}

\subsubsection{The finite Heisenberg-Weyl group ${\bf HW}[{\mathbb Z}(p^n),{\mathbb Z}(p^n),{\mathbb Z}(p^n)]$:}\label{HW1}
 
We denote as $D^{(p^n)}(\alpha _{p^n}, \beta _{p^n}, \gamma _{p^n})$ the
restriction of the displacement operators $D_p({\mathfrak a}_p,b_p,{\mathfrak c}_p)$
in the subspace ${\mathfrak B}[{\mathbb Z}(p^n),{\mathbb Z}(p^{n})]$. 
The relationship between $\beta _{p^n}$ and $b_p$ is analogous to that in Eq.(\ref{k11}), and the 
relationship between $\gamma _{p^n}$ and
${\mathfrak c}_p$ is analogous to that in Eq.(\ref{k12}). 
For ${\mathfrak a}_p$ we consider two cases.
If $p\ne 2$ then $2^{-1}$ is a $p$-adic integer, the ${\mathfrak a}_p \in p^{-n}{\mathbb Z}_p/{\mathbb Z}_p \cong {\mathbb Z}(p^n)$ and the 
relationship between $\alpha_{p^n}$ and
${\mathfrak a}_p$ is analogous to that in Eq.(\ref{k12}). 
For $p=2$, we note that $2{\mathfrak a_2}$ enters in ${\widetilde f}_2({\mathfrak p}_2-2{\mathfrak a_2})$ in
Eqs(\ref{450}). Consequently $2{\mathfrak a}_2 \in 2^{-n}{\mathbb Z}_2/{\mathbb Z}_2\cong {\mathbb Z}(2^n)$
and therefore ${\mathfrak a}_2 \in 2^{-n-1}{\mathbb Z}_2/2^{-1}{\mathbb Z}_2\cong {\mathbb Z}(2^n)$.
We summarize this as follows:
\begin{eqnarray}\label{fd} 
&&{\mathfrak c}_p \in p^{-n}{\mathbb Z}_p/{\mathbb Z}_p\cong {\mathbb Z}(p^n);\;\;\;\;\;
b_p\in {\mathbb Z}_p/p^n{\mathbb Z}_p\cong {\mathbb Z}(p^n)\nonumber\\
&&p\ne 2\;\;\rightarrow\;\;{\mathfrak a}_p\in p^{-n}{\mathbb Z}_p/{\mathbb Z}_p\cong {\mathbb Z}(p^n)\nonumber\\
&&p=2\;\;\rightarrow\;\;{\mathfrak a}_2 \in 2^{-n-1}{\mathbb Z}_2/2^{-1}{\mathbb Z}_2\cong {\mathbb Z}(2^n)
\end{eqnarray}
Then
\begin{eqnarray}\label{4xc}
&&p\ne 2\;\;\rightarrow\;\;D^{(p^n)}(\alpha _{p^n}, \beta _{p^n}, \gamma _{p^n})
=D_p\left (p^{-n}\alpha _{p^n}, \beta _{p^n}, p^{-n}\gamma _{p^n}\right )\nonumber\\
&&p=2\;\;\rightarrow\;\;D^{(2^n)}(\alpha _{2^n}, \beta _{2^n}, \gamma _{2^n})
=D_2\left (2^{-n-1}\alpha _{2^n}, \beta _{2^n}, 2^{-n}\gamma _{2^n}\right )
\end{eqnarray}
Consequently Eqs(\ref{450}) reduce to
\begin{eqnarray}\label{450me}      
&&[D^{(p^n)}(\alpha _{p^n}, \beta _{p^n}, \gamma _{p^n})f^{(p^n)}]({\cal X} _{p^n})=\omega _{p^n}\left (\gamma _{p^n}-\alpha _{p^n}\beta _{p^n}+
2\alpha _{p^n}{\cal X} _{p^n}\right )f^{(p^n)}({\cal X} _{p^n}-\beta _{p^n})\nonumber\\
&&[D^{(p^n)}(\alpha _{p^n}, \beta _{p^n}, \gamma _{p^n}){\widetilde f}^{(p^n)}]({\cal P} _{p^n})=
\omega _{p^n}\left (\gamma _{p^n}+\alpha _{p^n}\beta _{p^n}-\beta _{p^n}{\cal P} _{p^n}\right )
{\widetilde f}^{(p^n)}({\cal P} _{p^n}-2\alpha _{p^n})\nonumber\\
&&\alpha _{p^n}, \beta _{p^n}, \gamma _{p^n}, {\cal X} _{p^n}, {\cal P} _{p^n} \in {\mathbb Z}(p^n).
\end{eqnarray}
The operators $D_p^{(p^n)}(\alpha _{p^n}, \beta _{p^n}, \gamma _{p^n})$  form a group which we denote
as ${\bf HW}[{\mathbb Z}(p^n),{\mathbb Z}(p^n),{\mathbb Z}(p^n)]$.
We next use the notation
\begin{eqnarray}  
&&X=D^{(p^n)}(0,1,0)=D_p(0,1,0)\nonumber\\   
&&p\ne 2\;\;\rightarrow\;\;Z=D^{(p^n)}(2^{-1},0,0)=D_p\left (p^{-n}2^{-1},0,0\right )\nonumber\\
&&p= 2\;\;\rightarrow\;\;Z=D^{(2^n)}(1,0,0)=D_2\left (2^{-n-1},0,0\right )
\end{eqnarray}
We note that in $D^{(p^n)}(2^{-1},0,0)$ the $2^{-1}$ exists in ${\mathbb Z}(p^n)$ because $p\ne 2$.

Now Eqs(\ref{450me}) give
\begin{eqnarray}\label{finite}      
&&[Zf^{(p^n)}]({\cal X} _{p^n})=\omega _{p^n}\left ({\cal X} _{p^n}\right )f^{(p^n)}({\cal X} _{p^n});\;\;\;\;\;
[Xf^{(p^n)}]({\cal X} _{p^n})=f^{(p^n)}({\cal X} _{p^n}-1)
\nonumber\\
&&[Z{\widetilde f}^{(p^n)}]({\cal P} _{p^n})={\widetilde f}^{(p^n)}({\cal P} _{p^n}-1);\;\;\;\;\;
[X{\widetilde f}^{(p^n)}]({\cal X} _{p^n})=\omega _{p^n}\left (-{\cal P} _{p^n}\right ){\widetilde f}^{(p^n)}({\cal X} _{p^n})
\nonumber\\
&&Z^{\alpha _{p^n}}X^{\beta _{p^n}}=X^{\beta _{p^n}}Z^{\alpha _{p^n}}\omega _{p^n}\left (\alpha _{p^n}\beta _{p^n}\right );\;\;\;\;\;\;Z^{p^n}=X^{p^n}={\bf 1}.
\end{eqnarray}
These relations are valid for all values of $p$, and they are familiar in the formalism of finite quantum systems with variables in the ring ${\mathbb Z}(p^n)$
\cite{F1,F2,F3,F4,F5}.

\subsection{Embeddings and their compatibility}\label{EE1}

Using the embedding $E_{k\ell}$ of definition \ref{def1}, we define an embedding of the space
${\mathfrak B}[{\mathbb Z}(p^k),{\mathbb Z}(p^{k})]$
into the space ${\mathfrak B}[{\mathbb Z}(p^\ell),{\mathbb Z}(p^{\ell})]$.
For $k\le \ell$, we consider the linear maps
\begin{eqnarray}\label{abn1}
&&\overline E_{k\ell}:\;\;f^{(p^k)}({\cal X}_{p^k})\;\;\rightarrow \;\;f^{(p^\ell )}({\cal X} _{p^\ell});\;\;\;\;
{\cal X} _{p^\ell}={\cal X} _{p^k}\nonumber\\
&&\overline E_{k\ell}:\;\;F^{(p^k)}({\cal P} _{p^k})\;\;\rightarrow \;\;F^{(p^\ell )}({\cal P} _{p^\ell});\;\;\;\;\;
{\cal P} _{p^\ell}=p^{\ell -k}{\cal P} _{p^k}.
\end{eqnarray}
We have explained earlier that here we use the $\widetilde \varphi _{k\ell}$ and the inverse of a restriction of $\varphi _{k\ell}$
(Eqs.(\ref{hom1}),(\ref{hom2})).
Also, the ${\mathfrak B}[{\mathbb Z}(p^k),{\mathbb Z}(p^{k})]$
is embedded into ${\mathfrak B}[{\mathbb Z}_p, ({\mathbb Q}_p/{\mathbb Z}_p)]$,
with the linear maps:
\begin{eqnarray}\label{abn2}
&&\overline E_{k\infty}:\;\;f^{(p^k)}({\cal X} _{p^k})\;\;\rightarrow \;\;f_p(x_p);\;\;\;\;
x _{p}={\cal X} _{p^k}=\overline {\cal X} _0+...+\overline {\cal X} _{k-1}p^{k-1}\nonumber\\
&&\overline E_{k\infty}:\;\;F^{(p^k)}({\cal P} _{p^k})\;\;\rightarrow \;\;F_p({\mathfrak p}_p);\;\;\;\;\;
{\mathfrak p}_{p}=p^{-k}{\cal P} _{p^k}.
\end{eqnarray}
These embeddings are compatible in the sense that
\begin{eqnarray}
&&k\le \ell \le m\;\;\rightarrow\;\;\overline E_{\ell m}\circ \overline E_{k\ell}=\overline E_{km}\nonumber\\
&&\overline E_{k\ell}\circ {\mathfrak F}^{(p^k)}={\mathfrak F}^{(p^\ell)}\circ \overline E_{k\ell}.
\end{eqnarray}

\paragraph*{Embeddings of the Heisenberg-Weyl groups:}
The ${\bf HW}[{\mathbb Z}(p^k),{\mathbb Z}(p^k),{\mathbb Z}(p^k)]$
is embedded into ${\bf HW}[{\mathbb Z}(p^\ell),{\mathbb Z}(p^\ell ),{\mathbb Z}(p^\ell )]$
as follows:
\begin{eqnarray}
&&D^{(p^k)}(\alpha _{p^k}, \beta _{p^k}, \gamma _{p^k})\;\;\rightarrow\;\;
D^{(p^\ell)}(\alpha _{p^\ell}, \beta _{p^\ell}, \gamma _{p^\ell});\;\;\;\;k\le \ell\nonumber\\
&&\alpha _{p^\ell}=\alpha _{p^k};\;\;\;\;\;\;
\beta _{p^\ell}=p^{\ell -k}\beta _{p^k};\;\;\;\;\;\;
\gamma _{p^\ell}=\gamma _{p^k}.
\end{eqnarray}
Here also we use the $\widetilde \varphi _{k\ell}$ and the inverse of a restriction of $\varphi _{k\ell}$
(Eqs.(\ref{hom1}),(\ref{hom2})).

The
${\bf HW}[{\mathbb Z}(p^n),{\mathbb Z}(p^n),{\mathbb Z}(p^n)]$ is embedded
into ${\bf HW}[({\mathbb Q}_p/{\mathbb Z}_p),
{\mathbb Z}_p, ({\mathbb Q}_p/{\mathbb Z}_p)]$ by mapping
$D^{(p^n)}(\alpha _{p^n}, \beta _{p^n}, \gamma _{p^n})$ into
$D_p({\mathfrak a}_p,b_p,{\mathfrak c}_p)$ as in Eq.(\ref{4xc}).
Then we can prove the compatibility condition
\begin{eqnarray}
&&\overline E_{k\ell}\circ D^{(p^k)}(\alpha _{p^k}, \beta _{p^k}, \gamma _{p^k})=
D^{(p^\ell)}(\alpha _{p^\ell}, \beta _{p^\ell}, \gamma _{p^\ell})\circ \overline E_{k\ell}\nonumber\\
&&\overline E_{k\infty}\circ D^{(p^k)}(\alpha _{p^k}, \beta _{p^k}, \gamma _{p^k})=
D_p({\mathfrak a}_p,b_p,{\mathfrak c}_p)\circ \overline E_{k\infty}
\end{eqnarray}
Therefore the whole quantum formalism for the system $\Sigma [{\mathbb Z}(p^n),{\mathbb Z}(p^n)]$
is embedded into the quantum formalism for $\Sigma [{\mathbb Z}_p, ({\mathbb Q}_p/{\mathbb Z}_p)]$.
In this sense the quantum formalism for $\Sigma [{\mathbb Z}_p, ({\mathbb Q}_p/{\mathbb Z}_p)]$
generalizes the quantum formalism for $\Sigma [{\mathbb Z}(p^n),{\mathbb Z}(p^n)]$
by allowing arbitrarily large exponents $n$.
The Schwartz-Bruhat space (the fact that the functions have compact support and are locally constant)
ensures that the integrals converge.

\subsection{The chains $\Sigma ^{(p)}$ and $\Sigma _S^{(p)}$ as $T_0$ topological spaces}

The order between the $({\mathbb Z}(p^k),C(p^k))$ in Eq.(\ref{zc5}),
induces the order `subspace' between the $\Sigma[{\mathbb Z}(p^k),{\mathbb Z}(p^k)]$.
So the sets
\begin{eqnarray}
&&\Sigma ^{(p)}=\{\Sigma[{\mathbb Z}(p^k),{\mathbb Z}(p^{k})]|k\in {\mathbb Z}^+\}\nonumber\\
&&\Sigma _S^{(p)}=\Sigma ^{(p)}\cup \{\Sigma[{\mathbb Z}_p, ({\mathbb Q}_p/{\mathbb Z}_p)]\} 
\end{eqnarray}
with the order `subsystem' ($\prec$), are infinite chains.
The $\Sigma _S^{(p)}$ is a complete chain with 
$\Sigma[{\mathbb Z}_p, ({\mathbb Q}_p/{\mathbb Z}_p)]$ as
supremum:
\begin{eqnarray}
\Sigma[{\mathbb Z}(p),{\mathbb Z}(p)]\prec \Sigma[{\mathbb Z}(p^2),{\mathbb Z}(p^{2})]
\prec...\prec \Sigma[{\mathbb Z}_p, ({\mathbb Q}_p/{\mathbb Z}_p)].
\end{eqnarray}
The chain $\Sigma _S^{(p)}$ is order isomorphic to the chain ${\mathbb N} _S^{(p)}$
(with divisibility as an order).
Similarly, the chain $\Sigma ^{(p)}$ is order isomorphic to the chain ${\mathbb N} ^{(p)}$.

\paragraph*{The topological spaces $(\Sigma ^{(p)}, T_{\Sigma ^{(p)}})$ and $(\Sigma _S^{(p)}, T_{\Sigma _S^{(p)}})$:}
There is a bijective map between ${\mathbb N} ^{(p)}$ and $\Sigma ^{(p)}$.
Using this we can make $\Sigma ^{(p)}$ a topological space (see section \ref{topology}), and 
$(\Sigma ^{(p)}, T_{\Sigma ^{(p)}})\sim ({\mathbb N} ^{(p)}, T_{{\mathbb N} ^{(p)}})$.
The `points' in this topology are the systems $\Sigma[{\mathbb Z}(p^n),{\mathbb Z}(p^{n})]$,
and an open (resp., closed set) contains some systems and their subsystems (resp. supersystems).

Similarly, there is a bijective map between ${\mathbb N} _S^{(p)}$ and $\Sigma _S^{(p)}$:
\begin{eqnarray}
&&p^n\;\;\leftrightarrow \;\;\Sigma[{\mathbb Z}(p^n),{\mathbb Z}(p^{n})]\nonumber\\
&&p^\infty\;\;\leftrightarrow \;\;\Sigma [{\mathbb Z}_p, ({\mathbb Q}_p/{\mathbb Z}_p)]
\end{eqnarray}
Therefore, we can make $\Sigma _S^{(p)}$ a topological space and 
$(\Sigma _S^{(p)}, T_{\Sigma _S^{(p)}})\sim ({\mathbb N} _S^{(p)}, T_{{\mathbb N} _S^{(p)}})$.
According to proposition \ref{ttt}, $(\Sigma _S^{(p)}, T_{\Sigma _S^{(p)}})$ is a $T_0$-space which is compact,
and $(\Sigma ^{(p)}, T_{\Sigma ^{(p)}})$ is a $T_0$-space which is locally compact.
The physical meaning of the $T_0$ topology in this context is discussed later.

In summary, from a partial order theory point of view, the chain $\Sigma _S^{(p)}$ is complete, while the chain
$\Sigma ^{(p)}$ is not complete.
From a topological point of view, $\Sigma _S^{(p)}$ is a compact $T_0$-space, while
$\Sigma ^{(p)}$ is a locally compact $T_0$-space (which is not compact).

\subsection{Physical importance of the non-Archimedean metric and of the profinite topology}\label{a45}

The topology endowed by the $p$-adic metric, is the same as the profinite topology, and it is needed for the quantum formalism of the system
$\Sigma[{\mathbb Z}_p,({\mathbb Q}_p/{\mathbb Z}_p)]$.
The $x_p+p^n{\mathbb Z}_p$ where $n$ is a large positive integer, is a small neighbourhood around $x_p$.
Based on this we introduced functions which are locally constant in a neighbourhood of $x_p$.
The fact that ${\mathbb Z}_p$ is totally disconnected is also important for having non-trivial locally constant functions.
For example, in real numbers a locally constant function is constant, i.e., it is trivial.

We also need the non-Archimedean metric to define functions with compact support, i.e., functions which are zero at large distances.
The Schwartz-Bruhat space contains functions which are locally constant with degree $n$ and have constant support with degree $k$,
for all positive integers $n,k$.
For fixed $n,k$ these functions are effectively defined on $p^{-k}{\mathbb Z}_p/p^n{\mathbb Z}_p\cong {\mathbb Z}(p^{k+n})$ and we get the formalism of 
the system $\Sigma[{\mathbb Z}(p^{n+k}),{\mathbb Z}(p^{n+k})]$.
The Schwartz-Bruhat space contains functions with all degrees $n,k$ and in this sense the $\Sigma[{\mathbb Z}(p^\ell),{\mathbb Z}(p^\ell)]$
(for all $\ell\in {\mathbb Z}^+$) are subsystems of $\Sigma[{\mathbb Z}_p,({\mathbb Q}_p/{\mathbb Z}_p)]$.
We have used earlier partial order theory to show explicitly how smaller systems are embedded into larger systems.

\section{The system $\Sigma[{\widehat {\mathbb Z}},({\mathbb Q}/{\mathbb Z})]$}\label{B}

In this section we study a quantum system in which the position variable $x=(x_2,x_3,x_5,...)$ 
takes values in ${\widehat {\mathbb Z}}$ (where $x_p\in {\mathbb Z}_p$)
and the momentum variable ${\mathfrak p}=({\mathfrak p}_2,{\mathfrak p}_3,{\mathfrak p}_5,...)$ 
takes values in the Pontryagin dual group 
${\mathbb Q}/{\mathbb Z}$ (where ${\mathfrak p}_p\in {\mathbb Q}_p/{\mathbb Z}_p$). 
Two important concepts which are used below to describe the relationship between the quantum formalism for 
$\Sigma[{{\mathbb Z}_p},({\mathbb Q}_p/{\mathbb Z}_p)]$ and the quantum formalism for 
$\Sigma[{\widehat {\mathbb Z}},({\mathbb Q}/{\mathbb Z})]$, are the restricted tensor product of spaces and the 
restricted direct product of locally compact groups \cite{N4,b1}.

The wavefunctions in the position representation are finite linear combinations of 
products $f(x)=\prod _{p\in \Pi} f_p(x_p)$.
Integrals over ${\widehat {\mathbb Z}}$ of these functions are given by the products
\begin{eqnarray}\label{tt}
\int _ {\widehat {\mathbb Z}}f(x)dx=\prod _{p\in \Pi} \int _{{\mathbb Z}_p}f_p(x_p)dx_p.
\end{eqnarray}

The wavefunctions in the momentum representation are finite linear combinations of 
products $F({\mathfrak p})=\prod _{p\in \Pi} F_p({\mathfrak p}_p)$.
Integrals over ${\mathbb Q}/{\mathbb Z}$ of these functions are given by the products
\begin{eqnarray}\label{mn}
\int _ {{\mathbb Q}/{\mathbb Z}}F({\mathfrak p})d{\mathfrak p}
=\prod _{p\in \Pi} \int _{{\mathbb Q}_p/{\mathbb Z}_p}F_p({\mathfrak p}_p)d{\mathfrak p}_p
\end{eqnarray}
The Schwartz-Bruhat space ${\mathfrak B}[{\widehat {\mathbb Z}},{\mathbb Q}/{\mathbb Z})]$ is defined below in 
such a way so that the above integrals converge.
We first discuss some technical details.

\paragraph*{Change of variables:}
A change in the variables $x'=\lambda x$ or 
${\mathfrak p}'=\lambda {\mathfrak p}$ 
where $\lambda \in {\mathbb Z}^+$, 
is performed as follows:
\begin{eqnarray}\label{ppp10}
&&d(\lambda x)=\prod _{p\in \Pi} d(\lambda x_p)=\prod _{p\in \Pi} |\lambda |_p\; dx_p
=\frac{1}{|\lambda |_\infty }dx;\;\;\;\;\;\;x\in \widehat {\mathbb Z};\nonumber\\
&&d({\lambda \mathfrak p})=\prod _{p\in \Pi} d({\lambda \mathfrak p}_p)=\prod _{p\in \Pi} |\lambda |_p\; d{\mathfrak p}_p
=\frac{1}{|\lambda |_\infty }d{\mathfrak p};\;\;\;\;\;{\mathfrak p}\in {\mathbb Q}/{\mathbb Z}.
\end{eqnarray}
Ostrowski's theorem (in Eq.(\ref{tgb})) has been used here.

The following formula is used later: 
\begin{eqnarray}\label{pppp10}
\frac{1}{2}\int _{{\mathbb Q}/2^{-1}{\mathbb Z}}d{\mathfrak a}F(2{\mathfrak a})=\int _{{\mathbb Q}/{\mathbb Z}}d(2{\mathfrak a})F(2{\mathfrak a})=
\int _{{\mathbb Q}/{\mathbb Z}}d{\mathfrak a}'F({\mathfrak a}')
\end{eqnarray}
Here  the ${\mathfrak a}$ takes values in ${\mathbb Q}/2^{-1}{\mathbb Z}$, and therefore the
$2{\mathfrak a}$ takes values in ${\mathbb Q}/{\mathbb Z}$.

\paragraph*{Delta functions:}
Delta functions in the present context are given by
$\delta (x)=\prod \delta _p (x_p)$.
We note that 
\begin{eqnarray}
\delta (\lambda x)=\prod \delta _p (\lambda x_p)=\prod \frac{\delta _p (x_p)}{|\lambda |_p}=
|\lambda |_\infty \delta (x)
\end{eqnarray}
We also define the  $\Delta ({\mathfrak p})=\prod \Delta _p({\mathfrak p}_p)$.
This is equal to $0$ if ${\mathfrak p}\ne 0$ and it is equal to $1$ if ${\mathfrak p}=0$ 
(the zero in ${\mathbb Q}/{\mathbb Z}$ is the coset with all the integers).
Then
\begin{eqnarray}
\int _{\widehat {\mathbb Z}} dx\;f(x)\delta(x-a)=f(a);\;\;\;\;
\int _{{\mathbb Q}/{\mathbb Z}}d{\mathfrak p}\;F({\mathfrak p})\Delta ({\mathfrak p}-{\mathfrak a})=F({\mathfrak a})
\end{eqnarray}
Also
\begin{eqnarray}
\int _{\widehat {\mathbb Z}} dx\;\chi (x{\mathfrak p})=\Delta ({\mathfrak p});\;\;\;\;
\int _{{\mathbb Q}/{\mathbb Z}}d{\mathfrak p}\;\chi (x{\mathfrak p})=\delta (x).
\end{eqnarray}

\subsection{The Schwartz-Bruhat space ${\mathfrak B}[{\widehat {\mathbb Z}},({\mathbb Q}/{\mathbb Z})]$
as the restricted tensor product of 
${\mathfrak B}[{\mathbb Z}_p, ({\mathbb Q}_p/{\mathbb Z}_p)]$}\label{L2}

\begin{definition}\label{d1}
The Schwartz-Bruhat space ${\mathfrak B}[{\widehat {\mathbb Z}},({\mathbb Q}/{\mathbb Z})]$\cite{b1,b2,b3} is 
defined by one of the following two ways which are related through a Fourier transform, and are equivalent to each other:
\begin{itemize}
\item[(1)]
It consists of finite linear combinations of complex functions
$f(x)=\prod _{p\in \Pi} f_p(x_p)$ (where $x\in {\widehat {\mathbb Z}}$ and $x_p\in {\mathbb Z}_p$)
such that
\begin{itemize}
\item[(1A)] $f_p(x_p)$ are locally constant complex functions (these functions are defined 
in ${\mathbb Z}_p$ and therefore they have constant support),
\item[(1B)] $f_p(x_p)=1$ for all but a finite number of $p\in \Pi$.
\end{itemize}
The scalar product is given by
\begin{eqnarray}
(f,g)=\int _ {\widehat {\mathbb Z}}[f(x)]^*g(x)dx.
\end{eqnarray} 
\item[(2)]
It consists of finite linear combinations of complex functions 
$F({\mathfrak p})=\prod _{p\in \Pi} F_p({\mathfrak p}_p)$
(where ${\mathfrak p}\in {\mathbb Q}/{\mathbb Z}$ and ${\mathfrak p}_p \in {\mathbb Q}_p/{\mathbb Z}_p$)
such that
\begin{itemize}
\item[(2A)] $F_p({\mathfrak p}_p)$ are complex functions with compact support (these functions 
are defined in ${\mathbb Q}_p/{\mathbb Z}_p$ and therefore they are locally constant), 
\item[(2B)] $F_p({\mathfrak p}_p)=\Delta _p({\mathfrak p}_p)$ for all but a finite number of $p\in \Pi$.
\end{itemize}
The scalar product is given by
\begin{eqnarray}
(F,G)=\int _ {{\mathbb Q}/{\mathbb Z}}[F({\mathfrak p})]^*G({\mathfrak p})d{\mathfrak p}.
\end{eqnarray} 
\end{itemize}
\end{definition}

\paragraph*{The restricted tensor product:}
The Schwartz-Bruhat space ${\mathfrak B}[{\widehat {\mathbb Z}},({\mathbb Q}/{\mathbb Z})]$, is isomorphic to
the tensor product of the Schwartz-Bruhat spaces 
${\mathfrak B}[{\mathbb Z}_p,({\mathbb Q}_p/{\mathbb Z}_p)]$, subject to the restrictions (1B),(2B).
This is called restricted tensor product and it is crucial for the convergence of the integrals.
It is indicated with the usual tensor product notation with a prime.
\begin{eqnarray}
{\mathfrak B}[{\widehat {\mathbb Z}},({\mathbb Q}/{\mathbb Z})]
=\bigotimes_{p\in \Pi}'{\mathfrak B}[{\mathbb Z}_p,({\mathbb Q}_p/{\mathbb Z}_p)]
\end{eqnarray}
Although $f_p(x_p)=1$ for all but a finite number of $p\in \Pi$, the function is constant in small neighbourhoods rather than in large regions.
This is because of the topology (discussed in section \ref{nei}) where the open sets of ${\widehat {\mathbb Z}}$
are $\prod U_p$ where $U_p$ is an open sets in ${\mathbb Z}_p$, and $U_p={\mathbb Z}_p$ for all but a finite number of $p$.
In this sense the restricted tensor product needs that topology. 

The Fourier transform is given by
\begin{eqnarray}\label{qa25}
[{\mathfrak F}f]({\mathfrak p})=\widetilde f({\mathfrak p})=\int _{{\widehat {\mathbb Z}}}
dx\;\chi (-x{\mathfrak p})f(x);\;\;\;\;\;\;\;{\mathfrak p} \in {\mathbb Q}/{\mathbb Z}
\end{eqnarray}
and the inverse Fourier transform by 
\begin{eqnarray}\label{qa17}
[{\mathfrak F}^{-1}\widetilde f](x)=\int _{{\mathbb Q}/{\mathbb Z}}d{\mathfrak p}
\chi (x{\mathfrak p})f({\mathfrak p});\;\;\;\;\;s\in {\widehat {\mathbb Z}}
\end{eqnarray} 
Parceval's theorem states that $(f,g)=({\widetilde f},{\widetilde g})$.
We can also prove that ${\mathfrak F}^4={\bf 1}$.
We note that
\begin{eqnarray}
{\mathfrak F}=\bigotimes _{p\in \Pi} {\mathfrak F}_p.
\end{eqnarray}

Time evolution is described with relations analogous to Eqs.(\ref{ev1}),(\ref{ev2}).

\subsection{The Heisenberg-Weyl group 
${\bf HW}[({\mathbb Q}/{\mathbb Z}), {\widehat {\mathbb Z}}, ({\mathbb Q}/{\mathbb Z})]$
as the restricted direct product of ${\bf HW}[({\mathbb Q}_p/{\mathbb Z}_p),
{\mathbb Z}_p, ({\mathbb Q}_p/{\mathbb Z}_p)]$}\label{B3}

The displacement operators $D({\mathfrak a},b,{\mathfrak c})$ 
act on the functions $f(x)$ and $F({\mathfrak p})$ in ${\mathfrak B}[{\widehat {\mathbb Z}},({\mathbb Q}/{\mathbb Z})]$,
as follows:
\begin{eqnarray}\label{4500}      
&&[D({\mathfrak a},b,{\mathfrak c})f](x)=\chi \left ({\mathfrak c}-{\mathfrak a}b+
2{\mathfrak a}x\right )f(x-b)\nonumber\\
&&[D({\mathfrak a},b,{\mathfrak c})F]({\mathfrak p})=
\chi \left ({\mathfrak c}+{\mathfrak a}b-{\mathfrak p}b\right )F({\mathfrak p}-2{\mathfrak a})\nonumber\\
&&{\mathfrak a},{\mathfrak c}, {\mathfrak p}\in {\mathbb Q}/{\mathbb Z};\;\;\;\;\;b,x\in {\widehat {\mathbb Z}}.
\end{eqnarray}
These relations are analogous to Eqs.(\ref{450}).
We recall that in the ${\mathfrak a}=({\mathfrak a}_2,...,{\mathfrak a}_p,...)$ all but a finite number of the ${\mathfrak a}_p$ are equal to zero,
and similarly for the ${\mathfrak c}$ and ${\mathfrak p}$.
This is important in proving that the $[D({\mathfrak a},b,{\mathfrak c})f](x)$ and $[D({\mathfrak a},b,{\mathfrak c})F]({\mathfrak p})$ 
do obey the restrictions (1B) and (2B) in definition \ref{d1}, and therefore they belong to the space 
${\mathfrak B}[{\widehat {\mathbb Z}},({\mathbb Q}/{\mathbb Z})]$.

We can check that $D({\mathfrak a}+1,b,{\mathfrak c})=D({\mathfrak a},b,{\mathfrak c})=D({\mathfrak a},b,{\mathfrak c}+1)$
and this is consistent with the fact that ${\mathfrak a} \in {\mathbb Q}/{\mathbb Z}$.

The $D({\mathfrak a}, b,{\mathfrak c})$ form a representation 
of the Heisenberg-Weyl group which we denote as
${\bf HW}[({\mathbb Q}/{\mathbb Z}), {\widehat {\mathbb Z}}, ({\mathbb Q}/{\mathbb Z})]$.
The multiplication rule is analogous to Eq.(\ref{137}).
Also
\begin{eqnarray}\label{33}     
[D({\mathfrak a},b,{\mathfrak c})]^{\dagger}=D(-{\mathfrak a},-b,-{\mathfrak c});\;\;\;\;\;\;
D({\mathfrak a},b,{\mathfrak c})[D({\mathfrak a},b,{\mathfrak c})]^{\dagger}={\bf 1}.
\end{eqnarray}
In analogy with Eq.(\ref{30}), we consider the following subgroups of ${\bf HW}[({\mathbb Q}/{\mathbb Z}), {\widehat {\mathbb Z}}, ({\mathbb Q}/{\mathbb Z})]$,
\begin{eqnarray}\label{300}     
&&{\bf HW}_1({\mathbb Q}/{\mathbb Z})=\{D({\mathfrak a},0,0)\;|\;{\mathfrak a}\in {\mathbb Q}/{\mathbb Z}\}
\cong {\mathbb Q}/{\mathbb Z}\nonumber\\
&&{\bf HW}_2({\widehat {\mathbb Z}})=\{D_(0,b,0)\;|\;b\in {\widehat {\mathbb Z}}\}\cong {\widehat {\mathbb Z}}\nonumber\\
&&{\bf HW}_3({\mathbb Q}/{\mathbb Z})=\{D(0,0,{\mathfrak c})\;|\;{\mathfrak c} \in {\mathbb Q}/{\mathbb Z}\}\cong {\mathbb Q}/{\mathbb Z}.
\end{eqnarray}
\begin{definition}
Let $\{G_i\}$ be a set of locally compact groups, and $H_i$ be a compact subgroup of $G_i$.
We denote as $g_i$ the elements of $G_i$, and as $h_i$ the elements of $H_i$.
The restricted direct product of $G_i$ with respect to $H_i$ is the 
\begin{eqnarray}
\prod _{i}^{'}G_i=\{(g_1,g_2,...)\;|\;g_i=h_i\;{\rm for\; all\; but\; a \;finite \;number\; of \;the\; indices}\;i\}
\end{eqnarray}
\end{definition}
\begin{proposition}
\mbox{}
\begin{itemize}
\item[(1)]
${\bf HW}[({\mathbb Q}/{\mathbb Z}), {\widehat {\mathbb Z}}, ({\mathbb Q}/{\mathbb Z})]$ is a locally compact group.
\item[(2)]
${\bf HW}[({\mathbb Q}/{\mathbb Z}), {\widehat {\mathbb Z}}, ({\mathbb Q}/{\mathbb Z})]$ is the restricted direct product of 
${\bf HW}[({\mathbb Q}_p/{\mathbb Z}_p), {\mathbb Z}_p, ({\mathbb Q}_p/{\mathbb Z}_p)]$ with respect to
${\bf HW}_2({\mathbb Z}_p)\cong {\mathbb Z}_p$:
\begin{eqnarray}
{\bf HW}[({\mathbb Q}/{\mathbb Z}), {\widehat {\mathbb Z}}, ({\mathbb Q}/{\mathbb Z})]=\prod _{p\in \Pi}^{'}
{\bf HW}[({\mathbb Q}_p/{\mathbb Z}_p), {\mathbb Z}_p, ({\mathbb Q}_p/{\mathbb Z}_p)]
\end{eqnarray}
\end{itemize}
\end{proposition}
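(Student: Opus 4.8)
The plan is to prove part (2) first, by exhibiting the factorization of $D({\mathfrak a},b,{\mathfrak c})$ into local displacement operators, and then to deduce part (1) from the standard local compactness of restricted direct products. First I would establish the factorization. Writing ${\mathfrak a}=({\mathfrak a}_2,\dots,{\mathfrak a}_p,\dots)$, $b=(b_2,\dots,b_p,\dots)$ and ${\mathfrak c}=({\mathfrak c}_2,\dots,{\mathfrak c}_p,\dots)$, and using the factorization of characters $\chi(a{\mathfrak b})=\prod_{p\in\Pi}\chi_p(a_p{\mathfrak b}_p)$ together with the product form $f(x)=\prod_p f_p(x_p)$ of the wavefunctions in ${\mathfrak B}[{\widehat{\mathbb Z}},({\mathbb Q}/{\mathbb Z})]$, the action in Eq.(\ref{4500}) factorizes componentwise. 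Comparing with Eq.(\ref{450}) I would conclude
\begin{eqnarray}
D({\mathfrak a},b,{\mathfrak c})=\bigotimes_{p\in\Pi}D_p({\mathfrak a}_p,b_p,{\mathfrak c}_p),
\end{eqnarray}
where the tensor product acts on the restricted tensor product ${\mathfrak B}[{\widehat{\mathbb Z}},({\mathbb Q}/{\mathbb Z})]=\bigotimes_{p}'{\mathfrak B}[{\mathbb Z}_p,({\mathbb Q}_p/{\mathbb Z}_p)]$. One must check that this operator respects the restrictions (1B),(2B) of definition \ref{d1}; as already noted in the text, this holds because ${\mathfrak a},{\mathfrak c}\in{\mathbb Q}/{\mathbb Z}$ have only finitely many nonzero components.

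Next I would identify the restricted direct product structure. Since ${\mathfrak a},{\mathfrak c}\in{\mathbb Q}/{\mathbb Z}$, all but finitely many of the pairs $({\mathfrak a}_p,{\mathfrak c}_p)$ vanish, so for all but finitely many $p$ the factor reduces to $D_p(0,b_p,0)\in{\bf HW}_2({\mathbb Z}_p)\cong{\mathbb Z}_p$ (see Eq.(\ref{30})). Thus the map $D({\mathfrak a},b,{\mathfrak c})\mapsto(D_p({\mathfrak a}_p,b_p,{\mathfrak c}_p))_{p\in\Pi}$ sends the group into the restricted direct product with respect to the subgroups ${\bf HW}_2({\mathbb Z}_p)$; conversely, every tuple satisfying the restriction arises from some $({\mathfrak a},b,{\mathfrak c})$, giving a bijection. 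That it is a group isomorphism follows by applying the multiplication rule (the analogue of Eq.(\ref{137})) componentwise, since both the parameters and the cocycle ${\mathfrak a}_pb'_p-{\mathfrak a}'_pb_p$ combine componentwise. This establishes part (2).

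Finally I would deduce part (1). By Proposition \ref{local} each factor ${\bf HW}[({\mathbb Q}_p/{\mathbb Z}_p),{\mathbb Z}_p,({\mathbb Q}_p/{\mathbb Z}_p)]$ is locally compact, and each ${\bf HW}_2({\mathbb Z}_p)\cong{\mathbb Z}_p$ is compact (being profinite) and open, because $\{0\}$ is open in the discrete group ${\mathbb Q}_p/{\mathbb Z}_p$, so that the set corresponding to $\{0\}\times{\mathbb Z}_p\times\{0\}$ is open in the local group. A restricted direct product of locally compact groups with respect to compact open subgroups is locally compact; hence the isomorphism of part (2) yields part (1). The main obstacle is not any routine computation but the bookkeeping in the first two paragraphs: one must verify carefully that the tensor factorization is well defined on the \emph{restricted} tensor product and that the restriction condition on the image matches exactly the definition of the restricted direct product with respect to ${\bf HW}_2({\mathbb Z}_p)$.
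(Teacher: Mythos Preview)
Your proof of part (2) is essentially the same as the paper's: factorize the characters and the Schwartz--Bruhat functions componentwise, read off $D({\mathfrak a},b,{\mathfrak c})=\bigotimes_p D_p({\mathfrak a}_p,b_p,{\mathfrak c}_p)$, and then use that ${\mathfrak a},{\mathfrak c}\in{\mathbb Q}/{\mathbb Z}$ forces all but finitely many components $({\mathfrak a}_p,{\mathfrak c}_p)$ to vanish, so that the corresponding factors lie in the compact subgroup ${\bf HW}_2({\mathbb Z}_p)$.

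For part (1), however, you take a genuinely different route. The paper proves (1) independently of (2), by repeating the argument of Proposition~\ref{local}: ${\bf HW}_3({\mathbb Q}/{\mathbb Z})$ is a normal (discrete, hence locally compact) subgroup, the quotient is ${\bf HW}_1({\mathbb Q}/{\mathbb Z})\times{\bf HW}_2({\widehat{\mathbb Z}})$ (discrete times profinite, hence locally compact), and an extension of a locally compact group by a locally compact group is locally compact. You instead deduce (1) from (2) via the general fact that a restricted direct product of locally compact groups with respect to compact open subgroups is locally compact, after checking that ${\bf HW}_2({\mathbb Z}_p)$ is indeed open in the local factor. Both arguments are valid; yours is more economical in that it reuses (2), while the paper's keeps the two parts logically independent and parallels the earlier $p$-adic proposition directly.
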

\begin{proof}
\mbox{}
\begin{itemize}
\item[(1)]
The proof is analogous to the proof of proposition \ref{local}.
\item[(2)]
We use the notation ${\mathfrak a}=({\mathfrak a}_2,...,{\mathfrak a}_p,...)$ and a similar notation for the other variables.
Then
\begin{eqnarray}
\chi \left ({\mathfrak c}+{\mathfrak a}b-{\mathfrak p}b\right )=\prod_{p\in \Pi}\chi _p\left ({\mathfrak c}_p+{\mathfrak a}_pb_p-{\mathfrak p}_pb_p\right ). 
\end{eqnarray}
In Eq.(\ref{4500}), the function $f(x)$  is a finite linear combination of $\prod f_p(x_p)$, and the function $F({\mathfrak p})$
is a finite linear combination of $\prod F_p({\mathfrak p}_p)$.
Therefore 
\begin{eqnarray}
D({\mathfrak a},b,{\mathfrak c})=\prod D_p({\mathfrak a}_p,b_p,{\mathfrak c}_p)
\end{eqnarray}
Since ${\mathfrak a},{\mathfrak c}$ belong to ${\mathbb Q}/{\mathbb Z}$,
all but a finite number of ${\mathfrak a}_p$, ${\mathfrak c}_p$ are equal to zero.
Therefore all but a finite number of the factors in the right hand side are $D_p(0,b_p,0)$ which are elements of
compact subgroup ${\bf HW}_2({\mathbb Z}_p)\cong {\mathbb Z}_p$ of ${\bf HW}[({\mathbb Q}/{\mathbb Z}), {\widehat {\mathbb Z}}, ({\mathbb Q}/{\mathbb Z})]$.
This completes the proof.
\end{itemize}
\end{proof}
\begin{definition}
Let $f(x)=\prod _{p\in \Pi} f_p(x_p)$ (where $x\in {\widehat{\mathbb Z}}$) be a function in 
${\mathfrak B}[{\widehat{\mathbb Z}}, ({\mathbb Q}/{\mathbb Z})]$.
Its transform ${\widehat f}(2^{-1}y)$ (where $y\in {\widehat{\mathbb Z}}$)
is defined as 
\begin{eqnarray}\label{cft}
{\widehat f}(2^{-1}y)={\widehat f}_2(2^{-1}y_2)\prod _{p\in \Pi-\{2\}}f_p(x_p).
\end{eqnarray}
The ${\widehat f}_2(2^{-1}y_2)$ has been defined in Eq.(\ref{126}).
If $x\in {\widehat{\mathbb Z}}_{\rm even}$ (so that $2^{-1}x\in {\widehat{\mathbb Z}}$),
then ${\widehat f}(2^{-1}y)=f(2^{-1}y)$.
This definition also applies to finite linear combinations of factorizable functions.
\end{definition}
\begin{lemma}\label{lemma}
Let
\begin{eqnarray}
&&N({\mathfrak a},b)=D({\mathfrak a}, b,0)\;\theta \;[D({\mathfrak a}, b,0)]^{\dagger}\nonumber\\
&&M({\mathfrak a},b)=D({\mathfrak a}, b,0){\rm tr} [D(-{\mathfrak a}, -b,0)\theta]
\end{eqnarray}
where $\theta$ is a trace class operator.
Then
\begin{itemize}
\item[(1)]
\begin{eqnarray}\label{evenodd}
&&b\in {\widehat {\mathbb Z}}_{\rm even}\;\;\rightarrow\;\;D({\mathfrak a}+2^{-1},b,{\mathfrak c}_2)=D({\mathfrak a},b,{\mathfrak c})\nonumber\\
&&b\in {\widehat {\mathbb Z}}_{\rm odd}\;\;\rightarrow\;\;D({\mathfrak a}+2^{-1},b,{\mathfrak c})=-D({\mathfrak a},b,{\mathfrak c}).
\end{eqnarray}
\item[(2)]
\begin{eqnarray}
N({\mathfrak a}+2^{-1},b)=N({\mathfrak a}+2^{-1},b);\;\;\;\;\;M({\mathfrak a}+2^{-1},b)=M({\mathfrak a}+2^{-1},b)
\end{eqnarray}

\end{itemize}
\end{lemma}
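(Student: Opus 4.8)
The plan is to lift the already-established $p=2$ result to the full restricted direct product, so that essentially no new analysis is needed. The key structural fact, obtained in the proof of the preceding proposition, is that the displacement operator factorizes as $D({\mathfrak a},b,{\mathfrak c})=\prod_{p\in\Pi}D_p({\mathfrak a}_p,b_p,{\mathfrak c}_p)$. The second ingredient I would use is that the element $2^{-1}\in{\mathbb Q}/{\mathbb Z}$ lives \emph{entirely} in the $2$-adic coordinate: writing $1/2$ in the form of Eq.(\ref{468}) gives $\Pi(\lambda)=\{2\}$, so its $2$-adic component equals $2^{-1}\in{\mathbb Q}_2/{\mathbb Z}_2$ while all other components vanish. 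Hence the shift ${\mathfrak a}\mapsto{\mathfrak a}+2^{-1}$ alters only the $p=2$ factor,
$$
D({\mathfrak a}+2^{-1},b,{\mathfrak c})=D_2({\mathfrak a}_2+2^{-1},b_2,{\mathfrak c}_2)\prod_{p\neq 2}D_p({\mathfrak a}_p,b_p,{\mathfrak c}_p).
$$

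For part (1) I would then invoke the $p=2$ lemma proven earlier (in the section on $\Sigma[{\mathbb Z}_p,({\mathbb Q}_p/{\mathbb Z}_p)]$), together with the defining Eq.(\ref{467v}): the partition into ${\widehat {\mathbb Z}}_{\rm even}$ and ${\widehat {\mathbb Z}}_{\rm odd}$ depends only on ${\rm ord}(b_2)$, with $b\in{\widehat {\mathbb Z}}_{\rm even}\iff{\rm ord}(b_2)\geq 1$ and $b\in{\widehat {\mathbb Z}}_{\rm odd}\iff{\rm ord}(b_2)=0$. The local relation supplies $D_2({\mathfrak a}_2+2^{-1},b_2,{\mathfrak c}_2)=\pm D_2({\mathfrak a}_2,b_2,{\mathfrak c}_2)$, with the plus sign in the even case and minus in the odd case; reinstating the untouched product over $p\neq 2$ gives precisely the two claimed identities.

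Part (2) then follows formally, and the real substance is the cancellation of signs. (I note that the two displayed equations as printed are tautologies; the intended statements, matching the analogous $p=2$ lemma, are $N({\mathfrak a}+2^{-1},b)=N({\mathfrak a},b)$ and $M({\mathfrak a}+2^{-1},b)=M({\mathfrak a},b)$, which I would prove.) In the even case part (1) gives $D({\mathfrak a}+2^{-1},b,0)=D({\mathfrak a},b,0)$ and both identities are immediate. In the odd case $D({\mathfrak a}+2^{-1},b,0)=-D({\mathfrak a},b,0)$; since $N=D\,\theta\,D^{\dagger}$, one factor of $-1$ comes from $D$ and one from $D^{\dagger}$, and they cancel. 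For $M$ I would additionally use that $-2^{-1}=2^{-1}$ in ${\mathbb Q}/{\mathbb Z}$ (so that $D(-{\mathfrak a}-2^{-1},-b,0)=D((-{\mathfrak a})+2^{-1},-b,0)$) and that ${\rm ord}(-b_2)={\rm ord}(b_2)$ makes $-b$ have the same parity as $b$; part (1) then contributes one sign from the leading $D$ and one from the $D$ inside the trace, pulled out by linearity of ${\rm tr}$, and these two signs again cancel.

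There is essentially no obstacle here: the statement is genuinely just the transfer of the $p=2$ lemma across the restricted direct product, so the whole argument reduces to the factorization above plus careful sign bookkeeping. The only points demanding a moment's care are verifying that $2^{-1}$ is supported on the single coordinate $p=2$ and that the parity of $b$ is invariant under $b\mapsto -b$, both of which are immediate from the definitions.
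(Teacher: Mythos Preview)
Your proposal is correct. The paper itself provides no detailed argument, stating only that ``the proof is straightforward,'' so your reduction to the local $p=2$ lemma via the factorization $D({\mathfrak a},b,{\mathfrak c})=\prod_p D_p({\mathfrak a}_p,b_p,{\mathfrak c}_p)$ and the observation that $2^{-1}\in{\mathbb Q}/{\mathbb Z}$ is supported at $p=2$ is entirely in the spirit intended; you also correctly identify and handle the evident typo in part~(2).
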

The proof is straightforward.
This lemma explains why the integration of ${\mathfrak a}$ below, is over ${\mathbb Q}/2^{-1}{\mathbb Z}$.
\begin{proposition}\label{T3}
\mbox{}
\begin{itemize}
\item[(1)]
For any trace class operator $\theta$ acting on ${\mathfrak B}[{\widehat{\mathbb Z}}, ({\mathbb Q}/{\mathbb Z})]$
\begin{eqnarray}\label{95AB}
\frac{1}{2}\int _{{\mathbb Q}/2^{-1}{\mathbb Z}}d{\mathfrak a}\;\int _{\widehat {\mathbb Z}}db\;
D({\mathfrak a}, b,0)\;\theta \;[D({\mathfrak a}, b,0)]^{\dagger}={\bf 1}{\rm tr}\theta.
\end{eqnarray}
\item[(2)]
A trace class operator $\theta$ acting on ${\mathfrak B}[{\widehat{\mathbb Z}}, ({\mathbb Q}/{\mathbb Z})]$ 
can be expanded in terms of displacement operators, as
\begin{eqnarray}\label{950A}
\theta=\frac{1}{2}\int _{{\mathbb Q}/2^{-1}{\mathbb Z}}d{\mathfrak a}\;\int _{\widehat {\mathbb Z}}db\;
D({\mathfrak a}, b,0){\rm tr} [\theta D(-{\mathfrak a}, -b,0)];\;\;\;\;\;
{\mathfrak a}\in {\mathbb Q}/{\mathbb Z};\;\;\;\;\;\;b\in {\widehat {\mathbb Z}}
\end{eqnarray}
\item[(3)]
Let $A({\mathfrak a})$ and $B(b)$ be the `marginal operators'
\begin{eqnarray}\label{950}
A({\mathfrak a})=\int _{{\widehat {\mathbb Z}}}db\;D({\mathfrak a}, b,0);\;\;\;\;\;
B(b)=\frac{1}{2}\int _{{\mathbb Q}/(2^{-1}{\mathbb Z})}d{\mathfrak a}\;D({\mathfrak a}, b,0).
\end{eqnarray}
For any $\widetilde g({\mathfrak p}),\widetilde f({\mathfrak p})$ in 
${\mathfrak B}[{\widehat {\mathbb Z}}, ({\mathbb Q}/{\mathbb Z})]$, 
\begin{eqnarray}
(\widetilde g,A({\mathfrak a})\widetilde f)=[{\widetilde g}({\mathfrak a})]^*\;
{\widetilde f}(-{\mathfrak a});\;\;\;\;{\mathfrak a}\in {\mathbb Q}/{\mathbb Z},
\end{eqnarray}
and also
\begin{eqnarray}\label{d45}
(\widetilde g,B(b)\widetilde f)=[\widehat g\left(2^{-1}b\right)]^*\;\widehat f(-2^{-1}b);\;\;\;\;\;
b\in {\widehat {\mathbb Z}}.
\end{eqnarray}
The transformation of Eq.(\ref{cft}), is used here 
and if $b\in {\widehat{\mathbb Z}}_{\rm even}$ 
then ${\widehat f}(-2^{-1}b)=f(-2^{-1}b)$ and also $\widehat g\left(2^{-1}b\right)=g\left(2^{-1}b\right)$.
\end{itemize}
\end{proposition}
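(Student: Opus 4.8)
The plan is to carry over, essentially verbatim, the argument used for the $p$-adic system in proposition \ref{di}, replacing the single-prime ingredients by their $\widehat{\mathbb{Z}}$ counterparts. Concretely, the substitutions are $\chi_p \to \chi$, $\Delta_p \to \Delta$, $\delta_p \to \delta$, the integration set ${\mathbb Q}_p/|2|_p{\mathbb Z}_p \to {\mathbb Q}/2^{-1}{\mathbb Z}$, the groups ${\mathbb Z}_p \to \widehat{\mathbb{Z}}$ and ${\mathbb Q}_p/{\mathbb Z}_p \to {\mathbb Q}/{\mathbb Z}$, and the change-of-variable formula Eq.(\ref{567}) $\to$ Eq.(\ref{pppp10}). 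The prefactor $1/2$ is the global counterpart of the local $|2|_p$: by Ostrowski's theorem (Eq.(\ref{tgb})) one has $\prod_{p\in\Pi}|2|_p = 1/2$, and the restricted domain ${\mathbb Q}/2^{-1}{\mathbb Z}$ corresponds to $\mathfrak{a}_2 \in {\mathbb Q}_2/2^{-1}{\mathbb Z}_2$ together with $\mathfrak{a}_p \in {\mathbb Q}_p/{\mathbb Z}_p$ for $p\neq 2$, so that the whole identity factorizes as a restricted product of the local identities of proposition \ref{di}. This factorization is the conceptual reason the statement holds; the direct computation below is the cleanest way to present it without having to justify density of product operators.

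For part (1) I would act with $D(\mathfrak{a},b,0)\,\theta\,[D(\mathfrak{a},b,0)]^\dagger$ on a function $F(\mathfrak{p})$ in the momentum representation, using Eq.(\ref{4500}), to obtain an integral of $\chi(2\mathfrak{a}b-\mathfrak{p}b+\mathfrak{p}'b)\,\theta(\mathfrak{p}-2\mathfrak{a},\mathfrak{p}')\,F(\mathfrak{p}'+2\mathfrak{a})$ over $\mathfrak{p}'$. Pairing against an arbitrary $G$ and performing the $b$-integral over $\widehat{\mathbb{Z}}$ produces the factor $\Delta(2\mathfrak{a}-\mathfrak{p}+\mathfrak{p}')$. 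I would then substitute $\mathfrak{a}\to 2\mathfrak{a}$ via Eq.(\ref{pppp10}) (this is exactly what absorbs the prefactor $1/2$ and converts the restricted domain into the full ${\mathbb Q}/{\mathbb Z}$), whereupon the $\Delta$ forces $\mathfrak{p}'=\mathfrak{p}$, collapsing the expression to $(G,F)\,\mathrm{tr}(\theta)$. Part (2) follows the same template: first compute $\mathrm{tr}[D(-\mathfrak{a},-b,0)\theta]$ from the kernel $\widetilde\theta(\mathfrak{p},\mathfrak{p}')$, then act with the right-hand side of Eq.(\ref{950A}) on $F$, integrate $b$ and $\mathfrak{a}$ with the same delta-function and change-of-variable steps, and read off $\theta$.

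For part (3) I would evaluate the matrix elements of the marginal operators directly. For $A(\mathfrak{a})$ the $b$-integral over $\widehat{\mathbb{Z}}$ yields $\Delta(\mathfrak{a}-\mathfrak{p})$, leaving $[\widetilde g(\mathfrak{a})]^*\,\widetilde f(-\mathfrak{a})$. For $B(b)$ the $\mathfrak{a}$-integral, after the change of variable $\mathfrak{p}'={\mathfrak p}-2\mathfrak{a}$, factors into two inverse Fourier transforms evaluated at the shift $2^{-1}b$; this is where the transform $\widehat f$ of Eq.(\ref{cft}) must appear, giving Eq.(\ref{d45}).

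The only genuine obstacle is the $p=2$ component, and it is already isolated by the preceding lemma \ref{lemma}. Because $2^{-1}\notin{\mathbb Z}_2$, the shift $2^{-1}b$ need not lie in $\widehat{\mathbb{Z}}$ when $b\in\widehat{\mathbb{Z}}_{\mathrm{odd}}$, and the sign flip $D(\mathfrak{a}+2^{-1},b,\mathfrak{c})=-D(\mathfrak{a},b,\mathfrak{c})$ on the odd sector is precisely why $\mathfrak{a}$ must be integrated over the restricted set ${\mathbb Q}/2^{-1}{\mathbb Z}$ rather than ${\mathbb Q}/{\mathbb Z}$: integrating over the larger set would double the answer on the even part and annihilate it on the odd part. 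The lemma guarantees that the integrands $N(\mathfrak{a},b)$ and $M(\mathfrak{a},b)$ are genuinely ${\mathbb Q}/2^{-1}{\mathbb Z}$-periodic in $\mathfrak{a}$, so the reduced domain is the correct one, while the definition of $\widehat f$ in Eq.(\ref{cft}) supplies the correct value of the half-shifted wavefunction on the odd sector, reducing to $f(2^{-1}b)$ on $\widehat{\mathbb{Z}}_{\mathrm{even}}$. Once this $p=2$ bookkeeping is in place, every remaining manipulation is the routine delta-function calculus already rehearsed in proposition \ref{di}.
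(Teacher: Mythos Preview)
Your proposal is correct and follows essentially the same approach as the paper: the paper's own proof is a one-line remark that the argument is identical to that of proposition \ref{di}, with the only change being that the local change-of-variables formula (Eq.(\ref{cha1}), equivalently Eq.(\ref{567})) is replaced by the global ones in Eqs.(\ref{ppp10}),(\ref{pppp10}), which accounts for the different constants. You have in fact given considerably more detail than the paper, including the Ostrowski explanation for the prefactor $1/2$ and the role of the $p=2$ bookkeeping via lemma \ref{lemma}, all of which is accurate.
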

\begin{proof}
The proof is similar to the proof of proposition \ref{di}.
The constants in the corresponding formulas are different,
because the change of variables in proposition \ref{di} is performed with Eq.(\ref{cha1}),
and the change of variables here is performed with Eqs.(\ref{ppp10}), (\ref{pppp10}).
\end{proof}
\begin{remark}
The result in Eq.(\ref{d45}) is different in the cases that $b\in {\widehat{\mathbb Z}}_{\rm even}$ and $b\in {\widehat{\mathbb Z}}_{\rm odd}$.
Analogous results for finite quantum systems, are also different for the cases of even and odd dimensional systems\cite{LE,ZA}.
Similar behaviour is also seen with the displacement operators for quantum mechanics on a circle\cite{ZV}. 
\end{remark}
\subsubsection{coherent states}
Coherent states are the states 
\begin{eqnarray}\label{451v}      
f_{\rm coh}(x)|{\mathfrak a},b)=[D({\mathfrak a},b,{\mathfrak c})f](x);\;\;\;\;\;
{\mathfrak a}\in {\mathbb Q}/{\mathbb Z};\;\;\;\;\;\;x,b\in {\widehat {\mathbb Z}}
\end{eqnarray}
$f({\mathfrak x})$ is an arbitrary state in ${\mathfrak B}[{\widehat{\mathbb Z}}, ({\mathbb Q}/{\mathbb Z})]$, 
which we normalize so that $(f,f)=1$. From Eq.(\ref{95AB})
follows the `resolution of the identity' property:
\begin{eqnarray}
\frac{1}{2}\int _{{\mathbb Q}/{\mathbb Z}}d{\mathfrak a}\;\int _{\widehat {\mathbb Z}}db\;f_{\rm coh}(x|{\mathfrak a},b)\;
[f_{\rm coh}(x'|{\mathfrak a},b)]^*
=\delta ({x}-{x}').
\end{eqnarray}

\subsection{Parity operators}\label{B7}

The parity operator is given by 
\begin{eqnarray}\label{nmk1}
&&P({\mathfrak a},b)=[D({\mathfrak a},b,0)]^\dagger\;{\mathfrak F}^2\;D({\mathfrak a},b,0)=
[D(2{\mathfrak a},2b,0)]^\dagger\;{\mathfrak F}^2
={\mathfrak F}^2\;D(2{\mathfrak a},2b,0)\nonumber\\
&&[P({\mathfrak a},b)]^2 ={\bf 1};\;\;\;\;\;[P({\mathfrak a},b)]^\dagger=P({\mathfrak a},b).
\end{eqnarray}
It acts on the functions $f(x)$ or 
$\widetilde f({\mathfrak p})$
in ${\mathfrak B}[{\widehat {\mathbb Z}}, ({\mathbb Q}/{\mathbb Z})]$,
as follows:
\begin{eqnarray}\label{azx1}
&&P({\mathfrak a},b)f(x)=\chi (-4{\mathfrak a}b-4{\mathfrak a}x)f(-x-2b)\nonumber\\
&&P({\mathfrak a},b)\widetilde f({\mathfrak p})=\chi (4{\mathfrak a}b+2{\mathfrak p}b)
\widetilde f(-{\mathfrak p}-4{\mathfrak a}).
\end{eqnarray}
These relations are analogous to Eqs.(\ref{nmk}),(\ref{azx}).
We can prove that
\begin{eqnarray}\label{3400A}     
P\left ({\mathfrak a}+\frac{1}{4},b\right )=P({\mathfrak a},b)
\end{eqnarray}
Due to this property, the integration of ${\mathfrak a}$ below is over ${\mathbb Q}/4^{-1}{\mathbb Z}$.

\begin{proposition}\label{pa1}
\mbox{}
\begin{itemize}
\item[(1)]
\begin{eqnarray}\label{gg2}
P({\mathfrak a},b)= \frac{1}{2}\int _{{\mathbb Q}/2^{-1}{\mathbb Z}}d{\mathfrak a}'\int _{\widehat {\mathbb Z}}db'\;
D({\mathfrak a}',b',0)
\chi (2{\mathfrak a}'b-2{\mathfrak a}b')
\end{eqnarray}
\item[(2)]
For any trace class operator $\theta$ acting on ${\mathfrak B}[{\widehat {\mathbb Z}}, 
({\mathbb Q}/{\mathbb Z})]$
\begin{eqnarray}\label{50AA}
\frac{1}{4}\int _{{\mathbb Q}/4^{-1}{\mathbb Z}}d{\mathfrak a}\;\int _{\widehat {\mathbb Z}}db\;
P({\mathfrak a}, b)\;\theta \;P({\mathfrak a}, b)={\bf 1}{\rm tr}\theta.
\end{eqnarray}
\item[(3)]
A trace class operator $\theta$ acting on ${\mathfrak B}[{\mathbb Z}_p, ({\mathbb Q}_p/{\mathbb Z}_p)]$, 
can be expanded in terms of displacement operators, as
\begin{eqnarray}\label{950AA}
\theta=\frac{1}{8}\int _{{\mathbb Q}/4^{-1}{\mathbb Z}}d{\mathfrak a}\;\int _{\widehat{\mathbb Z}}db\;
P({\mathfrak a}, b){\rm tr} [\theta P({\mathfrak a}, b)]
\end{eqnarray}
\item[(4)]
Let ${\cal A}({\mathfrak a})$ and ${\cal B}(b)$ be the `marginal operators'
\begin{eqnarray}\label{A950}
{\cal A}({\mathfrak a})=\frac{1}{2}\int _{\widehat {\mathbb Z}}db\;P({\mathfrak a}, b);\;\;\;\;\;
{\cal B}(b)=\frac{1}{4}\int _{{\mathbb Q}/4^{-1}{\mathbb Z}}d{\mathfrak a}\;P({\mathfrak a}, b).
\end{eqnarray}
For any $\widetilde g({\mathfrak p}),\widetilde f({\mathfrak p})$ in 
${\mathfrak B}[{\widehat {\mathbb Z}}, ({\mathbb Q}/{\mathbb Z})]$, 
\begin{itemize}
\item[(i)]
\begin{eqnarray}\label{A951}
(\widetilde g,{\cal A}({\mathfrak a})\widetilde f)=2[{\widetilde g}(-2{\mathfrak a})]^*\;
{\widetilde f}(-2{\mathfrak a});\;\;\;\;{\mathfrak a}\in {\widehat {\mathbb Z}}.
\end{eqnarray}
\item[(ii)]
\begin{eqnarray}
(\widetilde g,{\cal B}(b)\widetilde f)=4[g(-b)]^*\;f(-b);\;\;\;\;{b}\in {\widehat{\mathbb Z}}\label{12S1}.
\end{eqnarray}
\end{itemize}
\end{itemize}
\end{proposition}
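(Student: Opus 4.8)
The plan is to follow proposition~\ref{pa} (the local, $p$-adic parity proposition) essentially verbatim in structure, since $\Sigma[\widehat{\mathbb Z},({\mathbb Q}/{\mathbb Z})]$ is the restricted tensor product of the systems $\Sigma[{\mathbb Z}_p,({\mathbb Q}_p/{\mathbb Z}_p)]$, and both the displacement and parity operators factorize, $D({\mathfrak a},b,{\mathfrak c})=\prod_p D_p({\mathfrak a}_p,b_p,{\mathfrak c}_p)$ and $P({\mathfrak a},b)=\prod_p P_p({\mathfrak a}_p,b_p)$. The only systematic change is in the normalization constants: each local factor $|2|_p$, $|4|_p$ or $|8|_p$ appearing in proposition~\ref{pa} becomes, after taking the product over all primes and invoking Ostrowski's theorem (Eq.~(\ref{tgb}), so that $\prod_p|2|_p=1/|2|_\infty=\tfrac12$), the global constant $\tfrac12$, $\tfrac14$ or $\tfrac18$ displayed in Eqs.~(\ref{gg2}), (\ref{50AA}), (\ref{950AA}). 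All changes of variable are to be carried out with the global rules (\ref{ppp10}), (\ref{pppp10}) in place of the local (\ref{cha1}), (\ref{567}); the prefactors and enlarged integration domains ${\mathbb Q}/2^{-1}{\mathbb Z}$ and ${\mathbb Q}/4^{-1}{\mathbb Z}$ are dictated by lemma~\ref{lemma} and by the periodicity (\ref{3400A}), exactly as in the local case.

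For part~(1) I would act with the right-hand side of (\ref{gg2}) on an arbitrary $F({\mathfrak p})\in{\mathfrak B}[\widehat{\mathbb Z},({\mathbb Q}/{\mathbb Z})]$ using the momentum-space action in (\ref{4500}); the $b'$-integral over $\widehat{\mathbb Z}$ produces $\Delta({\mathfrak a}'-{\mathfrak p}-2{\mathfrak a})$, and the substitution ${\mathfrak a}'\mapsto 2{\mathfrak a}'$ via (\ref{pppp10}) then reproduces the phase and shifted argument of $P({\mathfrak a},b)$ recorded in (\ref{azx1}) --- this is precisely the local computation (\ref{agg21}) written with global integrals. For part~(2) I would rewrite the displacement resolution of identity (\ref{95AB}) with doubled arguments $D(2{\mathfrak a},2b,0)$, which rescales the domain to ${\mathbb Q}/4^{-1}{\mathbb Z}$ and the prefactor to $\tfrac14$, and then conjugate both sides by ${\mathfrak F}^2$; the definition (\ref{nmk1}) together with ${\mathfrak F}^4={\bf 1}$ yields (\ref{50AA}) at once.

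For part~(3) I would first evaluate ${\rm tr}[\theta P({\mathfrak a},b)]$ from the kernel and (\ref{azx1}), obtaining the global analogue of (\ref{950CC}); acting with the right-hand side of (\ref{950AA}) on $F({\mathfrak p})$, the $b$-integral supplies a $\Delta$ and the ${\mathfrak a}$-integral (again via (\ref{pppp10})) reassembles the kernel of $\theta$, exactly as in (\ref{950DD}). Part~(4) is a direct matrix-element computation: for ${\cal A}({\mathfrak a})$ the $b$-integral over $\widehat{\mathbb Z}$ forces the constraint ${\mathfrak p}=-2{\mathfrak a}$, and for ${\cal B}(b)$ the ${\mathfrak a}$-integral over ${\mathbb Q}/4^{-1}{\mathbb Z}$ reduces to an inverse-Fourier pair in ${\mathfrak p}$, paralleling (\ref{951c}) and (\ref{951a}).

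The one point demanding genuine care --- and the only place where this is not a mere transcription of the local case --- is the prime $2$. The factor $|2|_2=\tfrac12$ together with the even/odd dichotomy of lemma~\ref{lemma} is what forces the enlarged ${\mathfrak a}$-domains, and it is responsible for the factor $2$ in (\ref{A951}) that is \emph{absent} from a naive product of the local matrix elements (\ref{CC1}). Concretely, in the $b$-integral defining ${\cal A}({\mathfrak a})$ the resulting constraint $\Delta(4{\mathfrak a}+2{\mathfrak p})$ has, at the prime $2$, a strictly larger solution set than $2{\mathfrak a}+{\mathfrak p}=0$, and faithfully tracking this enlargement --- rather than simply multiplying the $p$-local answers --- is what produces the correct global normalization. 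Checking that all these $p=2$ contributions assemble consistently with the constants stated in (\ref{gg2})--(\ref{A951}) is the main obstacle; the remainder is routine factorized computation.
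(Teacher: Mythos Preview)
Your proposal is correct and follows essentially the same approach as the paper: the paper's proof simply states that one mimics the proof of proposition~\ref{pa}, replacing the local change-of-variable rules (\ref{cha1}), (\ref{567}) by the global ones (\ref{ppp10}), (\ref{pppp10}), which accounts for the different constants. Your write-up is in fact more explicit than the paper's --- you spell out the Ostrowski reasoning behind the constants and flag the $p=2$ subtlety in part~(4) --- but the underlying strategy is identical.
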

\begin{proof}
The proof is similar to the proof of proposition \ref{pa}.
The only difference is that the change of variables in proposition \ref{pa} is performed with Eq.(\ref{cha1}),
and the change of variables here is performed with Eq.(\ref{ppp10}).
Consequently, the constants in the corresponding formulas are different.
\end{proof}
\begin{remark}
The Weyl function ${\mathfrak W}_f({\mathfrak a},b)$ and the Wigner function $W_f({\mathfrak a},b)$ are defined as
\begin{eqnarray}
{\mathfrak W}_f({\mathfrak a},b)=(f,D({\mathfrak a},b,0)f);\;\;\;\;\;W_f({\mathfrak a},b)=(f,P({\mathfrak a},b)f).
\end{eqnarray}
They are intimately connected to
displacement and parity operators, and in
this sense propositions \ref{di}, \ref{T3}, are also properties of the Weyl functions
and propositions \ref{pa}, \ref{pa1}, are also properties of the Wigner functions.
\end{remark}

\subsection{The profinite Heisenberg-Weyl group 
${\bf HW}({\widehat {\mathbb Z}},{\widehat {\mathbb Z}},{\widehat {\mathbb Z}})$}

We define the profinite Heisenberg-Weyl group ${\bf HW}({\widehat {\mathbb Z}},{\widehat {\mathbb Z}},{\widehat {\mathbb Z}})$
as the inverse limit of the finite Heisenberg-Weyl groups ${\bf HW}[{\mathbb Z}(n),{\mathbb Z}(n),{\mathbb Z}(n)]$.
This is different from the ${\bf HW}[({\mathbb Q}/{\mathbb Z}), {\widehat {\mathbb Z}}, ({\mathbb Q}/{\mathbb Z})]$ discussed earlier.

We consider the ${\bf HW}[{\mathbb Z}(\ell),{\mathbb Z}(\ell),{\mathbb Z}(\ell)]$ (discussed in section\ref{HW2}) as multiplicative topological groups 
with the discrete topology.
For $k|\ell$ we define the homomorphisms
\begin{eqnarray}
\Psi_{k\ell}:\;\;{\bf HW}[{\mathbb Z}(k),{\mathbb Z}(k),{\mathbb Z}(k)]\;\leftarrow\;\;
{\bf HW}[{\mathbb Z}(\ell),{\mathbb Z}(\ell),{\mathbb Z}(\ell)];\;\;\;\;\;\;k| \ell,
\end{eqnarray} 
where
\begin{eqnarray}\label{600}
&&\Psi_{k\ell}[D^{(\ell)}(\alpha _{\ell}, \beta _{\ell}, \gamma _{\ell})]
=D^{(k)}(\alpha _{k}, \beta _{k}, \gamma _{k})\nonumber\\
&&\alpha _{k}=\Phi_{k\ell}(\alpha _{\ell});\;\;\;\;\;\; 
\beta _{k}=\Phi_{k\ell}(\beta _{\ell});\;\;\;\;\;\; 
\gamma _{k}=\Phi_{k\ell}(\gamma _{\ell}).
\end{eqnarray}
The map $\Phi_{k\ell}$ has been defined in Eq.(\ref{hom3}).
The $\Psi_{k\ell}$ are compatible and the
$\{HW[{\mathbb Z}(\ell),{\mathbb Z}(\ell),{\mathbb Z}(\ell)],\Psi _{k\ell}\}$ is an inverse system,
whose inverse limit we denote as ${\bf HW}({\widehat{\mathbb Z}}, {\widehat{\mathbb Z}}, {\widehat{\mathbb Z}})$.
The elements of this group are the sequences
\begin{eqnarray}
&&{\mathfrak D}(a,b,c)=(D^{(2)}(\alpha _2, \beta _2, \gamma _2),D^{(3)}(\alpha _{3}, \beta _{3}, \gamma _3),...)\nonumber\\
&&a=(\alpha _2, \alpha _3,...);\;\;\;\;\;b=(\beta _2, \beta _3,...);\;\;\;\;\;c=(\gamma _2, \gamma _3,...)\nonumber\\
&&a,b,c \in {\widehat {\mathbb Z}};\;\;\;\;\alpha _k, \beta _k, \gamma _k \in {\mathbb Z}(k)
\end{eqnarray}
The $a,b,c$ are elements ${\widehat {\mathbb Z}}$ written in the representation of  Eq.(\ref{Z}).

We now follow an argument analogous to the one that took us from the representation of Eq.(\ref{Z}), to the representation of Eq.(\ref{ZA}).
In Eq.(\ref{900}) below, we show explicitly how the $D^{(k)}(\alpha _{k}, \beta _{k}, \gamma _{k})$ factorizes
as $\prod D^{(p^{e_p})}({\widehat \alpha} _{p^{e_p}}, \beta _{p^{e_p}}, {\widehat \gamma} _{p^{e_p}})$
where $p\in \Pi(k)$ and $e_p\in E(k)$.
Therefore the above sequence is uniquely defined by the elements with $k$ which is equal to the power of a prime.
But a sequence that contains the elements with $k=p^n$, is the
${\mathfrak D}_p(a_p,b_p,c_p)$ defined in Eq.(\ref{4m}). Therefore
\begin{eqnarray}\label{product}
&&{\mathfrak D}(a,b,c)=({\mathfrak D}_2(a_2,b_2,c_2),..., {\mathfrak D}_p(a_p,b_p,c_p),...)\nonumber\\
&&a=(a_2,...,a_p,...);\;\;\;\;b=(b_2,...,b_p,...);\;\;\;\;c=(c_2,...,c_p,...);\;\;\;\;\;a_p,b_p,c_p\in {\mathbb Z}_p.
\end{eqnarray}
Multiplication is performed componentwise and we can show that
\begin{eqnarray}
{\mathfrak D}(a,b,c){\mathfrak D}(a',b',c')={\mathfrak D}(a+a',b+b',c+c'+ab'-a'b)
\end{eqnarray}
From Eq.(\ref{product}), it follows that 
\begin{eqnarray}
{\bf HW}({\widehat{\mathbb Z}}, {\widehat{\mathbb Z}}, {\widehat{\mathbb Z}})=\prod _{p\in \Pi}
{\bf HW}({\mathbb Z}_p, {\mathbb Z}_p, {\mathbb Z}_p).
\end{eqnarray}
This is direct product.
There is no need for a restricted direct product here, because the ${\bf HW}({\mathbb Z}_p, {\mathbb Z}_p, {\mathbb Z}_p)$
are profinite and therefore compact.
\paragraph*{Product topology:}
The ${\bf HW}({\widehat{\mathbb Z}}, {\widehat{\mathbb Z}}, {\widehat{\mathbb Z}})$ 
is a profinite group, and therefore it is Hausdorff, compact and totally disconnected topological group.
A fundamental system of neighbourhoods of its identity is the 
$\{{\bf HW}({n\widehat{\mathbb Z}}, n{\widehat{\mathbb Z}}, n{\widehat{\mathbb Z}})\}$.
We factorize them as
\begin{eqnarray}
{\bf HW}(n{\widehat{\mathbb Z}}, n{\widehat{\mathbb Z}}, n{\widehat{\mathbb Z}})=\prod _{p\in \Pi}
{\bf HW}(p^{e_p}{\mathbb Z}_p, p^{e_p}{\mathbb Z}_p, p^{e_p}{\mathbb Z}_p);\;\;\;\;n=\prod p^{e_p}.
\end{eqnarray}
All but a finite number of the exponents $e_p$ are zero, i.e., the neighbourhoods of 
${\bf HW}({\widehat{\mathbb Z}}, {\widehat{\mathbb Z}}, {\widehat{\mathbb Z}})$
are $\prod U_p$ where $U_p$ is a neighbourhood of ${\bf HW}({\mathbb Z}_p, {\mathbb Z}_p, {\mathbb Z}_p)$ 
and for all but a finite number of $p$ we have $U_p={\bf HW}({\mathbb Z}_p, {\mathbb Z}_p, {\mathbb Z}_p)$.
This shows that the topology of ${\bf HW}({\widehat{\mathbb Z}}, {\widehat{\mathbb Z}}, {\widehat{\mathbb Z}})$ is the
product (Tychonoff) topology.   
\begin{remark}
The two subgroups of ${\bf HW}({\mathbb Z}_p,{\mathbb Z}_p,{\mathbb Z}_p)$, that contain the elements $\{{\mathfrak D}(a,0,0)\}$
and $\{{\mathfrak D}(0,b,0)\}$ are both isomorphic to ${\widehat {\mathbb Z}}$ and therefore they
are {\bf not} Pontryagin dual to each other.
Consequently the ${\bf HW}({\widehat {\mathbb Z}},{\widehat {\mathbb Z}},{\widehat {\mathbb Z}})$ cannot be associated with displacements of
dual variables in quantum mechanics.
It is an example of a representation of the Heisenberg-Weyl group which is not relevant to quantum mechanics.
\end{remark}
\begin{remark}\label{last}
The maps in definition \ref{def2} between the $x$-variables and between the ${\mathfrak p}$-variables, are different.
This is because the $x$-variables are related to ${\widehat {\mathbb Z}}$ and the ${\mathfrak p}$-variables
are related to its Pontryagin dual group ${\mathbb Q}/{\mathbb Z}$.
In contrast, in Eq.(\ref{600}), the map between the $\alpha$-variables is the same as the map between the $\beta$-variables, 
because they are both associated with ${\widehat {\mathbb Z}}$.
\end{remark}

\section{Subsystems of
$\Sigma [{\widehat {\mathbb Z}},({\mathbb Q}/{\mathbb Z})]$}

\subsection{The subsystems $\Sigma [{\mathfrak Z}(n),\widetilde {\mathfrak Z}(n)]$ with $n \in {\mathbb N}_S$}

\begin{definition}\label{def24}
The subspace ${\mathfrak B}[{\mathfrak Z}(n),\widetilde {\mathfrak Z}(n)]$
of ${\mathfrak B}[\widehat {\mathbb Z}, ({\mathbb Q}/{\mathbb Z})]$
is defined as in definition \ref{d1} with $x\in {\mathfrak Z}(n)$ and 
${\mathfrak p} \in \widetilde {\mathfrak Z}(n)$.
The functions are finite linear combinations of $f^{(n)}(x)=\prod_{p\in \Pi}f_p(x_p)$
where in addition to the conditions in definition \ref{d1}, we have:
\begin{itemize}
\item[(1)]
if $p\in \Pi^{(\infty)}(n)$ the $f_p(x_p)$ is locally constant function.
\item[(2)]
if $p\in \Pi^{({\rm fin})}(n)$ the $f_p(x_p)$ is locally constant function with degree $e_p(n)$
\item[(3)] 
if $p\in \Pi^{(0)}(n)$ the $f_p(x_p)=1$.
\end{itemize}
Equivalently, the functions are linear combinations of $F^{(n)}({\mathfrak p})=\prod_{p\in \Pi}F_p({\mathfrak p}_p)$
where in addition to the conditions in definition \ref{d1}, we have:
\begin{itemize}
\item[(1)]
if $p\in \Pi^{(\infty)}(n)$ the $F_p({\mathfrak p}_p)$ is a function with compact support.
\item[(2)]
if $p\in \Pi^{({\rm fin})}(n)$ the $F_p({\mathfrak p}_p)$ is 
a function with compact support with degree $e_p(n)$
\item[(3)] 
if $p\in \Pi^{(0)}(n)$ the $F_p({\mathfrak p}_p)=\Delta _p({\mathfrak p}_p)$.
\end{itemize}
\end{definition}
The restriction of the Fourier transform ${\mathfrak F}$ in the subspace 
${\mathfrak B}[{\mathfrak Z}(n),\widetilde {\mathfrak Z}(n)]$ is
\begin{eqnarray}
{\mathfrak F}^{(n)}=\bigotimes _{p\in \Pi^{({\rm fin})}(n)}{\mathfrak F}^{(p^{e_p})}
\bigotimes _{p\in \Pi^{(\infty)}(n)}{\mathfrak F}_p
\end{eqnarray}
The restriction of the Heisenberg-Weyl group in the subspace 
${\mathfrak B}[{\mathfrak Z}(n),\widetilde {\mathfrak Z}(n)]$ is
${\bf HW}[\widetilde {\mathfrak Z}(n),{\mathfrak Z}(n),\widetilde {\mathfrak Z}(n)]$, where
\begin{eqnarray}
{\bf HW}[\widetilde{\mathfrak Z}(n),{\mathfrak Z}(n),\widetilde{\mathfrak Z}(n)]\cong \prod _{p\in \Pi^{({\rm fin})}(n)}
 {\bf HW}[{\mathbb Z}(p^{e_p}),{\mathbb Z}(p^{e_p}),{\mathbb Z}(p^{e_p})]
 \prod _{p\in \Pi^{(\infty)}(n)}^{'}{\bf HW}[{\mathbb Q}_p/{\mathbb Z}_p,{\mathbb Z}_p,{\mathbb Q}_p/{\mathbb Z}_p]
\end{eqnarray}
Its elements are
\begin{eqnarray}
D^{(n)}({\mathfrak a}, b, {\mathfrak c})=\bigotimes _{p\in \Pi^{({\rm fin})}(n)}D^{(p^{e_p})}(\alpha _{p^{e_p}}, \beta _{p^{e_p}},\gamma _{p^{e_p}})
\bigotimes _{p\in \Pi^{(\infty)}(n)}D_p({\mathfrak a}_p, b_p, {\mathfrak c}_p)
\end{eqnarray}
Below we study in detail the special case that $n$ is equal to $\ell \in {\mathbb N}$.

\subsection{The subsystems $\Sigma [{\mathbb Z}(\ell),{\mathbb Z}(\ell)]$ with $\ell \in {\mathbb N}$, and their factorization}\label{6vb}

If $n$ is equal to $\ell \in {\mathbb N}$ then the $\Pi^{(\infty)}(\ell)$ is the empty set and ${\mathfrak Z}(\ell)\cong 
\widetilde {\mathfrak Z}(\ell)\cong {\mathbb Z}(\ell)$.
In this case we are only interested in the components of $x =(x_2,x_3,...)\in {\widehat {\mathbb Z}}$
and in the components of ${\mathfrak p}=({\mathfrak p}_2, {\mathfrak p}_3,...)\in {\mathbb Q}/{\mathbb Z}$ which have index $p\in \Pi^{({\rm fin})}(\ell)$.
This is because for $p\in \Pi^{(0)}(\ell)$ the $f_p(x_p)=1$ and $F_p({\mathfrak p}_p)=\Delta _p({\mathfrak p}_p)$.
Then the functions $f(x)$ and $F({\mathfrak p})$ can be written as
\begin{eqnarray}\label{Z1}
&&f^{(\ell )}({\cal X}_\ell)=f(x);\;\;\;\;\;{\cal X}_\ell=\pi _\ell (x)=\prod _p\xi _{e_p}(x_p);\;\;\;\;p\in \Pi(\ell);\;\;\;\;e_p\in E(\ell)\nonumber\\
&&F^{(\ell )}({\cal P}_\ell)=F({\mathfrak p});\;\;\;\;\;{\mathfrak p}=\widetilde \pi _\ell({\cal P}_\ell)=p^{-\ell}{\cal P}_\ell
;\;\;\;\;\;{\cal X}_\ell, {\cal P}_\ell \in {\mathbb Z}(\ell)
\end{eqnarray}
where the $\pi _\ell (x)$ and $\widetilde \pi _\ell({\cal P}_\ell)$ have been given in eqs.(\ref{BB1}),(\ref{BB2}), correspondingly.

Using the definition \ref{d1} in the present context, we regard the functions $f(x)$ and $F({\mathfrak p})$ as finite linear combinations of the products
\begin{eqnarray}\label{Z2}
&&\prod f^{(p^{e_p})}({\cal X}_{p^{e_p}});\;\;\;\;\;
\prod F^{(p^{e_p})}({\widehat{\cal P}}_{p^{e_p}})\nonumber\\
&&{\cal X}_{p^{e_p}}, {\widehat{\cal P}}_{p^{e_p}}  \in {\mathbb Z}(p^{e_p});\;\;\;\;
p\in \Pi(\ell);\;\;\;\;\;e_p\in E(\ell)
\end{eqnarray}
The link between Eqs.(\ref{Z1}), (\ref{Z2}), is described with the relations:
\begin{eqnarray}\label{107}
&&f^{(\ell )}({\cal X}_\ell)=\prod f^{(p^{e_p})}({\cal X}_{p^{e_p}});\;\;\;\;\;
F^{(\ell )}({\cal P}_\ell)=\prod F^{(p^{e_p})}({\widehat {\cal P}}_{p^{e_p}})\nonumber\\
&&{\cal X}_{p^{e_p}}={\cal X}_\ell\;({\rm mod}\;p^{e_p});\;\;\;\;{\cal X}_\ell=\sum _p{\cal X}_{p^{e_p}}w_p;\;\;\;\;\;{\cal X}_\ell, {\cal P}_\ell \in {\mathbb Z}(\ell)\nonumber\\
&&{\widehat {\cal P}}_{p^{e_p}}={\cal P}_\ell t_p\;({\rm mod}\;p^{e_p});\;\;\;\;\frac{{\widehat {\cal P}}_\ell}{\ell}=\sum _p \frac{{\cal P}_{p^{e_p}}}{p^{e_p}}
\end{eqnarray}
The relationship between ${\cal X}_\ell$ and the ${\cal X}_{p^{e_p}}$ is analogous to Eq.(\ref{m1}), and 
the relationship between ${\cal P}_\ell$ and the ${\widehat {\cal P}}_{p^{e_p}}$ is analogous to Eq.(\ref{m2}).
The `hat' in the notation of ${\widehat {\cal P}}_{p^{e_p}}$ is consistent with the notation in Eq.(\ref{m2}), and emphasizes the fact that
the map for positions is different from the map for momenta. 
The variables $w_p,t_p$ are defined in Eq.(\ref{variables}).

With Eq.(\ref{107}), the $\ell$-dimensional space ${\mathfrak B}[{\mathbb Z}(\ell),{\mathbb Z}(\ell)]$ 
that describes the system $\Sigma[{\mathbb Z}(\ell),{\mathbb Z}(\ell)]$ with variables
in ${\mathbb Z}(\ell)$,
is expressed as the tensor product
\begin{eqnarray}\label{fa}
{\mathfrak B}[{\mathbb Z}(\ell),{\mathbb Z}(\ell)]=\bigotimes _p {\mathfrak B}[{\mathbb Z}(p^{e_p}),{\mathbb Z}(p^{e_p})];\;\;\;\;p\in \Pi(\ell);\;\;\;\;\;e_p\in E(\ell).
\end{eqnarray}
A direct proof of this, which 
is an extension of Good's formalism on fast Fourier transform and which is based on the Chinese remainder theorem (see remark \ref{rem12}), 
has been given in refs\cite{facto1,facto2}. 
In the context of the present paper we arrive at these ideas through
definition \ref{def24} for the Schwartz-Bruhat space.

The restriction of the Fourier transform ${\mathfrak F}$ in the subspace 
${\mathfrak B}[{\mathbb Z}(\ell),{\mathbb Z}(\ell)]$
is a finite Fourier transform, which is factorized in terms of finite Fourier transforms ${\mathfrak F}^{(p^{e_p})}$
in ${\mathfrak B}[{\mathbb Z}(p^{e_p}),{\mathbb Z}(p^{e_p})]$, as follows:
\begin{eqnarray}
&&{\mathfrak F}^{(\ell)}=\bigotimes _{p}{\mathfrak F}^{(p^{e_p})};\;\;\;\;p\in \Pi(\ell);\;\;\;\;\;e_p\in E(\ell)\nonumber\\
&&F({\cal P} _\ell)=
\left ({\mathfrak F}^{(\ell)}f\right )({\cal P} _{\ell})=\frac{1}{\ell} 
\sum _{{\cal X} _{\ell}\in {\mathbb Z}(\ell)} f({\cal X} _{\ell})
\omega _{\ell} (-{\cal X} _{\ell} {\cal P} _{\ell})
;\;\;\;\;{\cal X} _{\ell}, {\cal P} _{\ell} \in {\mathbb Z}(\ell).
\end{eqnarray}
This factorization is basically Good's factorization of the Fourier transform, given in Eq.(\ref{82}).

\subsubsection{The finite Heisenberg-Weyl group ${\bf HW}[{\mathbb Z}(\ell),{\mathbb Z}(\ell),{\mathbb Z}(\ell)]$:} \label{HW2}

We denote as $D^{(\ell)}(\alpha _{\ell}, \beta _{\ell}, \gamma _{\ell})$ the
restriction of the displacement operators $D({\mathfrak a},b,{\mathfrak c})$
in the $\ell$-dimensional subspace ${\mathfrak B}[{\mathbb Z}(\ell),{\mathbb Z}(\ell)]$. 
The relationship between the $\beta _{\ell}$ and $b$
is analogous to the relationship between ${\cal X}_\ell$ and $x$ in Eqs.(\ref{Z1}).
Also the relationship between $\gamma _{\ell}$ and ${\mathfrak c}$
is analogous to the relationship between ${\cal P}_\ell$ and ${\mathfrak p}$ in Eqs.(\ref{Z1}). 
For the relationship between $\alpha _{\ell}$ and ${\mathfrak a}$
we recall Eq.(\ref{fd}),(\ref{4xc}),  where there is difference between $p\ne 2$ and $p=2$.
The implication of this here, is that for odd and even $\ell$, we have
${\mathfrak a}=\ell^{-1}\alpha _{\ell}$ and ${\mathfrak a}=(2\ell)^{-1}\alpha _{\ell}$, correspondingly
(in the sense of Eq.(\ref{468})).
Then
\begin{eqnarray}\label{444}
&&{\rm odd}\;\ell \;\;\rightarrow\;\;D^{(\ell)}(\alpha _{\ell}, \beta _{\ell}, \gamma _{\ell})
=D\left (\ell^{-1}\alpha _{\ell}, \beta _{\ell}, \ell ^{-1}\gamma _{\ell}\right )\nonumber\\
&&{\rm even}\;\ell \;\;\rightarrow\;\;D^{(\ell)}(\alpha _{\ell}, \beta _{\ell}, \gamma _{\ell})
=D\left ((2\ell) ^{-1}\alpha _{\ell}, \beta _{\ell}, \ell^{-1}\gamma _{\ell}\right )
\end{eqnarray}
Now Eqs(\ref{4500}) reduce to
\begin{eqnarray}\label{708}      
&&[D^{(\ell)}(\alpha _{\ell}, \beta _{\ell}, \gamma _{\ell})f]({\cal X} _{\ell})=\omega _{\ell}\left (\gamma _{\ell}-\alpha _{\ell}\beta _{\ell}+
2\alpha _{\ell}{\cal X} _{\ell}\right )f({\cal X} _{\ell}-\beta _{\ell})\nonumber\\
&&[D^{(\ell)}(\alpha _{\ell}, \beta _{\ell}, \gamma _{\ell})\widetilde f]({\cal P} _{\ell})=
\omega _{\ell}\left (\gamma _{\ell}+\alpha _{\ell}\beta _{\ell}-\beta _{\ell}{\cal P} _{\ell}\right )
\widetilde f({\cal P} _{\ell}-2\alpha _{\ell})\nonumber\\
&&\alpha _{\ell}, \beta _{\ell}, \gamma _{\ell}, {\cal X} _{\ell}, {\cal P} _{\ell} \in {\mathbb Z}(\ell).
\end{eqnarray}
We call ${\bf HW}[{\mathbb Z}(\ell),{\mathbb Z}(\ell),{\mathbb Z}(\ell)]$ the group of the operators $D^{(\ell)}(\alpha _{\ell}, \beta _{\ell}, \gamma _{\ell})$.
We next use the notation
\begin{eqnarray}  
&&X=D^{(\ell)}(0,1,0)=D(0,1,0)\nonumber\\   
&&{\rm odd}\;\ell \;\;\rightarrow\;\;Z=D^{(\ell)}(2^{-1},0,0)=D\left ((2\ell) ^{-1},0,0\right )\nonumber\\
&&{\rm even}\;\ell \;\;\rightarrow\;\;Z=D^{(\ell)}(1,0,0)=D\left (\ell ^{-1},0,0\right )
\end{eqnarray}
For odd $\ell$ the $2^{-1}$ exists in ${\mathbb Z}(\ell)$.
Now Eqs(\ref{708}) give
\begin{eqnarray} 
&&[Zf^{(\ell )}]({\cal X} _{\ell })=\omega _{\ell }\left ({\cal X} _{\ell }\right )f^{(\ell )}({\cal X} _{\ell });\;\;\;\;\;
[Xf^{(\ell )}]({\cal X} _{\ell })=f^{(\ell )}({\cal X} _{\ell }-1)
\nonumber\\
&&[Z{\widetilde f}^{(\ell )}]({\cal P} _{\ell })={\widetilde f}^{(\ell )}({\cal P} _{\ell }-1);\;\;\;\;\;
[X{\widetilde f}^{(\ell )}]({\cal X} _{\ell })=\omega _{\ell }\left (-{\cal P} _{\ell }\right ){\widetilde f}^{(\ell )}({\cal X} _{\ell })
\nonumber\\
&&Z^{\alpha _{\ell}}X^{\beta _{\ell}}=X^{\beta _{\ell}}Z^{\alpha _{\ell}}\omega _{\ell}\left (\alpha _{\ell}\beta _{\ell}\right );\;\;\;\;\;\;
Z^{\ell }=X^{\ell }={\bf 1}.
\end{eqnarray}
These relations are valid for all values of $\ell$ and they generalize Eqs(\ref{finite}).
They are used in the formalism for finite quantum systems with variables in the ring ${\mathbb Z}(\ell)$\cite{F1,F2,F3,F4,F5}, and they show clearly that 
the formalism of this paper is a generalization of it.

With regard to the factorization in Eq.(\ref{fa}), the diplacement operators factorize as
\begin{eqnarray}\label{900}  
D^{(\ell)}(\alpha _{\ell}, \beta _{\ell}, \gamma _{\ell})=\prod D^{(p^{e_p})}({\widehat \alpha} _{p^{e_p}}, \beta _{p^{e_p}}, {\widehat \gamma} _{p^{e_p}})
\end{eqnarray}
The relationship between the $\beta _{\ell}$ and $\beta _{p^{e_p}}$
is analogous to the map in Eqs.(\ref{m1}).
Also the relationship between $\alpha _{\ell},\gamma _{\ell}$ and ${\widehat \alpha} _{p^{e_p}},{\widehat \gamma }_{p^{e_p}}$ correspondingly,
is analogous to the map in Eqs.(\ref{m2}). 

\subsection{Embeddings and their compatibility}\label{EE2}

Using the embedding ${\bf E}_{k\ell}$ of definition \ref{def2}, we define an embedding of the space
${\mathfrak B}[{\mathfrak Z}(k),\widetilde {\mathfrak Z}(k)]$
into the space ${\mathfrak B}[{\mathfrak Z}(\ell),\widetilde {\mathfrak Z}(\ell)]$.
For $k,\ell \in {\mathbb N}_S$ with $k|\ell$, we consider the linear maps
\begin{eqnarray}\label{abn1}
&&\overline {\bf E}_{k\ell}:\;\;f(x)\;\;\rightarrow \;\;f(x')\nonumber\\
&&\overline {\bf E}_{k\ell}:\;\;F(\mathfrak p)\;\;\rightarrow \;\;F(\mathfrak p ').
\end{eqnarray}
where the relationship between $x$ and $x'$ and also between
$\mathfrak p$ and $\mathfrak p '$ is given in definition \ref{def2}.
These embeddings are compatible in the sense that
\begin{eqnarray}
&&k|\ell|m\;\;\rightarrow\;\;\overline {\bf E}_{\ell m}\circ \overline {\bf E}_{k\ell}=\overline 
{\bf E}_{km}\nonumber\\
&&\overline {\bf E}_{k\ell}\circ {\mathfrak F}^{(k)}={\mathfrak F}^{(\ell)}\circ 
\overline {\bf E}_{k\ell}.
\end{eqnarray}

\paragraph*{Embeddings of the Heisenberg-Weyl groups:}
For $k|\ell$, we introduce an embedding of 
${\bf HW}[\widetilde {\mathfrak Z}(k),{\mathfrak Z}(k),\widetilde {\mathfrak Z}(k)]$
into ${\bf HW}[\widetilde {\mathfrak Z}(\ell),{\mathfrak Z}(\ell),\widetilde {\mathfrak Z}(\ell)]$, as follows:
\begin{eqnarray}
&&D^{(k)}(\alpha _{k}, \beta _{k}, \gamma _{k});\;\rightarrow\;\;  
D^{(\ell)}(\alpha _{\ell}, \beta _{\ell}, \gamma _{\ell})
\end{eqnarray}
The relationship between $\alpha _{k}, \beta _{k}, \gamma _{k}$ and  
$\alpha _{\ell}, \beta _{\ell}, \gamma _{\ell}$ is analogous to the one described in definition \ref{def2}.
We can prove the compatibility condition
\begin{eqnarray}
&&k|\ell\;\;\rightarrow\;\;\overline {\bf E}_{k \ell }\circ D^{(k)}(\alpha _{k}, \beta _{k}, \gamma _{k})=D^{(\ell)}(\alpha _{\ell}, \beta _{\ell}, \gamma _{\ell})
\circ \overline {\bf E}_{k\ell}
\end{eqnarray}

The quantum formalism for $\Sigma [{\widehat {\mathbb Z}},({\mathbb Q}/{\mathbb Z})]$
generalizes the formalism for the finite quantum systems $\Sigma [{\mathbb Z}(k),{\mathbb Z}(k)]$, 
in two different ways.
Firstly, for a given prime we consider systems with dimension $p^n$ and we can increase 
arbitrarily the exponent $n$ (the $p$-adic formalism does that).
Secondly, in the product  of $p^n$-dimensional systems, we can increase arbitrarily 
the number of primes, and the 
Schwartz-Bruhat space (conditions 1B and 2B in definition \ref{d1}) ensures that the integrals converge.

\paragraph*{Embedding of a set of finite quantum systems into $\Sigma [{\widehat {\mathbb Z}},({\mathbb Q}/{\mathbb Z})]$:} 
The formalism of a set of finite quantum systems, can also be embedded into 
the formalism for $\Sigma [{\widehat {\mathbb Z}},({\mathbb Q}/{\mathbb Z})]$.
Then the building blocks are `mathematical subsystems', each of which has power of prime dimension.

As an example, we consider a bi-partite system $AB$ comprised of two `physical subsystems' $A$ and $B$, with 
variables in ${\mathbb Z}(k_A)$ and ${\mathbb Z}(k_B)$, where the $k_A,k_B$ are factorized in terms of prime numbers as
$k_A=p_1^2p_2$ and $k_B=p_1p_3^2$.
The dimension of the system $AB$ is $k_Ak_B=p_1^3p_2p_3^2$.
The general wavefunction of the system $AB$, can be written as
\begin{eqnarray}
&&F(x)=\sum _k F_{1k}(x_1)F_{2k}(x_2)F_{3k}(x_3)\nonumber\\
&&x\in {\mathbb Z}(k_Ak_B);\;\;\;\;x_1\in {\mathbb Z}(p_1^3);\;\;\;\;x_2\in {\mathbb Z}(p_2)
;\;\;\;\;x_3\in {\mathbb Z}(p_3^2).
\end{eqnarray} 
Here the system $AB$ is comprised of three `mathematical subsystems', with dimensions $p_1^3$, $p_2$, $p_3^2$.

We next consider the special case of a factorizable 
wavefunction $f(x_A)g(x_B)$, where the wavefunction $f(x_A)$ of the system $A$ is
\begin{eqnarray}
f(x_A)=f_{11}(x_1')f_{21}(x_2)+f_{12}(x_1')f_{22}(x_2);\;\;\;\;x_A\in {\mathbb Z}(k_A);\;\;\;\;x_1'\in {\mathbb Z}(p_1^2);\;\;\;\;x_2\in {\mathbb Z}(p_2),
\end{eqnarray}
and the wavefunction $g(x_B)$ of the system $B$ is
\begin{eqnarray}
g(x_B)=g_{11}(x_1'')g_{31}(x_3);\;\;\;\;x_B\in {\mathbb Z}(k_B);\;\;\;\;x''_1\in {\mathbb Z}(p_1);\;\;\;\;
x_3\in {\mathbb Z}(p_3^2).
\end{eqnarray}
In this special case the wavefunction of the system is
\begin{eqnarray}
&&F(x)=f(x_A)g(x_B)=\sum _{k=1}^2 F_{1k}(x_1)F_{2k}(x_2)F_{3k}(x_3)\nonumber\\
&&F_{11}(x_1)=f_{11}(x_1')g_{11}(x_1'');\;\;\;\;F_{12}(x_1)=f_{12}(x_1')g_{11}(x_1'');\;\;\;\;x_1\in {\mathbb Z}(p_1^3)\nonumber\\
&&F_{21}(x_2)=f_{21}(x_2);\;\;\;\;F_{22}(x_2)=f_{22}(x_2)\nonumber\\
&&F_{31}(x_3)=g_{31}(x_3);\;\;\;\;F_{32}(x_3)=g_{31}(x_3).
\end{eqnarray}
For $F_{11}$ (and also for $F_{12}$) we need to use a bijective map from ${\mathbb Z}(p_1^3)$ to ${\mathbb Z}(p_1^2)\times {\mathbb Z}(p_1)$ 
in order to write $f_{11}(x_1')g_{11}(x_1'')$ as $F_{11}(x_1)$ with $x_1\in {\mathbb Z}(p_1^3)$.
This example shows explicitly, how a set of finite quantum systems can be viewed as being comprised of mathematical subsystems 
each of which has power of prime dimension.

\subsection{The partially ordered sets $\Sigma $ and $\Sigma _S$ as $T_0$ topological spaces}

The partial order between the $({\mathfrak Z}(k),\widetilde {\mathfrak Z}(k))$
in Eq.(\ref{478}),
induces the partial order `subsystem' between the $\Sigma [{\mathfrak Z}(k),\widetilde {\mathfrak Z}(k)]$.
Indeed, for $k|\ell$ the $\widetilde {\mathfrak Z}(k)$ is a subgroup of $\widetilde {\mathfrak Z}(\ell)$ 
and therefore the
$\Sigma [{\mathfrak Z}(k),\widetilde {\mathfrak Z}(k)]$ is a subsystem of
$\Sigma [{\mathfrak Z}(\ell),\widetilde {\mathfrak Z}(\ell)]$:
\begin{eqnarray}
k|\ell \;\;\rightarrow\;\;\Sigma[{\mathfrak Z}(k),\widetilde {\mathfrak Z}(k)]
\prec \Sigma[{\mathfrak Z}(\ell),\widetilde {\mathfrak Z}(\ell)];\;\;\;\;k,\ell \in {\mathbb N}_S.
\end{eqnarray}
The sets
\begin{eqnarray}\label{bbb}
&&\Sigma (n)=\{\Sigma[{\mathbb Z}(k),{\mathbb Z}(k)]\;|\;k|n\}\nonumber\\
&&\Sigma=\{\Sigma[{\mathbb Z}(\ell),{\mathbb Z}(\ell)]\;|\;\ell \in {\mathbb N}\}\nonumber\\
&&\Sigma _S=\{\Sigma[{\mathfrak Z}(n),\widetilde {\mathfrak Z}(n)]\;|\;n\in {\mathbb N}_S\}
\end{eqnarray}  
with subsystem as order, are directed partially ordered sets.

The set $\Sigma (n)$ is order isomorphic to ${\mathbb N}(n)$.
The factorization discussed in section \ref{6vb}, is analogous to the factorization of a positive integer  in terms of prime numbers. 
It expresses $\Sigma (n)$ as tensor product of systems with power of prime dimension, which can be viewed as `building blocks' of all finite quantum systems.
Our statements for ${\mathbb N}(n)$ in section \ref{refe}, are also valid for $\Sigma (n)$.
For example, if we factorize $n$ as in Eq.(\ref{factorize10}), 
the partially ordered set $\Sigma (n)$, has width equal to $\ell$. 
Physically, this means that it can be regarded as an $\ell$-partite system.
Also $\Sigma (n)$ can be partitioned in a way analogous to Eq.(\ref{901}).

The set $\Sigma $ is order isomorphic to ${\mathbb N}$ (with divisibility as an order). It contains all $\Sigma (n)$ where $n\in {\mathbb N}$.
But it is not a directed-complete partially ordered set, and it has no maximal elements.

The set $\Sigma _S$ is order isomorphic to ${\mathbb N}_S$, and it
is a directed-complete partially ordered set.
The $\Sigma(\widehat {\mathbb Z},{\mathbb Q}/{\mathbb Z})$ is maximum element in $\Sigma _S$.
It is the smallest quantum system that contains all the $\Sigma({\mathfrak Z}(n),\widetilde {\mathfrak Z}(n))$
(where $n\in {\mathbb N}_S$) as subsystems.

\paragraph*{The topological spaces $(\Sigma , T_{\Sigma})$ and $(\Sigma _S, T_{\Sigma _S})$:}
There is a bijective map between ${\mathbb N}_S$ and $\Sigma _S$:
\begin{eqnarray}
&&n\;\;\leftrightarrow \;\;\Sigma({\mathfrak Z}(n),\widetilde {\mathfrak Z}(n))\nonumber\\
&&\Omega\;\;\leftrightarrow \;\;\Sigma [{\widehat {\mathbb Z}}, ({\mathbb Q}/{\mathbb Z})]
\end{eqnarray}
Therefore we can make $\Sigma _S$ a topological space and $({\mathbb N}_S, T_{{\mathbb N}_S})\sim (\Sigma _S, T_{\Sigma _S})$ (as described in section \ref{topology}).
The `points' in this topology are the systems $\Sigma({\mathfrak Z}(n),\widetilde {\mathfrak Z}(n))$,
and an open (resp., closed set) contains some systems and their subsystems (resp. supersystems).
In a similar way we can make $\Sigma $ a topological space and $({\mathbb N}, T_{{\mathbb N}})\sim (\Sigma , T_{\Sigma })$.
From proposition \ref{ttt}, it follows that both $(\Sigma _S, T_{\Sigma _S})$ and also $(\Sigma , T_{\Sigma })$ are $T_0$-topological spaces
but they are not $T_1$-topological spaces.
The physical meaning of this is discussed below.
Also the $(\Sigma _S, T_{\Sigma _S})$ is compact and the $(\Sigma , T_{\Sigma })$ is locally compact.

In summary, from a partial order theory point of view, $\Sigma _S$ is a directed-complete partial order, while
$\Sigma $ is not a directed-complete partial order.
From a topological point of view, $\Sigma _S$ is a compact $T_0$-space, while
$\Sigma $ is a locally compact $T_0$-space (which is not compact).  
This is shown in table I.

\subsection{Physical importance of the $T_0$-topology}

The fact that all the  $(\Sigma _S, T_{\Sigma _S})$, $(\Sigma , T_{\Sigma })$,  $(\Sigma ^{(p)}, T_{{\Sigma }^{(p)}})$,
$(\Sigma _S^{(p)}, T_{{\Sigma }_S^{(p)})}$) are $T_0$-topological spaces and they are not $T_1$-topological spaces,
reflects very fundamental aspects of the logical relationship between a quantum system and its subsystems and 
supersystems. 
We present our arguments for the system $(\Sigma _S, T_{\Sigma _S})$.

The fact that $(\Sigma _S, T_{\Sigma _S})$ is a  $T_0$-space,
means that for two distint elements (points) $\Sigma({\mathfrak Z}(m),\widetilde {\mathfrak Z}(m))$ and 
$\Sigma({\mathfrak Z}(n),\widetilde {\mathfrak Z}(n))$, 
there is an open set containing one 
of them but not the other. 
If one of the systems is a subsystem of the other, 
which is the case if for example $m|n$, then the open set
\begin{eqnarray}
U(m)=\{\Sigma({\mathfrak Z}(k),\widetilde {\mathfrak Z}(k))\;|\;k|m\},
\end{eqnarray}
is such that
\begin{eqnarray}
\Sigma({\mathfrak Z}(m),\widetilde {\mathfrak Z}(m))\in U(m);\;\;\;\;
\Sigma({\mathfrak Z}(n),\widetilde {\mathfrak Z}(n))\notin U(m).
\end{eqnarray}
If none of the $\Sigma({\mathfrak Z}(m),\widetilde {\mathfrak Z}(m))$,
$\Sigma({\mathfrak Z}(n),\widetilde {\mathfrak Z}(n))$ is a subsystem of the other,  
this means that $m$ is not a divisor or a multiple of $n$.
Then $U(m)$ contains $\Sigma({\mathfrak Z}(m),\widetilde {\mathfrak Z}(m))$ but it does not contain
$\Sigma({\mathfrak Z}(n),\widetilde {\mathfrak Z}(n))$.

On the other hand $(\Sigma _S, T_{\Sigma _S})$ is not a $T_1$-space. In a $T_1$-space, for any pair of
$\Sigma({\mathfrak Z}(m),\widetilde {\mathfrak Z}(m))$,
$\Sigma({\mathfrak Z}(n),\widetilde {\mathfrak Z}(n))$
there exist two open sets $U_1$ and $U_2$ such that
\begin{eqnarray}
&&\Sigma({\mathfrak Z}(m),\widetilde {\mathfrak Z}(m))\in U_1;\;\;\;\;\Sigma({\mathfrak Z}(m),\widetilde {\mathfrak Z}(m))\notin U_2\nonumber\\
&&\Sigma({\mathfrak Z}(n),\widetilde {\mathfrak Z}(n))\notin U_1;\;\;\;\;\Sigma({\mathfrak Z}(n),\widetilde {\mathfrak Z}(n))\in U_2
\end{eqnarray}
But this is impossible if $\Sigma({\mathfrak Z}(m),\widetilde {\mathfrak Z}(m))$ is a subsystem of
$\Sigma({\mathfrak Z}(n),\widetilde {\mathfrak Z}(n))$, because any open set that contains $\Sigma({\mathfrak Z}(n),\widetilde {\mathfrak Z}(n))$
will also contain its subsystems.
Consequently, $(\Sigma _S, T_{\Sigma _S})$ is not a $T_1$-space.

It is seen that the properties of the $T_0$ topology reflect very fundamental logical
relationships between a system and its subsystems.
The logical concept `subsystem' is intimately connected to the $T_0$ topology and it is incompatible with 
the $T_1$ topology.

\subsection{Continuity of physical quantities in the set $\Sigma $ as a function of $n$}\label{3xc}

The fact that the set $\Sigma $ in Eq.(\ref{bbb}) is a topological space, can be used to define 
continuity of physical quantities in the systems $\Sigma [{\mathbb Z}(n),{\mathbb Z}(n)]$, as a function of $n\in {\mathbb N}$.

In each system $\Sigma [{\mathbb Z}(n),{\mathbb Z}(n)]$ we define a quantity ${\cal L} _n$ as a map from the space
${\mathfrak B}[{\mathbb Z}(n),{\mathbb Z}(n)]$ to the set ${\mathbb R}$ of real numbers (e.g., entropic quantities);
or to the set of $n\times n$ matrices (e.g., Wigner functions); etc.
We therefore have a set ${\mathbb L}=\{{\cal L}_2,{\cal L}_3,...\}$ of such quantities which are defined in the various `points' in the set $\Sigma$.
We calculate the quantity ${\cal L}_n [f^{(n)}({\cal X}_n)]$ for a function $f^{(n)}({\cal X}_n) \in {\mathfrak B}[{\mathbb Z}(n),{\mathbb Z}(n)]$
of the system $\Sigma [{\mathbb Z}(n),{\mathbb Z}(n)]$. 
We then embed this state within the supersystem $\Sigma [{\mathbb Z}(r),{\mathbb Z}(r)]$, where $n|r$,
(i.e., we consider the state ${\overline {\bf E}}_{nr}[f^{(n)}({\cal X}_n)]$) 
and we calculate ${\cal L}_r\{{\overline {\bf E}}_{nr}[f^{(n)}({\cal X}_n)]\}$.
In ref\cite{vourdas}, we have called ubiquitous quantities the ones for which
\begin{eqnarray}\label{rf}
{\cal L}_r\{{\overline {\bf E}}_{nr}[f^{(n)}({\cal X}_n)]\}={\cal L}_n [f^{(n)}({\cal X}_n)]
\end{eqnarray}
for all $n|r$, and for all states $f^{(n)}({\cal X}_n) \in {\mathfrak B}[{\mathbb Z}(n),{\mathbb Z}(n)]$.
This compatibility condition, ensures that the various ${\cal L}_n$ represent the same quantity.
We have shown that various entropic quantities, and also the Wigner and Weyl functions, are ubiquitous quantities.

We then make the set ${\mathbb L}$ a topological space as described in section \ref{topology}.
Then the function from ${\mathbb N}$ to ${\mathbb L}$ which maps $n$ into ${\cal L}_n$, is a continuous function.
This makes formal the intuitive concept of continuity between a quantity defined in a large system and the same quantity in a small system.

\subsection{Physical importance of the profinite topology and of the Schwartz-Bruhat space}

The important concept for going from the quantum formalism for $\Sigma[{{\mathbb Z}_p},({\mathbb Q}_p/{\mathbb Z}_p)]$
into the quantum formalism for $\Sigma [{\widehat {\mathbb Z}},({\mathbb Q}/{\mathbb Z})]$ is the restricted tensor product.
The Schwartz-Bruhat space ${\mathfrak B}[{\widehat {\mathbb Z}},({\mathbb Q}/{\mathbb Z})]$
consists of finite linear combinations of $\prod f_p(x_p)$ where  $f_p(x_p)=1$, for all but a finite number of $p$
(it is the restricted tensor product of the ${\mathfrak B}[{{\mathbb Z}_p},({\mathbb Q}_p/{\mathbb Z}_p)]$). 
As we explained earlier, for this we need the topology where the open sets of ${\widehat {\mathbb Z}}$
are $\prod U_p$ where $U_p$ is an open sets in ${\mathbb Z}_p$, and $U_p={\mathbb Z}_p$ for all but a finite number of $p$.

From this and our discussion in section \ref{a45}, on the role of the $p$-adic topology for  the system
$\Sigma[{{\mathbb Z}_p},({\mathbb Q}_p/{\mathbb Z}_p)]$, it follows that in $\Sigma [{\widehat {\mathbb Z}},({\mathbb Q}/{\mathbb Z})]$ 
we have a formalism that includes all 
$\Sigma[\prod _p{\mathbb Z}(p^e),\prod _p{\mathbb Z}(p^e)]$, as subsystems.
Here $e=n+k$ where $n,k$ are the degrees of local constancy and compact support, correspondingly, of the functions $f_p(x_p)$. 
The quantum formalism of $\Sigma[\prod _p{\mathbb Z}(p^e),\prod _p{\mathbb Z}(p^e)]$ 
is precisely the quantum formalism for $\Sigma[{\mathbb Z}(n),{\mathbb Z}(n)]$ with $n=\prod p^e$
(the proof is based on the Chinese remainder theorem \cite{facto1,facto2}).                   
Therefore in $\Sigma[{\widehat {\mathbb Z}},({\mathbb Q}/{\mathbb Z})]$ we have a rich quantum formalism that includes all 
$\Sigma[{\mathbb Z}(n),{\mathbb Z}(n)]$ as subsystems.
We have discussed in detail the partial order between these systems.

\section{Discussion}

We have considered the quantum system $\Sigma[{\mathbb Z}_p, ({\mathbb Q}_p/{\mathbb Z}_p)]$.
The profinite group ${\mathbb Z}_p$ of positions is the inverse limit of ${\mathbb Z}(p^n)$, and
the group ${\mathbb Q}_p/{\mathbb Z}_p$ of momenta is the direct limit of ${\mathbb Z}(p^n)$.
The homomorphisms in the inverse and direct limits, have been used 
in the embeddings $E_{k\ell}$ (definition \ref{def1}), which in turn have been used to define the  embeddings ${\overline E}_{k\ell}$ 
of $\Sigma[{\mathbb Z}(p^k), {\mathbb Z}(p^k)]$ into 
$\Sigma[{\mathbb Z}(p^\ell), {\mathbb Z}(p^\ell)]$ (section \ref{EE1}).
The set $\Sigma ^{(p)}_S$ of all these quantum systems 
ordered with the binary relation `subsystem' is a complete chain which has the
$\Sigma[{\mathbb Z}_p, ({\mathbb Q}_p/{\mathbb Z}_p)]$ as supremum.
This is the smallest quantum system that contains all the $\Sigma[{\mathbb Z}(p^k), {\mathbb Z}(p^k)]$ as subsystems.
The set $\Sigma ^{(p)}$ is also a chain, but it is not complete.

We have also considered the quantum system $\Sigma[{\widehat {\mathbb Z}}, ({\mathbb Q}/{\mathbb Z})]$.
The profinite group ${\widehat {\mathbb Z}}$ of positions is the inverse limit of ${\mathbb Z}(n)$, and
the group ${\mathbb Q}/{\mathbb Z}$ of momenta is the direct limit of ${\mathbb Z}(n)$.
Here the homomorphisms in the inverse and direct limits, have been used 
in the embeddings ${\bf E}_{k\ell}$ (definition \ref{def2}), which in turn have been used for the  embedding ${\overline {\bf E}}_{k\ell}$ 
of $\Sigma[{\mathbb Z}(k), {\mathbb Z}(k)]$ into 
$\Sigma[{\mathbb Z}(\ell), {\mathbb Z}(\ell)]$ (section \ref{EE2}).
The set  $\Sigma _S$ of all these systems, ordered with the relation subsystem, is a directed-complete partial order.
The $\Sigma[{\widehat {\mathbb Z}}, ({\mathbb Q}/{\mathbb Z})]$ is maximum element in this set, i.e.,
it is the smallest system that contains all the systems in $\Sigma _S$ as subsystems.
The set $\Sigma $ is directed partially ordered set, but it is not directed-complete partial order.

There is a strong link between partial order theory and topology 
and the sets $\Sigma ^{(p)}_S$ and $\Sigma _S$ in Eq.(\ref{bbb}), have been studied as $T_0$ topological spaces.
The axioms of the $T_0$-topology, express fundamental logical relations between the quantum systems.
The topology of the set $\Sigma ^{(p)}_S$ (resp. $\Sigma _S$)
can be used to define continuity of a physical quantity in the various systems in $\Sigma ^{(p)}_S$ (resp. $\Sigma _S$).
The work reveals an interesting partial order theory and $T_0$-topology structure in finite quantum systems, the full use of which remains to be explored.

In both cases of the quantum systems $\Sigma[{\mathbb Z}_p, ({\mathbb Q}_p/{\mathbb Z}_p)]$
and $\Sigma[{\widehat {\mathbb Z}}, ({\mathbb Q}/{\mathbb Z})]$, we have defined the Schwartz-Bruhat spaces and the
Heisenberg-Weyl groups. We have also discussed the properties  of displacement and parity operators.

The paper can be viewed as a study of `large finite quantum systems'.
It factorizes them as tensor products of `mathematical component systems' which are labelled with prime numbers, and each of which has dimension $p^e$.
They are fundamental building blocks of finite quantum systems (analogous to the prime numbers which are fundamental building blocks of all positive integers).
Both, the number of the component systems and also the dimension of each component system can become arbitrarily large, 
but structures like the Schwartz-Bruhat space, the restricted tensor product of spaces, the restricted direct product of locally compact groups, etc, ensure that there are no divergencies. 

The work uses profinite groups, algebraic number theory, partial order theory and $T_0$ topology in a quantum mechanical context, with emphasis on the 
physical meaning of the mathematical concepts.

\vspace{1cm}
\begin{table}[htbp]
\caption{Some sets of quantum systems and their characterization as partial orders and as topologies}
\begin{tabular}{|c|c|c|}
  \hline
  set of quantum systems& partial order&topology\\ \hline 
  $\Sigma ^{(p)}$& not complete chain&locally compact, $T_0$\\ \hline
 $\Sigma _S^{(p)}$& complete chain&compact, $T_0$\\ \hline
$\Sigma $& not directed-complete partial order&locally compact, $T_0$\\ \hline
$\Sigma _S$& directed-complete partial order&compact, $T_0$\\ \hline
  \end{tabular}
\end{table}

\end{document}